\newcommand{\mathbbm}[1]{\text{\usefont{U}{bbm}{m}{n}#1}}
\newtheorem{theorem}{Theorem}
\newtheorem{observation}{Observation}
\newtheorem{corollary}{Corollary}[theorem]
\newtheorem{lemma}{Lemma}
\newtheorem{definition}{Definition}
\newcommand{\tr}{\text{tr}}
\newcommand{\id}{\ensuremath{\mathbbm{1}}} 
\newcommand{\one}{\id}
\newcommand{\bra}[1]{\langle #1|}
\newcommand{\ket}[1]{|#1\rangle}
\newcommand{\R}{\ensuremath{\mathbbm R}}
\newcommand{\C}{\ensuremath{\mathbbm C}}
\newcommand{\N}{\ensuremath{\mathbbm N}}
\newcommand{\nlc}{T}
\newcommand{\bi}{\begin{itemize}}
\newcommand{\ei}{\end{itemize}}
\newcommand{\be}{\begin{equation}}
\newcommand{\ee}{\end{equation}}
\newcommand{\bea}{\begin{eqnarray}}
\newcommand{\eea}{\end{eqnarray}}
\begin{document}

\title{Symmetries and entanglement of stabilizer states}

\date{\today}

\author{Matthias Englbrecht, Barbara Kraus}
\affiliation{Institute for Theoretical Physics, University of Innsbruck, A–6020 Innsbruck, Austria}

\begin{abstract}
Stabilizer states constitute a set of pure states which plays a dominant role in quantum error correction, measurement--based quantum computation, and quantum communication. Central in these applications are the local symmetries of these states. We characterize all local symmetries of arbitrary stabilizer states and provide an algorithm which determines them. We demonstrate the usefulness of these results by showing that the additional local symmetries find applications in entanglement theory and quantum error correction.
\end{abstract}

\maketitle

\section{Introduction}
Entanglement has been identified as a crucial property to investigate, describe, and leverage applications in several areas of Science \cite{review, AmFa08}. It is essential for quantum computation \cite{RaBr01} as well as certain quantum communication schemes \cite{SecretSh}. Moreover, in the last decade, concepts developed in entanglement theory have been utilized in other fields of research \cite{AmFa08}. Hence, an enormous effort has been made to qualify and quantify entanglement \cite{review}. Despite extensive investigations in the context of quantum information theory, its detailed characterization and quantifications remain, however, as major challenges.

A set of states which is key in the aforementioned applications within quantum information theory is the set of stabilizer states \cite{HeMa06}. A $n$ qubit stabilizer state is defined as the unique simultaneous eigenvector of a maximal set of commuting operators in the Pauli group, which is defined as the tensor product of either a Pauli operator or the identity operator. These states, which can be highly entangled, are used in quantum error correction \cite{GoDa97}, in measurement based quantum computation \cite{RaBr01}, and in self-testing \cite{Acin} to just name a few applications. Some of the entanglement properties of stabilizer states have been investigated \cite{HeMa06,Cubitt}. Furthermore, purification protocols have been developed \cite{GlKn06}. Stabilizer states also found applications in proving a separation between universal quantum computation and computations which are classically efficiently simulable \cite{GoDa99}. Given that all these applications stem from the rich entanglement capability and from the local symmetries of these states, the further investigation of the both, the entanglement properties and the local symmetries of stabilizer states is indispensible. Arguably, a deep understanding of those characteristics will allow one to identify new applications of multipartite entanglement.

Entanglement theory is a resource theory, where the free operations are Local Operations assisted by Classical Communication (LOCC). LOCC arose as a natural and operationally motivated  choice  of  free  operations, as entanglement  is considered as a  resource  shared  by  different, possibly  spatially separated, parties. Those parties can act locally on their share of the state and can communicate any classical information to the other parties (LOCC), who then manipulate their system accordingly. LOCC extends Local Unitary (LU) transformations, where no communication is allowed and the transformations are restricted to be unitary, and hence, reversible, so that they do not alter entanglement. As LOCC cannot generate entanglement, it holds that, whenever a pure state $\ket{\Psi}$ can be transformed deterministically into some other state $\ket{\Phi}$, then the entanglement of the latter is at most as large as the one of the former. Important to note here is that this holds true for any entanglement measure. Stated differently, a entanglement measure is a functional which is non--increasing under LOCC. Hence, the study of LOCC transformations allows to order the set of entangled states. Despite the fact that LOCC transformations constitute a intricate set of operations \cite{LOCC}, it was recently shown that fully entangled pure states describing $n$ qudits (homogeneous system) can generically not be (deterministically) transformed into any other LU--inequivalent fully entangled pure state \cite{GoKr17,SaGo18}. Hence, the partial order which is obtained from the study of possible LOCC transformations is generically trivial for homogeneous systems. That is, in stark contrast to the bipartite case \cite{nielsen}, almost all states are isolated; they can neither be reached from, nor can they be transformed into another pure LU-inequivalent state. However, those sets of states which play a major role in physics, such as tensor network states \cite{MPS,SaMo19,orus, VeCi04} and the here studied stabilizer states, are always of measure zero, which implies that the results for generic states might not be applicable.
In fact, the results mentioned above are a consequence of a more general result, which states that pure states describing $n$ qudit systems do generically not possess any local symmetry (other than the identity). As mentioned above, stabilizer states do not only possess local symmetries, but are actually defined by their local symmetries. These symmetries ensure that stabilizer states can indeed be transformed into some other LU-inequivalent state via LOCC (see also below). Apart from the relevance of those local symmetries in entanglement theory, they also play an important role in universal measurement based quantum computation as recently shown in \cite{RaOk19}. Furthermore, symmetries beyond the Pauli group are useful in fault tolerant quantum computing \cite{GoDa97, CE17, NC00} (see also Sec. \ref{Local symmetries and transversal gates}). Hence, the identification of additional symmetries has already been subject to several other works \cite{WK11,DZ09,GiGl20}. Here, we go beyond these investigations by providing a complete characterization of all local symmetries of stabilizer states. As mentioned before, these additional symmetries are not only relevant to identify new error correcting codes for which transversal gates exist, but can also be used to shine new light on entanglement theory. In fact, as we will show, the additional symmetries can be used to identify new transformations which are possible via LOCC, hence, leading to a non--trivial order of entanglement. Furthermore, these additional symmetries have been used in \cite{HeMa19} to demonstrate a difference between pure state transformations via separable maps, which utilize singular matrices and those which do not. Here, a completely positive map is called separable if it possesses a decomposition such that all Kraus operators are local operators. Despite the fact that this is a very abstract result, it has far reaching consequences, as it shows that some of the previous investigations concerning LOCC have to be revised \cite{HeMa19}.

The outline of the remaining paper is the following. First, we introduce our notation and review some relevant results in the theory of stabilizer states. The aim of Sec. \ref{Symmetries of stabilizer states} is to characterize all possible invertible local symmetries of stabilizer states. As we will see, one can restrict this investigation to local unitary symmetries, as all other symmetries will be determined by them. We will first show that a stabilizer state possessing infinitely many symmetries has to correspond to a graph which possesses a leaf (see also \cite{WK11,DZ09}). This refers to a particular structure of the underlying graph, which can be easily identified by considering the two-qubit reduced states. All other states only possess finitely many local symmetries. We will then show that any additional symmetry implies the existence of a symmetry which is a local Clifford gate. In Sec. \ref{clifford sym} we will then derive necessary and sufficient conditions for the existence of local Clifford symmetries. An algorithm to determine then all local (invertible) symmetries of an arbitrary stabilizer state is presented in Sec. \ref{alg all sym}. In Sec. \ref{examples} we provide examples of states possessing non-trivial symmetries. We illustrate the usefulness of these results by utilizing them in various contexts in Sec. \ref{applications}. First, we demonstrate that the additional symmetries can be employed to identify error correcting codes, which possess transversal gates \cite{CE17}.  In the context of entanglement theory, the characterization of the local symmetries of stabilizer states presented here, will be used to provide a general construction for separable maps among pure states which are more general than what was previously considered. The consequences of this result within entanglement theory are explained in \cite{HeMa19}. As a final application we will show that states with additional symmetry are indeed more powerful in the sense that they can reach more states via LOCC (where we consider realistic LOCC protocols which utilize finitely many rounds of classical communication).  In Sec. \ref{Conclusion and Outlook} we conclude and discuss future research directions.

\section{Notation \& Preliminaries\label{not and pre}}

First let us introduce the notation and recall some results concerning stabilizer states and their symmetries.

\subsection{Notation}
In the following we denote the Pauli operators by $X,Y,Z$ and the identity operator by $\one$. Whenever we consider a state of $n$ qubits the usage of a single subscript for an operator denotes the system the operator is acting on unless stated otherwise. If an operator acting on a qubit has two indices $O_k^j$, $k\in \N$, then the superscript, $j$, denotes the qubit the operator is acting on and the subscript, $k$, labels different operators acting on qubit $j$. Furthermore, if $O$ is a local operator acting on $n$ qubits we denote by $\text{supp}(O)$ the support of $O=O_1\otimes\ldots\otimes O_n$, i.e. the subset of qubits $j$ for which $O_j\not\propto\one$.

In this paper we determine the local symmetry group of a general stabilizer state $\ket{\psi}\in(\C^2)^{\otimes n}$. We will denote this group by
\begin{equation}
G_\psi=\{G=G_1\otimes \ldots\otimes G_n\in GL(2)^{\otimes n}\ |\ G\ket{\psi}\propto\ket{\psi}\}.
\label{eq:gl}
\end{equation}
A subgroup of this local symmetry group consists of the local unitary symmetries of $\ket{\psi}$ which we will refer to as
\begin{equation}
U_\psi=\{U=U_1\otimes \ldots\otimes U_n\in U(2)^{\otimes n}\ |\ U\ket{\psi}\propto\ket{\psi}\}.
\label{eq:lu}
\end{equation}
We will see that in order to determine $G_\psi$ for a stabilizer state it is sufficient to restrict to graph states \cite{HeMa06}, a special type of stabilizer state defined by a mathematical graph $G(V,E)$. Here $V$ is the set of vertices of the graph and $E\subseteq V\times V$ is the set of edges. In the following we consider only simple, undirected graphs, i.e. graphs without self-loops and double-edges and direction of the edges. Such a graph is in 1 to 1 correspondence with a symmetric binary matrix called adjacency matrix. For a graph $G=(V,E)$ the adjacency matrix $\theta\in \{0,1\}^{|V|\times|V|}$ is defined by $\theta_{ij}=1$ for $\{i,j\}\in E$ and $0$ otherwise. For a simple graph this matrix is symmetric and has zeros on the diagonal. Let us now introduce some graph theoretic terms which will become relevant later on.
\begin{definition}[neighbourhood]
Let $G=(V,E)$ be a graph. Then the neighbourhood of vertex $j\in V$, $N_j$, is the set of vertices adjacent to $j$, i.e.
\begin{equation}
N_j=\{k\in V|\{j,k\}\in E\}.
\end{equation}
\end{definition}
Three graph structures will become important below, leaf and parent, twin vertices and connected twin vertices. A vertex $l\in V$ is called leaf if it is connected exactly to one other vertex, i.e. $|N_l|=1$. The vertex it is connected to $p\in N_l$ is called its parent. Moreover, two vertices $s,t\in V$ are called twins if they have the same neighborhood, i.e. $N_s=N_t$. They are called connected twins if they share all neighbours and are connected, i.e. $N_s\backslash\{t\}=N_t\backslash\{s\}$ and $s\in N_t$.

Furthermore, we make use of two groups in the following, the Pauli group and the local Clifford group. We denote by $\mathcal{P}_1=\left<X,Y,Z\right>$ the one qubit Pauli group. The $n$-qubit Pauli group $\mathcal{P}_n$ is given by
\begin{equation}
\mathcal{P}_n=\left<\{\sigma_1\otimes\ldots\otimes\sigma_n|\sigma_1,\ldots,\sigma_n\in \mathcal{P}_1\}\right>.
\label{eq:paulin}
\end{equation}
The (local) Clifford group for one qubit is defined as
\begin{equation}
\mathcal{C}_1=\{U\in U(2)|\forall \sigma\in\mathcal{P}: U\sigma U^\dagger\in \mathcal{P}\}
\label{eq:clifford1}
\end{equation}
and the $n$ qubit local Clifford group $\mathcal{C}_n$ is defined as the group generated by the $n$-fold tensor products of elements of $\mathcal{C}_1$, i.e.
\begin{equation}
\mathcal{C}_n=\left<\{c_1\otimes\ldots\otimes c_n|c_1,\ldots,c_n\in\mathcal{C}_1\}\right>.
\end{equation}
The group $\mathcal{C}_1$ has $24$ elements (up to phases) \footnote{We do not need to consider these phases as we do identify all symmetries in $GL$ (see Eq. (\ref{eq:gl}) and (\ref{eq:lu}))}. The factor group $\mathcal{C}_1/\mathcal{P}_1$ has $6$ elements and is isomorphic to the symmetric group of $3$ elements. The elements of $\mathcal{C}_1$ are of the form
\begin{align}
\sigma_1\exp\left(i\alpha\sigma_2\right),\alpha\in\{0,\pm\frac{\pi}{4}\}\label{eq:formc1even}\\
\sigma_1\exp\left(i\beta\sigma_2\right)\exp\left(i\gamma\sigma_3\right),\beta,\gamma\in\{\pm\frac{\pi}{4}\}\label{eq:formc1odd},
\end{align}
where $\sigma_1\in\{\one,X,Y,Z\},\sigma_2, \sigma_3 \in\{X,Y,Z\}$ with $\sigma_2\neq \sigma_3$. We mention the specific form of the elements since it is related to the order of the element. The order of a group element $c$ is defined as the smallest integer $k\in\mathbb{N}$ such that $c^k=\one$. Elements of the form in Eq. (\ref{eq:formc1even}) have order $1,2$ or $4$ and elements of the form in Eq. (\ref{eq:formc1odd}) have order $3$. As the latter subset plays an important role in the following, we will use the notation $\mathcal{C}^3_1=\{U\in \mathcal{C}_1\setminus \one | U^3=\one\}$. Furthermore, we abbreviate local Clifford operators, i.e. operators in $\mathcal{C}_n$, by LC.

\subsection{Graph states and stabilizer states\label{Graphs and graph states}}
A stabilizer state is defined as follows. Let $\mathcal{S}_\psi=\left<g_1,\ldots,g_n\right>\subset\mathcal{P}_n$ be an abelian subgroup of the $n$-qubit Pauli group, generated by $n$ independent elements $g_1,\ldots,g_n\in\mathcal{P}_n$ with $-\one\not\in \mathcal{S}_\psi$. The state $\ket{\psi}$, which is the unique eigenstate for all elements of $\mathcal{S}_\psi$ to eigenvalue $+1$, is called stabilizer state and $\mathcal{S}_\psi$ is called its stabilizer. Clearly we have $\mathcal{S}_\psi\subseteq U_\psi\subseteq G_\psi$.

Every stabilizer state is LC equivalent to a graph state \cite{NM04,HeMa06}. A graph state is defined via a mathematical graph. For a graph $G=(V,E)$ we denote by $\ket{G}\in\left(\mathbb{C}^2\right)^{\otimes n}$ the corresponding graph state defined as
\begin{equation}
\ket{G}=\prod_{\{i,j\}\in E}(U_{cz})_{ij} \ket{+}^{\otimes n}.
\label{eq:graphstate}
\end{equation}
Here $U_{cz}$ denotes the controlled phase gate. The state defined this way is a stabilizer state and its stabilizer $\mathcal{S}_G$ is generated by the operators $S_{(j)}=X_j\bigotimes_{k\in N_j} Z_{k}$, $j\in\{1,\ldots,n\}$ (canonical generators). Without loss of generality we consider only fully entangled stabilizer states in the following, i.e. states which correspond to fully connected graphs.

\subsubsection{Local complementation}
There exists a close connection between the local Clifford group and an operation on graphs called local complementation which we briefly review here \cite{NM04}. Given a graph $G=(V,E)$, local complementation at a vertex $v\in V$ yields a new graph $G'=(V',E')$ with $V'=V$ and $E'=E\oplus \{\{a,b\}|a,b\in N_v, a\neq b\}$ where $\oplus$ is the exclusive OR for sets. In other words local complementation at vertex $v$ complements  the subgraph spanned by the neighbourhood of $v$. The corresponding graph states $\ket{G}$ and $\ket{G'}$ are related by an LC operation \cite{NM04} as follows

\bea
\ket{G'}=e^{-i\frac{\pi}{4}X_v}\otimes\bigotimes_{j\in N_v} e^{i\frac{\pi}{4}Z_j}\ket{G}.\label{eq:tttemp1}
\eea

One can show that a graph state $\ket{G_1}$ is LC equivalent to another graph state $\ket{G_2}$, i.e. there exists a $C\in\mathcal{C}_n$ such that $\ket{G_1}=C\ket{G_2}$, iff $G_1$ is equivalent to $G_2$ up to a sequence of local complementations \cite{NM04}. As we will show in the following a leaf in a graph is associated with an additional symmetry of the corresponding graph state. Clearly, any graph obtained by local complementation also possesses an additional symmetry. As was shown in \cite{AB93}, using local complementation one can switch the role of leaf and parent, turn leaf and parent into twin vertices or into connected twin vertices. Note that this is the whole orbit of a leaf parent pair under local complementation.

Before we study the additional symmetries of graph states let us make some simple observations. First note that $e^{i\alpha S}\ket{G}\propto\ket{G}$ holds for any $\alpha\in \mathbb{R}$ and $S\in\mathcal{S}_G$. However, it can be easily seen that such an operator is never local. Next we have that $\exp({i\alpha X_j})\ket{G}=\exp(i\alpha \bigotimes_{k\in N_j}Z_k)\ket{G}$ for any qubit $j$ where the operator $\exp(i\alpha \bigotimes_{k\in N_j}Z_k)$ is in general (unless $|N_j|=1$, i.e. vertex $j$ is a leaf) non-local. Let us remark her that using local complementation one can see that for $\alpha=\pi/4$ we have $\exp(i\alpha X_j)\ket{G}=\bigotimes_{k\in N_j} \exp(i\alpha Z_k)\prod_{m\neq n\in N_j} (U_{cz})_{mn}\ket{G}$ (s. Eq. (\ref{eq:tttemp1})).

\subsection{Additional symmetries of stabilizer states}

We investigate here all symmetries of stabilizer states. That is, we characterize the group $G_\psi$ for a general stabilizer state $\ket{\psi}\in\left(\mathbb{C}^2\right)^{\otimes n}$. This characterization is simplified by the following two observations. First, as mentioned above, every stabilizer state is LC equivalent to a graph state \cite{NM04}. Hence, we can restrict the considerations to graph states. Second, any graph state is critical, i.e. all single qubit reduced states are proportional to the identity. For these states local invertible symmetries result from local unitary ones \cite{NW17}. Hence, it is sufficient to characterize local unitary symmetries to determine the whole local symmetry group in $GL^{\otimes n}$.
Thus, the question we have to answer is: What is the form of $U_G$ for a general graph state $\ket{G}\in\left(\mathbb{C}^2\right)^{\otimes n}$?

Let us also mention here that the fact that there always exists only one critical state in an SLOCC--class (up to LUs) and the fact that stabilizer states are critical imply that no pair of stabilizer states is SLOCC--equivalent unless it is LU-equivalent \cite{GoGi11}. Here SLOCC denotes stochastic local operations assisted by classical communication (LOCC). Mathematically speaking, two states $\ket{\psi}$ and $\ket{\phi}$ are SLOCC--equivalent, i.e. they belong to the same SLOCC class, iff there exists a local invertible operator $h=h_1\otimes \ldots\otimes h_n$ such that $\ket{\psi}=h\ket{\phi}$.

The problem of LU--equivalence of stabilizer states has been extensively studied in the literature \cite{NeMa05,DG07,LuLc10}. Here, we recall some results, which are needed subsequently. Let us begin by recalling the definition of semi Clifford operators.

\begin{definition}[semi Clifford]\label{definition semi clifford}
An operator $O\in U(2)$ is called semi Clifford if there exists $\sigma\in\{X,Y,Z\}$ such that $U\sigma U^\dagger \in\mathcal{P}_1$.
\end{definition}
Hence, in contrast to  Clifford operators, which leave the whole Pauli group invariant, a semi Clifford operator only maps at least one Pauli operator back to the Pauli group. A local operator $O=O_1\otimes \ldots \otimes O_n\in U(2)^{\otimes n}$ is called a local semi Clifford operator if $O_j$ is a semi Clifford operator for all $j\in \{1,\ldots,n\}$. It is straightforward to show (see Appendix A) that $U$ is a semi Clifford operator iff it can be written as
\bea U\propto C e^{i \alpha \sigma},
\label{eqn: form semi clifford}\eea
where $C\in \mathcal{C}_1$, $\alpha \in \R$ and $\sigma\in \{X, Y, Z\}$. Thus, up to local Clifford operators a semi Clifford operator is diagonal \cite{DG07}, i.e. $U=Ce^{i \alpha \sigma}=CEe^{i \alpha Z}E^\dagger$ where $E\in\mathcal{C}_1$ is the Clifford operator mapping $\sigma$ to $Z$. It was shown in \cite{DG07} that LU operators relating stabilizer states have to be local semi Clifford operators as stated in the following theorem.

\begin{theorem}[\cite{DG07}]
Let $\ket{\psi}$ and $\ket{\psi'}$ be fully entangled, LU--equivalent stabilizer states on $n\ge 3$ qubits  and let $U=U_1\otimes\ldots\otimes U_n\in U(2)^{\otimes n}$ be such that $U\ket{\psi}=\ket{\psi'}$. Then $U$ is a local semi Clifford operator.
\label{theorem semi clifford}
\end{theorem}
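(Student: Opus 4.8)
The plan is to analyze, qubit by qubit, the constraints that an LU-equivalence between two stabilizer states imposes on each local factor $U_j$, and to show that each such factor must be semi Clifford in the sense of Definition~\ref{definition semi clifford}. Since we may apply local Clifford operators on both sides without leaving the class of stabilizer states, I would first use the fact that every stabilizer state is LC-equivalent to a graph state to reduce to the case where $\ket{\psi}$ is a graph state $\ket{G}$; the target $\ket{\psi'}$ is then another stabilizer state, and proving $U$ is a local semi Clifford operator is unaffected by composing with LCs on either side. The key object to exploit is the stabilizer group $\mathcal{S}_\psi=\langle g_1,\dots,g_n\rangle$: since $U\ket{\psi}=\ket{\psi'}$, conjugation gives $U\,\mathcal{S}_\psi\,U^\dagger=\mathcal{S}_{\psi'}$, so $U$ maps the abelian group $\mathcal{S}_\psi\subset\mathcal{P}_n$ onto another maximal abelian subgroup of $\mathcal{P}_n$ (up to phases).

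The heart of the argument is local: I would fix a qubit $j$ and look at how $U_j$ acts on the single-qubit Pauli operators that appear in the $j$-th tensor slot of elements of $\mathcal{S}_\psi$. Because the state is fully entangled on $n\ge 3$ qubits, one can choose generators of $\mathcal{S}_\psi$ so that at least two anticommuting Paulis, say $X$ and $Z$, occur in slot $j$ (this uses that no qubit is in a product state, which forces the projection of $\mathcal{S}_\psi$ onto qubit $j$ to be all of $\mathcal{P}_1$ up to phases). For such a generator $g$ with $j$-th component $\sigma\in\{X,Y,Z\}$, write $g=\sigma^{(j)}\otimes P$ with $P$ a Pauli on the remaining qubits. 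Then $U g U^\dagger=(U_j\sigma^{(j)}U_j^\dagger)\otimes(U_{\hat j}PU_{\hat j}^\dagger)$ must again be an element of $\mathcal{P}_n$. The crucial point, and I expect this to be the main obstacle, is to extract from ``the tensor product is Pauli'' the conclusion that each tensor factor is individually proportional to a Pauli: this is not automatic for a single operator, but it can be forced by playing off two different generators against each other. Concretely, if $g$ and $g'$ have $j$-components $\sigma\ne\sigma'$ that anticommute, and $g g'$ has $j$-component $\sigma\sigma'\propto\sigma''$, then the requirement that $UgU^\dagger$, $Ug'U^\dagger$ and $U(gg')U^\dagger$ all lie in $\mathcal{P}_n$, combined with the structure of $\mathcal{P}_n$ as a tensor product, pins down $U_j\sigma U_j^\dagger$ up to a scalar to be a Pauli — at least when the ``remaining'' parts $U_{\hat j}PU_{\hat j}^\dagger$ can be arranged to be genuinely non-scalar, which is where $n\ge 3$ and full entanglement enter. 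This yields $U_j\sigma^{(j)}U_j^\dagger\in\mathcal{P}_1$ for some $\sigma\in\{X,Y,Z\}$, i.e. $U_j$ is semi Clifford.

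Having shown each $U_j$ is semi Clifford, the theorem follows immediately, and one may additionally invoke the normal form $U_j\propto C_j e^{i\alpha_j\sigma_j}$ from Eq.~(\ref{eqn: form semi clifford}) to see the concrete shape of such operators. Two remarks on bookkeeping: first, all phases are irrelevant since we work modulo scalars throughout (as noted in the paper's treatment of $G_\psi$ and $U_\psi$), so ``$\in\mathcal{P}_1$'' should be read up to phase; second, the delicate step above should be organized so that one never needs more than a fixed finite number of generators per qubit, which keeps the case analysis bounded. Since this is precisely the content of the result of~\cite{DG07}, I would, in a self-contained writeup, either reproduce their combinatorial argument on the projection of $\mathcal{S}_\psi$ to a single qubit or cite it directly; the essential insight — that full entanglement forces each local unitary to ``see'' enough of the Pauli group to be constrained into being semi Clifford — is what drives the whole proof.
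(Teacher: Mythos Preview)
The paper does not give its own proof of Theorem~\ref{theorem semi clifford}; it is quoted as a result of~\cite{DG07} and used as a black box. So there is no in-paper argument to compare against beyond the citation.

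Your sketch, however, has a genuine gap that makes it circular. You assert that from $U\ket{\psi}=\ket{\psi'}$ one obtains $U\,\mathcal{S}_\psi\,U^\dagger=\mathcal{S}_{\psi'}$, and then use that $UgU^\dagger\in\mathcal{P}_n$ for every $g\in\mathcal{S}_\psi$. This does not follow. What follows from $g\ket{\psi}=\ket{\psi}$ is only that $UgU^\dagger$ \emph{stabilizes} $\ket{\psi'}$, i.e.\ $UgU^\dagger\ket{\psi'}=\ket{\psi'}$; the set of unitaries fixing $\ket{\psi'}$ is vastly larger than its Pauli stabilizer $\mathcal{S}_{\psi'}$. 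Knowing that $UgU^\dagger$ lands back in $\mathcal{P}_n$ is essentially the statement that each $U_j$ maps the relevant Pauli to a Pauli, which is precisely the semi Clifford property you are trying to establish. Your ``main obstacle'' paragraph then builds on this unjustified premise, so the difficulty you flag (extracting per-qubit Pauliness from a tensor product being Pauli) is not the real obstacle; the real obstacle is that you have no reason yet to believe the tensor product is Pauli at all.

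The route taken in~\cite{DG07} avoids this by working with objects that \emph{are} guaranteed to be intertwined by $U$, namely reduced density matrices: for any subset $A$ one has $U_A\rho_A U_A^\dagger=\rho'_A$. For stabilizer states, $\rho_A$ is (up to normalization) the sum of the stabilizer elements supported in $A$, and the same holds for $\rho'_A$. Choosing $A$ to be the support of a minimal-weight stabilizer element through qubit $j$ makes $\rho_A$ and $\rho'_A$ rigid enough that the intertwining relation forces $U_A$ to send that Pauli to a Pauli, from which the semi Clifford property at qubit $j$ follows. If you want a self-contained write-up, this is the mechanism to invoke instead of direct conjugation of $\mathcal{S}_\psi$.
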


Let us also make the following simple observations regarding the local symmetries of a stabilizer state.

\begin{observation}
Let $\ket{\psi}$ be a stabilizer state and let $P\in \mathcal{P}_n$ be a symmetry, i.e. $P\in U_\psi$. Then there exists a $\lambda\in\{\pm 1,\pm i\}$ such that $\lambda P\in\mathcal{S}_\psi$.
\label{nootherpauli}
\end{observation}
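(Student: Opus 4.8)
The plan is to show that any Pauli symmetry $P$ of a stabilizer state $\ket{\psi}$, possibly after rescaling by a fourth root of unity, actually lies in the stabilizer group $\mathcal{S}_\psi$. The natural starting point is the spectral decomposition of $\ket{\psi}$ in terms of its stabilizer: since $\mathcal{S}_\psi = \langle g_1,\ldots,g_n\rangle$ is a maximal abelian subgroup of $\mathcal{P}_n$ with $-\one\notin\mathcal{S}_\psi$, the projector onto $\ket{\psi}$ is
\begin{equation}
\ket{\psi}\bra{\psi} = \frac{1}{2^n}\sum_{S\in\mathcal{S}_\psi} S = \prod_{k=1}^n \frac{\one + g_k}{2}.
\end{equation}
First I would use the hypothesis $P\ket{\psi}\propto\ket{\psi}$: because $P$ is unitary the proportionality constant is a phase $e^{i\varphi}$, and applying $P$ to the projector gives $P\ket{\psi}\bra{\psi}P^\dagger = \ket{\psi}\bra{\psi}$, i.e. $P$ commutes with $\ket{\psi}\bra{\psi}$. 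Conjugating the sum representation above yields $\sum_{S\in\mathcal{S}_\psi} PSP^\dagger = \sum_{S\in\mathcal{S}_\psi} S$.

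Next I would exploit the structure of the Pauli group: for $S\in\mathcal{P}_n$ and $P\in\mathcal{P}_n$ we have $PSP^\dagger = \pm S$, since any two elements of the Pauli group either commute or anticommute. So conjugation by $P$ permutes the elements of $\mathcal{S}_\psi$ only up to signs — in fact it fixes each $S$ up to a sign $\chi(S)\in\{\pm1\}$, and $\chi$ is a group homomorphism $\mathcal{S}_\psi\to\{\pm1\}$. The displayed equality $\sum_S \chi(S) S = \sum_S S$ then forces $\chi(S)=1$ for all $S\in\mathcal{S}_\psi$, because the elements of $\mathcal{P}_n$ (hence of $\mathcal{S}_\psi$) are linearly independent as operators. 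Therefore $P$ commutes with every element of $\mathcal{S}_\psi$. Since $\mathcal{S}_\psi$ is a \emph{maximal} abelian subgroup of $\mathcal{P}_n$ and $P\in\mathcal{P}_n$ commutes with all of it, $P$ must itself lie in the group generated by $\mathcal{S}_\psi$ and the center $\{\pm\one,\pm i\one\}$; equivalently, there is $\lambda\in\{\pm1,\pm i\}$ with $\lambda P\in\mathcal{S}_\psi$.

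The one point requiring a little care — and the place I would be most careful about in a full write-up — is the passage from ``$P$ commutes with $\mathcal{S}_\psi$'' to ``$\lambda P\in\mathcal{S}_\psi$''. The cleanest argument is the counting one: the centralizer of $\mathcal{S}_\psi$ inside $\mathcal{P}_n$ has order exactly $|\mathcal{S}_\psi|\cdot|Z(\mathcal{P}_n)| / 1$ up to the obvious identifications, because the commutator pairing on $\mathcal{P}_n/Z(\mathcal{P}_n)\cong \mathbb{F}_2^{2n}$ is nondegenerate and $\mathcal{S}_\psi/Z$ is a maximal isotropic (Lagrangian) subspace, so its symplectic complement coincides with itself. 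Hence every $P\in\mathcal{P}_n$ centralizing $\mathcal{S}_\psi$ already lies in $\mathcal{S}_\psi Z(\mathcal{P}_n)$, which is exactly the claim. Alternatively one can avoid the symplectic language entirely and argue directly: write $P$ in terms of the generators using that $\{g_1,\ldots,g_n\}$ together with a conjugate partner set spans $\mathcal{P}_n/Z$, and commutation with all $g_k$ kills the partner components. Either route is routine; I would present the symplectic version as it is shortest. Note finally that the sign choice cannot produce $-\one\cdot(\text{something in }\mathcal{S}_\psi)$ landing outside $\mathcal{S}_\psi$ in a way that matters, since $-\one\notin\mathcal{S}_\psi$ is precisely what pins down which of the four scalar multiples of $P$ is the stabilizer element (the one with eigenvalue $+1$ on $\ket{\psi}$).
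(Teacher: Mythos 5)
Your proof is correct, and its overall skeleton matches the paper's: first show that $P$ commutes with every element of $\mathcal{S}_\psi$, then invoke maximality of $\mathcal{S}_\psi$ as an abelian subgroup of $\mathcal{P}_n$ to conclude $\lambda P\in\mathcal{S}_\psi$. The difference lies in how each of the two steps is carried out. For the commutation step the paper argues directly on the state: if $S\in\mathcal{S}_\psi$ anticommuted with $P$, then $SP\ket{\psi}$ would equal both $\alpha\ket{\psi}$ and $-\alpha\ket{\psi}$, a contradiction. This is a one-line eigenvalue argument that needs no machinery at all, whereas your route through the projector identity $\ket{\psi}\bra{\psi}=2^{-n}\sum_{S\in\mathcal{S}_\psi}S$ and the linear independence of distinct Pauli words is correct but heavier (and requires the small observation, which you implicitly use, that distinct elements of $\mathcal{S}_\psi$ project to distinct classes in $\mathcal{P}_n/Z(\mathcal{P}_n)$ so that the sign pattern $\chi$ is really forced to be trivial). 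For the second step the situation is reversed: the paper simply asserts that maximality of the commuting set gives $\alpha^{-1}P\in\mathcal{S}_\psi$, while you supply the actual justification via the Lagrangian/self-complementary property of $\mathcal{S}_\psi/Z$ under the symplectic pairing on $\mathbb{F}_2^{2n}$. So your write-up is more detailed exactly where the paper is terse, and more elaborate than necessary where the paper has a shorter trick; both are valid.
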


This can be easily seen by observing that $P$ has to commute with all elements of $S_\psi$. In order to see this, suppose that $P\ket{\psi}=\alpha\ket{\psi}$ with $\alpha\in\mathbb{C}$. If $P$ does not commute with all elements of $S_\psi$ there would exist a $S\in \mathcal{S}_\psi$ such that $SP\ket{\psi}=-PS\ket{\psi}=-\alpha\ket{\psi}$ and $SP\ket{\psi}=\alpha\ket{\psi}$ which leads to a contradiction. As $\mathcal{S}_\psi$ is a maximal set of commuting Pauli operators we conclude that $\alpha^{-1}P\in\mathcal{S}_\psi$.

\begin{observation} \label{2 factors}
There exists no local symmetry of a (fully entangled) stabilizer state which acts non--trivially only on one qubit.
\end{observation}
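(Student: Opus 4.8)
The plan is to reduce the statement to an elementary fact about bipartite entanglement. Suppose, for contradiction, that $G=G_1\otimes\ldots\otimes G_n\in G_\psi$ acts non-trivially only on a single qubit $j$, i.e. $G_k\propto\one$ for all $k\neq j$. Absorbing scalars, we may assume $G$ equals the operator $G_j\in GL(2)$ acting on qubit $j$ and the identity elsewhere, and the symmetry condition becomes $G_j\ket{\psi}=\lambda\ket{\psi}$ for some $\lambda\neq 0$, where $G_j$ is understood to act on the $j$-th tensor factor only. The goal is then to deduce $G_j\propto\one$, which contradicts the assumption that $G$ acts non-trivially.

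First I would use that $\ket{\psi}$ is fully entangled, so qubit $j$ is entangled with the remaining $n-1$ qubits; consequently the Schmidt decomposition of $\ket{\psi}$ across the bipartition $\{j\}\,|\,\{1,\ldots,n\}\setminus\{j\}$ has Schmidt rank $2$. (For a graph state this is immediate from criticality, since the one-qubit reduced state is $\one/2$; for a general stabilizer state it follows from LC-equivalence to a graph state corresponding to a connected graph.) Write $\ket{\psi}=s_0\,\ket{e_0}\ket{\phi_0}+s_1\,\ket{e_1}\ket{\phi_1}$ with $s_0,s_1>0$, where $\{\ket{e_0},\ket{e_1}\}$ is an orthonormal basis of $\C^2$ for qubit $j$ and $\{\ket{\phi_0},\ket{\phi_1}\}$ are orthonormal vectors on the complement.

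Next I would apply $G_j$ to this expansion and compare with $\lambda\ket{\psi}$: taking the partial inner product of both sides with $\bra{\phi_0}$ on the complement yields $s_0\,G_j\ket{e_0}=\lambda s_0\,\ket{e_0}$, and with $\bra{\phi_1}$ yields $s_1\,G_j\ket{e_1}=\lambda s_1\,\ket{e_1}$. Since $s_0,s_1\neq 0$, $G_j$ fixes both basis vectors $\ket{e_0},\ket{e_1}$ with the common eigenvalue $\lambda$, hence $G_j=\lambda\one$, the desired contradiction. There is no substantial obstacle here; the only point requiring (minor) care is the justification that the Schmidt rank across $\{j\}\,|\,\overline{\{j\}}$ equals $2$, i.e. that being fully entangled prevents qubit $j$ from factoring out — which for stabilizer states is guaranteed by the criticality recalled above together with the convention that fully entangled stabilizer states correspond to connected graphs.
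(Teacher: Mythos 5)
Your argument is correct and is essentially the paper's own proof: the paper writes $\ket{\psi}=\ket{0}\ket{\psi_0}+\ket{1}\ket{\psi_1}$ with $\langle\psi_1\ket{\psi_0}=0$ and concludes $U_1\otimes\one\ket{\psi}\propto\ket{\psi}$ forces $U_1\propto\one$, which is exactly your Schmidt-decomposition comparison of coefficients. Your version merely spells out the rank-$2$ justification (via criticality/connectedness) that the paper leaves implicit.
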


This observation can be easily proven by noting that $\ket{\psi}=\ket{0}\ket{\psi_0}+\ket{1}\ket{\psi_1}$, where $\langle \psi_1\ket{\psi_0}=0$. Hence, $U_1 \otimes \one \ket{\psi}\propto \ket{\psi}$ iff $U_1$ is proportional to the identity.

\section{Symmetries of stabilizer states\label{Symmetries of stabilizer states}}

Using Theorem \ref{theorem semi clifford} we first derive necessary conditions on the local symmetries of an arbitrary stabilizer state. More precisely, we show that any $U_j$ being part of a symmetry of a stabilizer state has to be either an LC operator of order $3$, or of some other particular form. We will then study these two cases separately and will derive necessary conditions for the existence of these symmetries. As we will see, unless the graph state contains a leaf, only discrete symmetries exist. Note that this was already shown in \cite{DZ09,WK11}. Furthermore, we will show that all other symmetries can be derived by characterizing the local Clifford symmetries of graph states, for which we present necessary and sufficient, easily computable, conditions. Finally, we will present an algorithm which determines all symmetries of an arbitrary graph (stabilizer) state.

Let us start out by characterizing the local unitary symmetries for an arbitrary graph state. As shown in the following theorem, these symmetries can be constraint to a very special form.

\begin{theorem}
Let $\ket{\psi}\in\left(\mathbb{C}^2\right)^{\otimes n}$ be a fully entangled stabilizer state and let $U\in U_\psi$ be a local symmetry of $\ket{\psi}$. Then

\begin{equation}
U_j\propto \left\{\begin{matrix}C_j \\\sigma_1^j\exp\left(i\alpha_j\sigma_2^j\right)
\end{matrix}\right.
\label{eq:generalform}
\end{equation}

with $C_j\in\mathcal{C}_1^3$, $\alpha_j\in\mathbb{R}$, $\sigma_1^j\in\{\one,X,Y,Z\}$ and $\sigma_2^j\in\{X,Y,Z\}$ for all $j\in\{1,\ldots,n\}$.
\label{generalform}
\end{theorem}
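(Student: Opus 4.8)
The plan is to combine Theorem \ref{theorem semi clifford} with the elementary structure theory of $\mathcal{C}_1$ recalled in Eqs. (\ref{eq:formc1even})--(\ref{eq:formc1odd}). Since $\ket{\psi}$ is a fully entangled stabilizer state on $n\ge 3$ qubits and $U\in U_\psi$ means $U\ket{\psi}\propto\ket{\psi}$, i.e.\ $U$ maps $\ket{\psi}$ to an LU-equivalent (indeed identical) stabilizer state, Theorem \ref{theorem semi clifford} immediately tells us that every $U_j$ is a local semi Clifford operator. (For $n=1,2$ the statement is vacuous or follows from Observation \ref{2 factors}; I would dispatch those cases first, or simply note the hypotheses rule them out.) By Eq.\ (\ref{eqn: form semi clifford}), every semi Clifford $U_j$ has the form $U_j\propto C\, e^{i\alpha\sigma}$ with $C\in\mathcal{C}_1$, $\alpha\in\R$, $\sigma\in\{X,Y,Z\}$. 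So the real content is to show that any operator of this shape can be rewritten to match one of the two templates in Eq.\ (\ref{eq:generalform}).

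The key case split is on $\alpha$. If $e^{i\alpha\sigma}$ is itself (proportional to) a Pauli operator --- which happens exactly when $\alpha\in\frac{\pi}{2}\Z$ --- then $U_j\propto C e^{i\alpha\sigma}$ is proportional to a product of a Clifford and a Pauli, hence $U_j$ itself lies in $\mathcal{C}_1$ (up to phase). If in addition $e^{i\alpha\sigma}\in\langle\,\exp(i\frac{\pi}{4}\tau):\tau\rangle$-type corrections land $U_j$ in a Clifford of order $3$, we are in the first branch; otherwise $U_j$ is a Clifford of order $1$, $2$, or $4$, which --- by Eq.\ (\ref{eq:formc1even}) --- can be written as $\sigma_1^j\exp(i\alpha_j\sigma_2^j)$ with $\alpha_j\in\{0,\pm\frac{\pi}{4}\}$, matching the second branch. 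If instead $\alpha\notin\frac{\pi}{2}\Z$, the plan is to absorb the Clifford factor $C$ into the exponential: writing $C\in\mathcal{C}_1$ in the form (\ref{eq:formc1even}) or (\ref{eq:formc1odd}) and commuting/conjugating, one rewrites $C e^{i\alpha\sigma} = \sigma_1 e^{i\beta\sigma_2} e^{i\gamma\sigma_3}\cdots$; I would then argue that unless this collapses to a single exponential $\sigma_1^j e^{i\alpha_j\sigma_2^j}$, the operator is not unitary of the required type or forces $\alpha$ back into $\frac{\pi}{2}\Z$. Concretely, a product $e^{i\beta\sigma_2}e^{i\gamma\sigma_3}$ with $\sigma_2\neq\sigma_3$ and generic angles is a generic $SU(2)$ rotation about an axis in the $\sigma_2$-$\sigma_3$ plane only in degenerate cases; the right way to see which $U_j$ survive is to note that a semi Clifford $U_j$ that is \emph{not} Clifford has, up to a Clifford conjugation, the diagonal form $C E e^{i\alpha Z}E^\dagger$ noted after Eq.\ (\ref{eqn: form semi clifford}), and then $U_j$ is genuinely of the form $\sigma_1 e^{i\alpha\sigma_2}$ because conjugating a $Z$-rotation by a Clifford $E$ produces a rotation $e^{i\alpha\sigma_2}$ about a Pauli axis $\sigma_2=E Z E^\dagger$, possibly post-multiplied by the Pauli part of $CE$.

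The only genuinely subtle point is the appearance of $\mathcal{C}_1^3$ in the first branch rather than all of $\mathcal{C}_1$: one must rule out that a Clifford $U_j$ of order $1,2,4$ occurs in a ``non-diagonalizable'' way not already covered by the second template. But Clifford elements of order $1,2,4$ are precisely those of the form (\ref{eq:formc1even}), $\sigma_1\exp(i\alpha\sigma_2)$ with $\alpha\in\{0,\pm\frac{\pi}{4}\}$, so they \emph{do} fit the second branch; only the order-$3$ Cliffords (form (\ref{eq:formc1odd})) genuinely need their own branch, since $\sigma_1 e^{i\beta\sigma_2}e^{i\gamma\sigma_3}$ with two distinct non-identity angles is not of the single-exponential form. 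Hence the dichotomy in Eq.\ (\ref{eq:generalform}) is exactly ``$U_j$ is an order-3 Clifford'' versus ``$U_j$ has a single-Pauli-axis rotation form (covering everything else)''. I expect the main obstacle to be the bookkeeping in the $\alpha\notin\frac{\pi}{2}\Z$ case: carefully verifying that absorbing the Clifford $C$ into $e^{i\alpha\sigma}$ never produces an irreducible two-angle product unless $U_j$ was an order-3 Clifford to begin with. This is a finite check --- one can enumerate the $6$ cosets in $\mathcal{C}_1/\mathcal{P}_1$ and, for each, compute $C e^{i\alpha\sigma}$ for the three choices of $\sigma$ --- so the argument is routine but needs to be organized so as not to obscure the simple conclusion.
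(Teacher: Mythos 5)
There is a genuine gap. Your argument uses only the fact that each $U_j$ is semi Clifford, i.e. $U_j\propto C e^{i\alpha\sigma}$ with $C\in\mathcal{C}_1$ \emph{arbitrary}, and then tries to massage this expression into one of the two templates of Eq.~(\ref{eq:generalform}). But semi--Cliffordness alone does not imply the dichotomy: take $U_j= e^{i\pi/4 X}e^{i\alpha Z}$ with $\alpha$ an irrational multiple of $\pi$. This is semi Clifford (it maps $Z$ to $\pm Y$ under conjugation), yet it is neither an order--$3$ Clifford (it is not Clifford at all) nor of the form $\sigma_1 e^{i\alpha'\sigma_2}$ with $\sigma_1\in\mathcal{P}_1$ --- as an $SU(2)$ element it is a rotation about a tilted, non--Pauli axis. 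Your attempted rewriting founders exactly here: you write $U_j=CEe^{i\alpha Z}E^\dagger=Ce^{i\alpha\sigma_2}$ and then invoke ``the Pauli part of $CE$,'' but $C$ is a general Clifford, not a Pauli, so nothing forces the prefactor to be a Pauli operator times a rotation about the \emph{same} axis $\sigma_2$. The case split on $\alpha$ and the ``finite check over the $6$ cosets of $\mathcal{C}_1/\mathcal{P}_1$'' cannot close this, because for generic $\alpha$ and a non--Pauli coset representative the product genuinely does not collapse.

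The missing ingredient is the group structure of $U_\psi$: since $U\in U_\psi$ implies $U^2\in U_\psi$, Theorem~\ref{theorem semi clifford} applies to $U^2$ as well, so each $U_j^2\propto C^2 e^{\pm i\alpha\sigma_b}e^{i\alpha\sigma_a}$ (with $\sigma_b=\pm C^\dagger\sigma_a C$) must \emph{also} be semi Clifford. This is the constraint that kills the counterexample above ($e^{\pm i\alpha Y}e^{i\alpha Z}$ is not semi Clifford for generic $\alpha$) and, via a short case analysis on which Pauli is preserved, forces either $\sigma_a=\sigma_b$ (so $C$ is a Pauli times a rotation about $\sigma_a$, giving the second branch) or $\alpha\in\frac{\pi}{4}\mathbb{Z}$ (so $U_j\in\mathcal{C}_1$, and the order--$3$ elements form the first branch while all others fit the second). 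This is precisely the route the paper takes in Appendix~\ref{appendixb}; without the $U^2$ step your proof cannot go through.
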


Hence, any unitary which could potentially occur as a tensor factor of a local symmetry of a stabilizer state must be either a local Clifford operator of order $3$, or of the form $\sigma_1^j\exp(i\alpha_j\sigma_2^j)$. Note that the latter form includes all other Clifford operators. To show Theorem \ref{generalform} we make use of Theorem \ref{theorem semi clifford} which implies that any symmetry $U\in U_G$ has to be a local semi Clifford operator. As $U_G$ forms a group, the same has to hold for $U^2$. Using this it is straightforward to derive Theorem \ref{generalform}. For a detailed proof we refer the reader to Appendix \ref{appendixb}.

In the subsequent sections we will derive necessary conditions on the existence of those symmetries. To this end, we will first consider graph states for which all $U\in U_G$ are such that all $U_j\propto \sigma_1^j\exp(i\alpha_j\sigma_2^j)$. Let us denote this set of graph states by
\bea
\begin{split}&\nlc=\\ &\left\{\ket{G}| \forall U \in U_G, U=\otimes_k U_k,\text{ where }U_k\notin \mathcal{C}_1^3 \forall k\right \}.\end{split}\label{setT}
\eea
Note that $T$ also contains all graph states with no additional symmetries ($\mathcal{S}_G=U_G$). It will become clear later on that the set of all graph states can be divided into $T$ and a set of graph states with LC symmetries of order $3$. The reason for that is that as soon as one of the tensor factors $U_j$ in a local symmetry $U$ is a Clifford operator of order $3$, then all other $U_k$ must also be elements of $\mathcal{C}_1^3$ (see Lemma \ref{LC3oneLC3all}), i.e. the graph state has a local symmetry in which every tensor factor is an LC of order $3$. Furthermore, combining this with Theorem \ref{generalform} and the group properties of the local symmetry group we find that graph states not in $T$ can only have LC symmetries. While graph states in $T$ can have LC symmetries as well (however only of order $4$), more general local symmetries are possible for these states. 

\subsection{Symmetries for graph states in $\nlc$}

Let us first investigate under which conditions a graph state $\ket{G}\in\nlc$ can have additional symmetries. Recall that for these graph states any $U\in U_G$ is such that $U_j\propto \sigma_1^j\exp(i\alpha_j\sigma_2^j)$ for any $j$. Note that for a given graph state $\ket{G}\in T$ there exists only one $\sigma_2^j\in\{X,Y,Z\}$ for all $U\in U_G$ with $U_j \propto \sigma_1^j\exp(i\alpha_j\sigma_2^j)$ and $\alpha_j\neq \pi/2+k\pi$, $k\in \mathbb{Z}$. This is again a consequence of Theorem \ref{generalform} and the group properties of $U_G$.

We will show that either the graph contains a leaf (up to local complementation), or the phases $\alpha_j$ can only take values $\pi m/2^n$ for some $m,n\in\mathbb{N}$ (wlog we assume that $m<2^n$). As we will see later on (Sec. \ref{Local symmetries and transversal gates}) this statement also follows from \cite{AJ16}. Let us first show that $\alpha_j$ can be different from $\pi m/2^n$ only if qubit $j$ is a leaf (up to local complementation), as stated in the following theorem. Note that here and in the following we consider the respective graph state up to permutations of the qubits. Hence, if we consider a single qubit, we can choose qubit $1$.

\begin{theorem}
Let $\ket{G}\in\nlc$ be a graph state on $n$ qubits and let $U\in U_G$ with $U_1\propto\sigma_{1}^1 \exp({i\alpha_1 \sigma_{2}^1})$ be such that $\alpha_1\neq \frac{m_1\pi}{2^{n_1}}$ for any $m_1, n_1\in\mathbb{N}$. Then the vertex $1$ is a leaf up to local complementation.
\label{thm leaf}
\end{theorem}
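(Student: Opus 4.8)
The plan is to analyze the action of a symmetry $U \in U_G$ with an "irrational" phase $\alpha_1$ at vertex $1$ by repeatedly squaring $U$ and tracking how the support of the non-Clifford part shrinks. The starting observation is that, since $\ket{G}\in T$, every $U_k$ has the form $\sigma_1^k\exp(i\alpha_k\sigma_2^k)$, and by the group property of $U_G$ the direction $\sigma_2^k$ is pinned down for each $k$ (as remarked right before the theorem). Writing $U_k \propto \sigma_1^k \exp(i\alpha_k\sigma_2^k)$, I would consider the sequence $U, U^2, U^4, \dots$ By Theorem~\ref{generalform} all of these are again symmetries of the required form, and squaring doubles each phase modulo $\pi$ (up to absorbing the Pauli prefactor $\sigma_1^k$, which either commutes or anticommutes with $\sigma_2^k$, hence flips the sign of the phase at worst). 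The key point is that $\alpha_1 \neq m_1\pi/2^{n_1}$ means $\alpha_1$ is \emph{not} a dyadic rational multiple of $\pi$, so the doubled phases $2^r\alpha_1 \bmod \pi$ never hit $0$; i.e. qubit $1$ stays in the support of the non-Clifford part of $U^{2^r}$ for all $r$. In particular $U_G$ is infinite.

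Next I would bring in the diagonalized picture: conjugating each $U_k$ by the Clifford $E_k$ sending $\sigma_2^k \to Z$, we get a symmetry that is diagonal (in the computational basis of a suitable local-Clifford-transformed graph state) on its support, with phases $\pm\alpha_k$. It is then natural to invoke the earlier remark that $\exp(i\alpha X_j)\ket{G} = \exp(i\alpha\bigotimes_{k\in N_j}Z_k)\ket{G}$, and more generally the structure of how diagonal operators act on graph states: a diagonal symmetry forces a cancellation of phases that is controlled by the incidence structure of the graph. Concretely, I expect that a diagonal local symmetry with a continuous (non-dyadic) phase parameter at vertex $1$ can only arise if the vertex $1$ contribution $\exp(i\alpha_1 Z_1)$ is compensated by a \emph{local} operator on the neighbourhood — and $\exp(i\alpha_1\bigotimes_{k\in N_1}Z_k)$ is local exactly when $|N_1|=1$. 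Since we are allowed to precompose with local complementations (which preserve the set of graph states in $T$ up to relabeling and conjugate $U_G$ to the symmetry group of the new graph state), "leaf" should be read as "leaf up to local complementation", which is exactly the conclusion. The quantization $\pi m/2^n$ for all other vertices then drops out as the contrapositive.

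More carefully, the argument I would run is: assume vertex $1$ is not a leaf up to local complementation, and derive that every phase $\alpha_j$ appearing in a symmetry must be a dyadic multiple of $\pi$, contradicting the hypothesis on $\alpha_1$. For this I would pass to a local-Clifford frame where $U$ is diagonal on its support, expand $\ket{G}$ in the computational basis, and write out the eigenvalue equation $U\ket{G}\propto\ket{G}$ as a system of equations on the phases indexed by the basis strings in the support. Using that $\ket{G}$ is a graph state, the amplitudes are $\pm 1$ (times a global normalization) with signs governed by $\theta$, and comparing amplitudes of basis strings differing in one coordinate $j$ yields linear relations among the $\alpha$'s with $\{0,\pm 1\}$ coefficients; solving the system shows the only free parameters are "leaf phases", and they must be dyadic unless the associated vertex is genuinely a leaf.

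The main obstacle I anticipate is the bookkeeping in this last step: correctly handling the Pauli prefactors $\sigma_1^j$ (which, when $\sigma_1^j$ anticommutes with $\sigma_2^j$, turn $\exp(i\alpha_j\sigma_2^j)$ into $\exp(-i\alpha_j\sigma_2^j)$ after conjugation, i.e. act as an extra bit-flip / phase conjugation), and making the reduction "not a leaf up to local complementation $\Rightarrow$ no continuous phase" precise without losing cases — in particular ruling out that several non-leaf vertices conspire to support a continuous one-parameter family. I expect to organize this by induction on $|\mathrm{supp}(U)|$, using squaring to descend, the classification of two-qubit reduced states of graph states to detect leaves, and the fact (cited from \cite{AB93}) that the local-complementation orbit of a leaf-parent pair consists exactly of leaf, parent, twins and connected twins — so that the presence of an infinite one-parameter symmetry at vertex $1$ is stable under local complementation and can be matched against that orbit. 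The details are carried out in the appendix.
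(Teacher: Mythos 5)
There is a genuine gap, on two counts. First, your opening "squaring" step has a bookkeeping error: if the Pauli prefactor $\sigma_1^1$ anticommutes with $\sigma_2^1$, then $\left(\sigma_1^1 e^{i\alpha_1\sigma_2^1}\right)^2=\sigma_1^1 e^{i\alpha_1\sigma_2^1}\sigma_1^1 e^{i\alpha_1\sigma_2^1}=e^{-i\alpha_1\sigma_2^1}e^{i\alpha_1\sigma_2^1}=\one$, i.e.\ the factor at qubit $1$ does not merely "flip the sign of the phase at worst" --- it vanishes entirely, and your claim that qubit $1$ stays in the support of $U^{2^r}$ for all $r$ fails. The paper avoids this by first multiplying $U$ with a stabilizer element $S\in\mathcal{S}_G$ chosen so that $S_1=\sigma_1^1$, which strips the Pauli prefactor off qubit $1$ before any squaring; you need that step (or an equivalent one) for the "non-dyadic phase survives" argument to be sound.

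Second, and more importantly, the core of the proof --- showing that a surviving non-dyadic phase at vertex $1$ forces a leaf --- is not actually carried out. You propose passing to a frame where the symmetry is diagonal, expanding in the computational basis, and "solving the system" of phase relations, deferring the details to an appendix that does not exist; this is exactly where all the work lies, and the difficulties you yourself flag (Pauli prefactors, several vertices conspiring to support a continuous family) are not resolved. The paper's route is different and fully explicit: starting from $U$ (with $\sigma_1^1$ removed and $\sigma_2^1=Z$ after local complementation), it forms $A=U^2S_{(1)}(U^2)^\dagger S_{(1)}$, whose support collapses to $N_1\cup\{1\}$ while $A_1=e^{4i\alpha_1 Z}\not\propto\one$; it then repeatedly multiplies by canonical generators $S_{(j)}$, $S_{(k)}$ and squares to kill every remaining factor on vertices $j\in N_1$ having a neighbour other than $1$. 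The resulting symmetry $B'$ is nontrivial on qubit $1$ and trivial on every qubit $j$ with $N_j\setminus\{1\}\neq\emptyset$, so Observation \ref{2 factors} forces the existence of some $j\in N_1$ with $N_j=\{1\}$, and local complementation at $1$ then at $j$ makes vertex $1$ a leaf. Your intuition that $\exp\left(i\alpha_1\bigotimes_{k\in N_1}Z_k\right)$ is local exactly when $|N_1|=1$ is the right heuristic, but as written your argument does not close; to complete it you would either have to execute the linear-algebra-on-phases computation in full (including the non-graph-state amplitudes that appear after local Clifford conjugation) or adopt the paper's iterative construction.
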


\begin{proof}
Let $U\in U_G$ be such that $U_1\propto\sigma_{1}^1\exp({i\alpha_1 \sigma_{2}^1})$ and $\alpha_1\neq \frac{m_1\pi}{2^{n_1}}$ for any $m_1, n_1\in\mathbb{N}$. Since $\ket{G}\in \nlc$ also $U_j \propto\sigma_{1}^j\exp({i\alpha_j \sigma_{2}^j})$ for $j\in\{2,\ldots,n\}$. Note that for any $V\in U_G$ it holds that $S V\in U_G$ for any $S\in \mathcal{S}_G$, and that $V^2,V^\dagger\in U_G$. We use these properties to construct out of $U$ a new local symmetry of $\ket{G}$ (or of a graph state that is LC equivalent to $\ket{G}$). This symmetry acts nontrivial only on qubit $1$ unless vertex $1$ is a leaf up to local complementation. Due to Observation \ref{2 factors}, we conclude that vertex $1$ is a leaf up to local complementation.

Wlog we have $\sigma_{1}^j\neq\sigma_{2}^j$ for all $j\in\{1,\ldots,n\}$. Moreover, if $\sigma_1^1\not\propto \one$ we consider $SU$ instead of $U$ where $S\in\mathcal{S}_G$ is chosen such that $S_1=\sigma_1^1$. This is always possible as we consider fully connected graphs. Furthermore, let us show that it is sufficient to restrict ourselves to the case where $\sigma_2^1=Z$. This follows from that fact that any other $\sigma_2^1$ can be transformed into $Z$ via local complementation. To be more precise, in case $\sigma_2^1=X$ local complementation at any qubit in $N_1$ followed by local complementation at qubit $1$ leads to a new graph state for which the symmetry corresponding to $U$ satisfies $\sigma_2^j=Z$. In case $\sigma_2^1=Y$ this is achieved by local complementation on qubit $1$. The symmetry of the new graph is related to $U$ by conjugation with LC operators. Note that this does not change the phases $\alpha_j$. Hence, wlog we consider the case $\sigma_2^1=Z$ \footnote{If instead we do not use local complementation and consider the three cases $\sigma_2^1=X,Y,Z$ separately we identify the structures corresponding to a leaf under local complementation, i.e. twin vertices and connected twins.}.

Thus, it remains to show the following. If $U_1=e^{i \alpha_1 Z}$ with $\alpha_1\neq \frac{m_1\pi}{2^{n_1}}$ for any $m_1,n_1\in\mathbb{N}$ is a local tensor factor of a unitary symmetry $U$ of a graph state, then vertex $1$ is a leaf (up to local complementation). To show this let us consider the new local symmetry of this graph state $A=U^2S_{(1)}((U)^2)^\dagger S_{(1)}\in U_G$, where $S_{(1)}\in\mathcal{S}_G$ is the canonical generator corresponding to qubit $1$. Observe that for $\sigma_k,\sigma_j \in \{X,Y,Z\}$ we have that $(\sigma_k e^{i\alpha  \sigma_{j}})^2=\one$ if $\sigma_k\neq \sigma_j$ and $(\sigma_k e^{i\alpha  \sigma_{j}})^2=e^{i2\alpha  \sigma_{j}}$ if $\sigma_k= \sigma_j$. Thus, the symmetry $A$ satisfies $\text{supp}(A)\subseteq N_{1}\cup \{1\}$ and $A_1=e^{4i\alpha_1 Z}\not \propto \one$ as $\alpha_1\neq m\pi/2^n$. Furthermore, $A_j\propto e^{i\beta_j \sigma_{2}^j}$ with $\beta_j=0$ if $\sigma_2^j=Z$ or $\sigma_1^j\not \propto\one$ and $\beta_j=4\alpha_j$ if $\sigma_2^j\in\{X,Y\}$ and $\sigma_1^j\propto\one$ for all $j\in N_1$. Let $B$ denote the symmetry we obtain by multiplying $A$ with $S_{(j)}$ and squaring the result subsequently for every qubit $j\in N_1$ with $\beta_j\neq 0$ and $\sigma_2^j=Y$. Note that $(S_{(j)})_1=Z$ for all $j\in N_1$ and hence, $B_1\not\propto \one$. Note further that
$\text{supp}(B)\subseteq N_{1}\cup\{1\}$ and that 
if $B_j\not\propto \one$ then $\sigma_{2}^j=X$ must hold. For any qubit $j\in N_1$ which has a neighbour $k\in N_j$ different from qubit $1$ and for which $B_j$ is nontrivial, we multiply $B$ with $S_{(k)}$ and square the result. Since $k\neq 1$ we have $(S_{(k)})_1=\one$ or $Z$. Let $B'$ be the symmetry obtained in this process. By construction $(B')_1\not\propto \one$ and $(B')_j\propto \one$  $\forall j: N_j\backslash \{1\}\neq \emptyset$. Due to Observation \ref{2 factors} we conclude that there has to exist at least one qubit $j\in N_1$ such that $N_j=\{1\}$. Hence performing local complementation at qubit $1$ followed by a qubit $j$ with $N_j=\{1\}$ turns qubit $1$ into a leaf, which implies the assertion.
\end{proof}

Applying the reasoning of this proof to a vertex $j$ which is not a leaf (under local complementation) we can derive bounds on $n_j$ where $\alpha_j=m_j\pi/2^{n_j}$, as shown in the following.

\begin{corollary}
Let $\ket{G}\in\nlc$ be a graph state on $n$ qubits and let $U\in U_G$ be such that $U_1\propto \sigma_1^1\exp(i\alpha_1\sigma_2^1)$. Then, if vertex $1$ is not a leaf under local complementation it holds that $\alpha_1=m_1\pi/2^{n_1}$ with $m_1,n_1\in\mathbb{N}$ and $n_1\le |N_1|+2$ if $\sigma_2^1\in\{Z,Y\}$ and $n_1\le \text{min}_{j\in N_1}|N_j|+2$ if $\sigma_2^1=X$ (for $|m_1|<2^{n_1}$.
\label{bounds}
\end{corollary}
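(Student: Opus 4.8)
The plan is to re-examine the construction in the proof of Theorem \ref{thm leaf}, applied to an arbitrary non-leaf vertex (which we take to be vertex $1$), and to keep track of how often the running symmetry gets squared: squaring doubles the phase $\alpha_1$ appearing on qubit $1$, while the only other operations used (multiplying by a canonical generator, inverting) at worst shift that phase by a multiple of $\pi/2$, so the power of two in the final denominator is exactly the number of squaring steps. Since $\ket G\in\nlc$, Theorem \ref{generalform} gives $U_j\propto\sigma_1^j e^{i\alpha_j\sigma_2^j}$ for all $j$, and after multiplying $U$ by a suitable $S\in\mathcal S_G$ I may assume $\sigma_1^1\propto\one$. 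I first treat $\sigma_2^1=Z$ (the case $\sigma_2^1=Y$ reduces to it by one local complementation at vertex $1$, which conjugates the symmetry by local Cliffords — hence does not change any $\alpha_j$ — and leaves $N_1$ unchanged). Exactly as in the proof of Theorem \ref{thm leaf}, form $A=U^2S_{(1)}(U^2)^\dagger S_{(1)}\in U_G$; using $(\sigma_k e^{i\alpha\sigma_j})^2\in\{\one,e^{2i\alpha\sigma_j}\}$ and the Pauli conjugation rules one obtains $A_1=e^{4i\alpha_1 Z}$, $\text{supp}(A)\subseteq N_1\cup\{1\}$, and on each $j\in N_1$ a factor $e^{i\beta_j\sigma_2^j}$ with $\beta_j\in\{0,4\alpha_j\}$ — this multiplies the $Z$-phase on qubit $1$ by $2^2$. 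Then, successively for each neighbour $j\in N_1$ on which the running symmetry is still nontrivial, multiply by a canonical generator ($S_{(j)}$ if $\sigma_2^j=Y$, or $S_{(k)}$ with $k\in N_j\setminus\{1\}$ if $\sigma_2^j=X$) and square; each such step trivializes at least that neighbour and, since the extra generator contributes only a $Z_1$ on qubit $1$, multiplies the $Z$-phase there by exactly $2$.

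The $Y$-type and $X$-type neighbours handled in this way are distinct vertices of $N_1$, so at most $|N_1|$ such steps occur; the resulting symmetry $B'$ is then supported on $\{1\}$ together with the leaf-children of vertex $1$, with $B'_1=e^{2^{n_1}i\alpha_1 Z}$ and $n_1\le|N_1|+2$. Because vertex $1$ is not a leaf under local complementation it has no leaf-child — a leaf with parent $1$ would, by the local-complementation orbit of leaf/parent/twin pairs recalled in Sec.~\ref{Graphs and graph states}, turn $1$ into a leaf — so $\text{supp}(B')\subseteq\{1\}$, and Observation \ref{2 factors} forces $B'_1\propto\one$. Hence $2^{n_1}\alpha_1\in\pi\mathbb Z$, i.e. $\alpha_1=m_1\pi/2^{n_1}$ with $n_1\le|N_1|+2$ (and $|m_1|<2^{n_1}$ after reducing modulo $e^{i\pi Z}\propto\one$); if the phase on qubit $1$ trivializes at an earlier step the conclusion holds with a smaller $n_1$. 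For $\sigma_2^1=X$ I would run the same argument but centred at a neighbour $k\in N_1$ of minimal degree, $|N_k|=\min_{j\in N_1}|N_j|$: since $(S_{(k)})_1=Z_1$ anticommutes with $e^{i\alpha_1 X}$, the operator $A=U^2S_{(k)}(U^2)^\dagger S_{(k)}$ has $A_1=e^{4i\alpha_1 X}$ and $\text{supp}(A)\subseteq\{k\}\cup N_k$ with $1\in N_k$; trivializing the $|N_k|$ qubits of $(\{k\}\cup N_k)\setminus\{1\}$ one at a time then gives $n_1\le|N_k|+2=\min_{j\in N_1}|N_j|+2$.

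The algebraic identities involved ($(\sigma_k e^{i\alpha\sigma_j})^2$, the Pauli conjugation rules, the behaviour of the canonical generators under squaring) are routine and already appear in the proof of Theorem \ref{thm leaf}. The step that needs genuine care — and the one I expect to be the main obstacle — is the termination analysis: one must verify, in each of the cases $\sigma_2^1\in\{X,Y,Z\}$, that the sequence of ``multiply by a generator, then square'' operations really reduces the symmetry to one supported on a single qubit, and that the only configurations obstructing this are precisely those that make vertex $1$ a leaf up to local complementation. For $\sigma_2^1\in\{Y,Z\}$ this is the leaf-child argument already in the proof of Theorem \ref{thm leaf}; for $\sigma_2^1=X$ it additionally needs the fact that leaf, twin, and connected-twin vertices lie in a common local-complementation orbit, together with careful bookkeeping of how $N_1$ (respectively the degree of the chosen minimal-degree neighbour) transforms under the complementations used to put the symmetry in standard form.
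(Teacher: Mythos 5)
Your proposal follows essentially the same route as the paper's own (very terse) proof: rerun the $B'$ construction from the proof of Theorem \ref{thm leaf} for the non-leaf vertex $1$, count the squarings ($2$ for the step producing $A$ plus at most one per qubit to be trivialized), and invoke Observation \ref{2 factors} to force $(B')_1\propto\one$, which yields $2^{n_1}\alpha_1\in\pi\mathbb{Z}$. Your explicit handling of the $\sigma_2^1=X$ case by centring the construction at a minimal-degree neighbour $k\in N_1$ (so that $\text{supp}(A)\subseteq\{k\}\cup N_k$ and at most $|N_k|$ further squarings are needed) is precisely how the $\min_{j\in N_1}|N_j|+2$ bound arises, and is spelled out in more detail than in the paper.
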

\begin{proof}
If vertex $1$ is not a leaf under local complementation then by Theorem \ref{thm leaf} we have $U_1\propto \sigma_1^1\exp(i\alpha_1\sigma_2^1)$ with $\alpha_1=m_\pi/2^{n_1}$ and $m_1 ,n_1\in\mathbb{N}$. Let us again construct the symmetry $B'$ from the proof to Theorem \ref{thm leaf} for vertex $1$. Since vertex $1$ is not a leaf under local complementation we have that $(B')_j\propto\one$ for all $j\neq 1$. Using again Observation \ref{2 factors} we conclude that $(B')_1\propto \one$. Counting the number of times we had to square $U_1$ (and $U_1^\dagger$) in the worst case to get $B'$ leads to the stated bounds.
\end{proof}

Theorem \ref{thm leaf} shows that symmetries with phases $\alpha\neq m\pi/2^n$ can only exist in case the graph possesses (up to local complementation) a leaf. In this case, the phase can indeed be arbitrary, as stated in the following observation.

\begin{observation}
Let $\ket{G}$ be a graph state on $n\ge 3$ qubits. Let qubit $1$ and $2$ be a leaf parent pair. Then

\begin{equation}
U=e^{i\alpha X}\otimes e^{-i\alpha Z}\otimes \one\ \ \ \alpha\in\mathbb{R}
\end{equation}

is in $U_G$. Moreover, there exists no other unitary symmetry of the form $U_1\otimes U_2\otimes \one\in U_G$.
\label{char leaf}
\end{observation}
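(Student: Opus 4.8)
## Proof proposal

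The plan is to verify the claimed symmetry directly from the graph-state structure, and then to use Observation~\ref{2 factors} together with Theorem~\ref{generalform} to rule out every other symmetry of the form $U_1\otimes U_2\otimes\one$. First I would establish that $U=e^{i\alpha X}\otimes e^{-i\alpha Z}\otimes\one\in U_G$. Since vertex $1$ is a leaf with parent $2$, the canonical generator associated with qubit $1$ is $S_{(1)}=X_1\otimes Z_2\otimes\one$. From the observations collected just before Sec.~\ref{Graphs and graph states}, $e^{i\alpha X_1}\ket{G}=e^{i\alpha\bigotimes_{k\in N_1}Z_k}\ket{G}=e^{i\alpha Z_2}\ket{G}$ because $N_1=\{2\}$. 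Hence $e^{i\alpha X}\otimes e^{-i\alpha Z}\otimes\one\,\ket{G}=e^{i\alpha Z_2}e^{-i\alpha Z_2}\ket{G}=\ket{G}$, which is the first claim.

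For the uniqueness part, suppose $V=V_1\otimes V_2\otimes\one\in U_G$. By Observation~\ref{2 factors} neither $V_1$ nor $V_2$ can be proportional to $\one$ (a symmetry supported on a single qubit forces that factor to be trivial, contradiction with full entanglement on $n\ge 3$ qubits). By Theorem~\ref{generalform}, each $V_j$ is either in $\mathcal{C}_1^3$ or of the form $\sigma_1^j\exp(i\alpha_j\sigma_2^j)$; and by multiplying $V$ by suitable elements of $\mathcal{S}_G$ — in particular by $S_{(1)}$, whose first two tensor factors are $X$ and $Z$ — we may assume $\sigma_1^1\propto\one$, i.e. $V_1\propto e^{i\alpha_1\sigma_2^1}$ (the $\mathcal{C}_1^3$ case is handled by Lemma~\ref{LC3oneLC3all}, which forces all factors of $V$ to lie in $\mathcal{C}_1^3$, contradicting the trivial support on qubits $3,\dots,n$). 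Up to local complementation at vertex $1$ we may further take $\sigma_2^1=X$, so $V_1\propto e^{i\alpha_1 X}$. Now I would compute how $V$ acts: by the same reduced-operator identity, $e^{i\alpha_1 X_1}\ket{G}=e^{i\alpha_1 Z_2}\ket{G}$, so $V\ket{G}\propto\ket{G}$ forces $V_2 e^{i\alpha_1 Z_2}$ to be a symmetry acting nontrivially only on qubit $2$; Observation~\ref{2 factors} then gives $V_2\propto e^{-i\alpha_1 Z}$. Tracing back the local-complementation and $\mathcal{S}_G$ multiplications (which only conjugate $V$ by fixed local Clifford operators and multiply by a Pauli with trivial support outside $\{1,2\}$) shows $V$ is of the stated form.

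The main obstacle I anticipate is bookkeeping the reductions cleanly: one must be careful that the Pauli multiplications used to kill $\sigma_1^1$ and the local complementation used to normalize $\sigma_2^1$ do not alter the support condition (trivial on qubits $3,\dots,n$) or the parameter $\alpha_1$, and that the $\mathcal{C}_1^3$ branch is genuinely excluded rather than merely postponed. Invoking Lemma~\ref{LC3oneLC3all} disposes of the latter immediately. Everything else is a short direct computation using $N_1=\{2\}$ and the identity $e^{i\alpha X_j}\ket{G}=e^{i\alpha\bigotimes_{k\in N_j}Z_k}\ket{G}$, so the argument should be only a few lines once the normalization is set up.
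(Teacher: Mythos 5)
The first half of your argument (that $e^{i\alpha X}\otimes e^{-i\alpha Z}\otimes\one\in U_G$) is correct and is exactly the paper's computation. The uniqueness half, however, has a genuine gap in the two normalization steps, and the cases they sweep away are precisely the ones that carry the content of the statement. First, multiplying by $S_{(1)}=X_1\otimes Z_2\otimes\one$ cancels the Pauli prefix $\sigma_1^1$ only when $\sigma_1^1=X$; if $\sigma_1^1\in\{Y,Z\}$ it merely permutes it, and there is no other element of $\mathcal{S}_G$ supported on $\{1,2\}$ with $Y$ or $Z$ on qubit $1$ (indeed, showing that is essentially the hard part of the observation), so ``we may assume $\sigma_1^1\propto\one$'' is unjustified. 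Second, local complementation \emph{at vertex $1$} acts trivially on the graph (a leaf's neighbourhood is a single vertex) and conjugates $V_1$ by $e^{-i\pi/4 X}$, which fixes $X$ and exchanges $Y\leftrightarrow Z$; it can never turn $\sigma_2^1=Z$ into $X$. Reaching $\sigma_2^1=X$ requires complementing at the parent, which changes the graph and destroys the leaf structure $N_1=\{2\}$ that your identity $e^{i\alpha_1 X_1}\ket{G}=e^{i\alpha_1 Z_2}\ket{G}$ depends on. (This detour is also unnecessary: $\sigma_2^1=X$ and $\sigma_2^2=Z$ already follow from Theorem \ref{generalform} and the group property of $U_G$, since the continuous leaf symmetry is known to lie in $U_G$.)

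Concretely, the candidates your normalization hides are $V\propto Z_1\otimes X_2\otimes\one$ and $V\propto Y_1\otimes Y_2\otimes\one$ (and their products with $S_{(1)}$): these commute with $S_{(1)}$ and with the leaf symmetry, have the allowed form of Theorem \ref{generalform}, and cannot be removed by the operations you allow yourself. Excluding them is where the hypotheses $n\ge 3$ and connectedness must enter --- which your uniqueness argument never uses, a telling sign, since for $n=2$ these operators \emph{are} symmetries. The paper's proof closes this loop by multiplying a general candidate with the leaf symmetry and a stabilizer element and squaring, so that Observation \ref{2 factors} forces $\alpha_1+\alpha_2\in\frac{\pi}{2}\mathbb{Z}$; the residual candidate is then a two-qubit Pauli operator, hence by Observation \ref{nootherpauli} an element of $\mathcal{S}_G$, and the decomposition into canonical generators together with $N_2\supsetneq\{1\}$ rules out $Z_1\otimes X_2$ and $Y_1\otimes Y_2$. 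You need some version of this final Pauli-exclusion step; as written, your proof does not contain it.
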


Whereas this result has already been derived in several other works (see \cite{DZ09,WK11}) we present a different proof in Appendix \ref{appendixd}. If qubit $1$ and $2$ are a leaf parent pair only up to local complementation it was shown in \cite{AB93} that the only possible structures are twin vertices or connected twins. It is easy to see that if the qubits are twin vertices then $U=e^{i\alpha X}\otimes e^{-i\alpha X}\otimes \one$, if they are connected twins $U=e^{i\alpha Y}\otimes e^{-i\alpha Y}\otimes \one$, which are of course LC equivalent.

We call the unitary symmetry group resulting from the existence of leaf in the following leaf symmetry and denote the group generated by all leaf symmetries of a graph by $L_G$. Let us now investigate those symmetries which do not stem from a leaf--symmetry. To this end we consider the factor group $U_G / L_G$. Note that this is possible as $L_G$ is a normal subgroup of $U_G$. For a subgroup to be normal it has to be invariant under conjugation by all group elements ($U_G$). For $L_G$ this property can be shown as follows. For any leaf parent pair $(j,k)$ the subgroup $L_G^{(j,k)}=\{\exp(i\alpha X_j)\otimes \exp(-i\alpha Z_k)|\alpha\in\mathbb{R}\}$ generated by the respective leaf symmetry is normal. This holds as the conjugation of any element of $L_G^{(j,k)}$ by a $U\in U_G$ does not change the support of the respective element. Thus, according to Observation \ref{char leaf}, the resulting symmetry has to come from the leaf symmetry corresponding to the leaf parent pair $(j,k)$ and thus is again an element of $L_G^{(j,k)}$. Since the group generated by the union of normal subgroups is again a normal subgroup, we conclude that $L_G$ is a normal subgroup of $U_G$. Let us now show that any symmetry is, up to leaf symmetries, of the form as given in Theorem \ref{thm leaf} as stated in the following corollary.

\begin{corollary}
Let $\ket{G}\in T$ be a graph state on $n$ qubits with $U\in U_G$ and $U\not \in S_G$. Then for every element $W \in U_G/L_G$ there exists an element $V\in W$ with $V_j=\sigma_1^j\exp(i\alpha_j\sigma_2^j)$ such that $\alpha_j=m_j\pi/2^{n_j}$ with $m_j,n_j\in\mathbb{N}$.
\label{cor2}
\end{corollary}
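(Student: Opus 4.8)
The plan is to combine Theorem~\ref{thm leaf} with the fact that $L_G$ is normal in $U_G$, so that working in the quotient $U_G/L_G$ amounts to ``forgetting'' exactly the tensor factors that are forced to be arbitrary-phase rotations by a leaf. Given a coset $W\in U_G/L_G$, pick any representative $U\in W$. By Theorem~\ref{generalform} (and the defining property of $T$) every tensor factor is of the form $U_j\propto\sigma_1^j\exp(i\alpha_j\sigma_2^j)$. For each qubit $j$ that is \emph{not} a leaf up to local complementation, Theorem~\ref{thm leaf} already gives $\alpha_j=m_j\pi/2^{n_j}$ with $m_j,n_j\in\mathbb N$, so there is nothing to do for those factors. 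The only factors that can carry a ``bad'' phase are those sitting on qubits which \emph{are} leaves up to local complementation, and the point is that these phases can be removed by multiplying $U$ by a suitable element of $L_G$.

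The key step is therefore the following local-complementation bookkeeping. Suppose qubit $j$ is a leaf up to local complementation; by \cite{AB93} its orbit under local complementation realises the leaf--parent, twin, or connected-twin structure with some partner qubit $k$. In each of these cases Observation~\ref{char leaf} (together with the twin/connected-twin variants noted right after it) tells us precisely which one-parameter group of symmetries is attached to the pair $(j,k)$: it is $L_G^{(j,k)}$, generated by $\exp(i\alpha\tau_j)\otimes\exp(-i\alpha\tau_k)$ for the appropriate $\tau\in\{X,Y,Z\}$ determined by the structure. Since $U\in U_G$ is a semi-Clifford symmetry and, restricted to the pair $(j,k)$ after tensoring with identities elsewhere, must again essentially behave like an element of this one-parameter family modulo the discrete part, we can choose $\alpha$ so that $\exp(i\alpha\tau_j)\otimes\exp(-i\alpha\tau_k)$ cancels the continuous (non-dyadic) part of the phase on qubit $j$. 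Multiplying $U$ on the left by this leaf symmetry produces a new representative $U'$ of the same coset $W$ whose $j$-th tensor factor now has a phase of the allowed form $m_j\pi/2^{n_j}$. Doing this once for each leaf-up-to-LC qubit (the leaf symmetries for different leaf--parent pairs commute and their supports interact in a controlled way, so the corrections do not reintroduce bad phases on already-fixed qubits) yields the desired representative $V\in W$ with $V_j=\sigma_1^j\exp(i\alpha_j\sigma_2^j)$ and $\alpha_j=m_j\pi/2^{n_j}$ for all $j$.

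I would organise the write-up as: (i) reduce to controlling a single leaf-up-to-LC qubit, invoking normality of $L_G$ so that the choice of coset representative is free; (ii) apply \cite{AB93} to enumerate the three local structures and Observation~\ref{char leaf} (plus its twin/connected-twin variants) to pin down $L_G^{(j,k)}$; (iii) show that the continuous part of $\alpha_j$ lies in the phase group of $L_G^{(j,k)}$, hence can be subtracted off by an element of $L_G$, leaving a dyadic residue; (iv) iterate over all such qubits and check the corrections are independent. The main obstacle is step~(iii)--(iv): one must be careful that the phase group generated by a leaf symmetry really does act on the offending tensor factor in a way that can absorb an \emph{arbitrary} real phase down to a dyadic rational, and that fixing one leaf qubit does not spoil another — this requires using that distinct leaf--parent pairs (up to LC) have essentially disjoint or commuting ``handles,'' which in turn follows from Observation~\ref{char leaf}'s uniqueness clause and the structure of the local-complementation orbit of a leaf described before it.
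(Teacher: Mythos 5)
Your reduction for qubits that are \emph{not} leaves up to local complementation is fine and matches the paper: Theorem~\ref{thm leaf} already forces those phases to be dyadic. The gap is in your steps (iii)--(iv), for the qubits that \emph{are} leaves up to local complementation. Multiplying by an element of $L_G^{(j,k)}$ does not delete the non-dyadic part of the phase on the leaf $j$ --- it transfers it to the partner $k$, because the leaf symmetry acts as $(\alpha,-\alpha)$ on the pair. The quantity $\alpha_j+\alpha_k$ (more generally, $\alpha_p+\sum_{j}\alpha_j$ over a parent $p$ and \emph{all} of its leaves, since all those one-parameter groups share the parent as support) is invariant under multiplication by leaf symmetries, so it survives into the coset $W$ and must itself be shown to be dyadic. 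Theorem~\ref{thm leaf} cannot do this for you: by \cite{AB93} the parent of a leaf is itself a leaf up to local complementation, so the theorem places no constraint on the residual phase you have pushed onto it. Your assertion that the restricted symmetry ``must again essentially behave like an element of this one-parameter family modulo the discrete part'' is precisely the statement to be proved, not a consequence of semi-Cliffordness.

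The paper closes this gap with an extra argument you do not supply: after using leaf symmetries to concentrate the whole cluster's phase $\alpha_1+\sum_{j\in M_1'}\alpha_j$ onto the parent (which keeps you in the same coset $W$), it re-runs the squaring construction $B'$ from the proof of Theorem~\ref{thm leaf} on this new representative, obtaining a symmetry $\propto e^{i2^N(\alpha_1+\sum_j\alpha_j)Z}\otimes\one$ supported on a single qubit; Observation~\ref{2 factors} then forces $\alpha_1+\sum_j\alpha_j=2\pi k/2^N$. This also repairs your pairwise bookkeeping: leaves sharing a parent do \emph{not} have ``essentially disjoint handles'' --- their leaf symmetries all act on the common parent --- and the correct organization is per parent-plus-leaves cluster with a single dyadicity constraint on the summed phase, not per pair. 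To fix your write-up, replace step (iii) by the aggregation-plus-squaring argument and drop the claim that corrections for different pairs are independent.
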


\begin{proof}
Choose an element $A\in W$. Suppose that $A_1=\sigma_1^1\exp(i\alpha_1\sigma_2^1)$ with $\alpha_1\neq m_1\pi/2^{n_1}$ for $m_1,n_1\in\mathbb{N}$ \footnote{As mentioned before, we consider the graph states up to permutations.}. Then, by Theorem \ref{thm leaf} we know that vertex $1$ is a leaf up to local complementation. Using the same arguments as in the proof of Theorem \ref{thm leaf} we can assume wlog that $\sigma_2^1=Z$ \footnote{In contrast to the proof of Theorem \ref{thm leaf} one would need to apply the inverse LC operation afterwards as we consider here one particular graph state, not only a state up to local complementation. However, this does not alter the subsequent argument.}. Let $M_1=\{j\in N_1|N_j=\{1\}\}$ be the set of all leaves connected to $1$. For every leaf $j\in M_1$ the corresponding graph state has the symmetry $\exp(i\alpha Z_1)\otimes \exp(-i\alpha X_j)$, $\alpha\in \mathbb{R}$ (Observation \ref{char leaf}). Combining this with the fact that any symmetry of a graph state has to be of the form stated in Theorem \ref{generalform} and the group properties of $U_G$ we conclude that for all $A_j$ with $j\in M_1$ we have that $\sigma_2^j=X$ unless $\alpha_j=\pi/2+k\pi$, $k\in \mathbb{Z}$. Let $M_1'$ denote the set of $j\in M_1$ for which $\alpha_j\neq\pi/2+k\pi$. Then for every pair $(1,j)$ where $j\in M_1'$ multiply $A$ with the corresponding leaf symmetry $\exp(i\alpha_1 Z_1)\otimes \exp(-i\alpha_j X_j)$ from the right obtaining
\bea
A'\propto R e^{i(\alpha_1+\sum_{j\in M_1'}\alpha_j)Z}\otimes \one_{j\in P_l}\otimes\ldots
\eea
where $R\in\mathcal{P}_n$ contains all Pauli operators appearing in all tensor factors and the other local tensor factors are the same as in $A$ (except for multiplication by $R$). By construction we have that $A'\in W$. Computing again the symmetry $B'$ as in the proof of Theorem \ref{thm leaf} starting from $A'$ with respect to vertex $1$ we obtain that

\begin{equation}
B'\propto e^{i 2^N(\alpha_1+\sum_{j\in P_1'}\alpha_j)Z}\otimes\one
\end{equation}

where $N\in \mathbb{N}$ is the number of times we squared $A'$ during this process. By Observation \ref{2 factors} we know that

\begin{equation}
\alpha_1+\sum_{j\in P_1'}\alpha_j=\frac{2\pi k}{2^{N}}\ \ \ k\in\mathbb{Z}
\end{equation}

has to hold. Repeating this argument for all $j\not\in N_1\cup\{1\}$ with $\alpha_j\neq m_j\pi/2^{n_j}$ we obtain a representative $V\in W$ with the desired properties.
\end{proof}

Due to Corollary \ref{cor2} and Observation \ref{char leaf} we have that any stabilizer state, which is LC equivalent to a graph state in $\nlc$, possesses a continuous symmetry group and, hence, non--unitary regular local symmetries \cite{NW17}, iff the corresponding graph state contains a leaf. Combining this with the fact that any graph state not in $\nlc$ can have only LC symmetries this statement holds true for arbitrary stabilizer states.

Theorem \ref{thm leaf} together with Corollary \ref{cor2} characterizes the form of all symmetries which are non--Clifford, again for arbitrary graph states since if $\ket{G}\not\in\nlc$ it can only have LC symmetries (see Lemma \ref{LC3oneLC3all} below). We will show now that such a symmetry can only exist in case there exists a (non--trivial) Clifford symmetry of order $4$. In the subsequent section will then derive necessary and sufficient conditions for all Clifford symmetries.

\begin{corollary}
Let $\ket{G}\in T$ be a graph state on $n$ qubits with $\mathcal{S}_G\subsetneq U_G$. Then $\ket{G}$ has a local Clifford symmetry of order 4. Moreover, for any $W\in U_G/L_G$, which does not correspond to an element of $\mathcal{S}_G$, there exists $l\in \mathbb{N}$, $S\in\mathcal{S}_G$ and $V\in W$ such that $(SV)^l\in U_G$ is an LC symmetry of order $4$ and $(SV)^l\not\in \left<\mathcal{S}_G\cup L_G\right>$.
\label{cor4}
\end{corollary}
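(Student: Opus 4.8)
The plan is to manufacture an order-$4$ local Clifford symmetry from any ``genuinely new'' symmetry of $\ket{G}$ by passing to a convenient representative and raising it to a power of two. Write $H=\langle\mathcal{S}_G\cup L_G\rangle/L_G$, a subgroup of $U_G/L_G$ since $L_G\triangleleft U_G$; a coset $W$ ``corresponds to an element of $\mathcal{S}_G$'' precisely when $W\in H$, so fix $W\notin H$. First I would use Corollary~\ref{cor2} to pick $V\in W$ with $V_j\propto\sigma_1^j\exp(i\alpha_j\sigma_2^j)$, $\sigma_1^j\neq\sigma_2^j$, $\alpha_j=m_j\pi/2^{n_j}$; normalising $n_j$ to be minimal (so $m_j$ odd, and $n_j=0$ if $\alpha_j\in\pi\mathbb{Z}$), call $n_j$ the \emph{depth} of qubit $j$ in $V$. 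If every depth were $\le 1$, each $V_j$ would be proportional to a Pauli operator, so $V\in\mathcal{P}_n\cap U_G$, whence by Observation~\ref{nootherpauli} $V\propto S$ for some $S\in\mathcal{S}_G$ and $W\in H$ --- a contradiction. Hence some qubit $j^{\ast}$ attains depth $N:=\max_j n_j\ge 2$.

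Next I would diagonalise the deepest factor. As $G$ is connected, $\{S_{j^{\ast}}\mid S\in\mathcal{S}_G\}=\{\one,X,Y,Z\}$ up to phases (take $S_{(j^{\ast})}$, $S_{(k)}$ for $k\in N_{j^{\ast}}$, and their product), so choose $S\in\mathcal{S}_G$ with $S_{j^{\ast}}\propto\sigma_1^{j^{\ast}}$; then $(SV)_{j^{\ast}}\propto\exp(i\alpha_{j^{\ast}}\sigma_2^{j^{\ast}})$, multiplication by the Pauli $S$ leaves $SV$ of the form of Corollary~\ref{cor2} with $j^{\ast}$ still a qubit of maximal depth $N$, and $SV$ represents $\bar S W\notin H$ (as $\bar S\in H$). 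Now raise $SV$ to the power $l=2^{N-2}$ and evaluate it factorwise using $(\sigma_k e^{i\alpha\sigma_j})^2=\one$ for $\sigma_k\neq\sigma_j$ and $(\sigma_k e^{i\alpha\sigma_j})^2=e^{2i\alpha\sigma_j}$ for $\sigma_k=\sigma_j$ (from the proof of Theorem~\ref{thm leaf}): a factor of depth $\le N-2$ collapses to $\propto\one$; a factor of depth $N-1$ collapses to $\propto$ a Pauli; the factor at $j^{\ast}$ becomes $\exp(im_{j^{\ast}}\pi\sigma_2^{j^{\ast}}/4)$, which lies in $\mathcal{C}_1$ and, as $m_{j^{\ast}}$ is odd, has order exactly $4$; and a non-diagonal factor with $\sigma_1^j\not\propto\one$ collapses to $\propto\one$ if $N\ge 3$ and, if $N=2$, remains $\sigma_1^j e^{i\alpha_j\sigma_2^j}$ with $n_j\le 2$, again in $\mathcal{C}_1$. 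Every tensor factor is thus a Pauli, the identity, or a $\pi/4$-rotation possibly times a Pauli, hence an element of $\mathcal{C}_1$ of order dividing $4$, so $(SV)^{2^{N-2}}\in U_G$ is an LC operator, and it has order exactly $4$ because of the $j^{\ast}$-factor.

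It remains to ensure $(SV)^l\notin\langle\mathcal{S}_G\cup L_G\rangle=\mathcal{S}_G L_G$. Here I would use that every element of $\mathcal{S}_G L_G$ acts by a Pauli operator on each qubit outside $\operatorname{supp}(L_G)$ (the leaves, parents, twins and connected twins), whereas $(SV)^{2^{N-2}}$ acts on $j^{\ast}$ by the non-Pauli gate $\exp(im_{j^{\ast}}\pi\sigma_2^{j^{\ast}}/4)$. For $N=2$ this is immediate with $l=1$, since then $SV$ itself is the asserted symmetry and it represents $\bar S W\notin H$. For $N\ge 3$ I would first arrange $j^{\ast}\notin\operatorname{supp}(L_G)$: on a leaf the direction $\sigma_2$ is pinned to $X$ (Observation~\ref{char leaf} together with uniqueness of $\sigma_2^j$ for $\ket{G}\in\nlc$; for twin vertices it is pinned to $X$ and for connected twins to $Y$), so a deep factor on such a vertex can be absorbed into an element of $L_G$, and iterating this one obtains a dyadic representative of $W$ whose maximal depth is attained outside $\operatorname{supp}(L_G)$. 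With $j^{\ast}\notin\operatorname{supp}(L_G)$ and $V$ chosen accordingly, suppose $W^{2^{N-2}}\in H$; letting $e'\in\{1,\dots,N-2\}$ be minimal with $W^{2^{e'}}\in H$, we get $V^{2^{e'}}\propto S'\ell$ with $S'\in\mathcal{S}_G$, $\ell\in L_G$, so its $j^{\ast}$-factor is $\propto S'_{j^{\ast}}$, a Pauli --- but $(V^{2^{e'}})_{j^{\ast}}=\exp(i\,2^{e'}\alpha_{j^{\ast}}\sigma_2^{j^{\ast}})$ has depth $N-e'\ge 2$ and is not a Pauli, a contradiction. Hence $W^{2^{N-2}}\notin H$, and since $(SV)^{2^{N-2}}$ represents $W^{2^{N-2}}$ (because $\bar S^{2^{N-2}}=\one$ for $N\ge 3$), it cannot lie in $\mathcal{S}_G L_G$. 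Finally, the first assertion follows: if some $W\notin H$ exists, the above produces the order-$4$ LC symmetry; otherwise $U_G=\mathcal{S}_G L_G\supsetneq\mathcal{S}_G$ forces $L_G$ nontrivial, so up to local complementation $G$ has a leaf and the leaf symmetry $e^{i\pi X/4}\otimes e^{-i\pi Z/4}\otimes\one$ is already an LC symmetry of order $4$.

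The step I expect to be the main obstacle is the reduction ``$j^{\ast}\notin\operatorname{supp}(L_G)$'' for $N\ge 3$: one must show that no $2$-adically deep part of a symmetry can stay concealed on a leaf, parent or twin vertex, i.e.\ that after factoring out all leaf and twin symmetries the maximal depth on $\operatorname{supp}(L_G)$ drops to at most $2$. Everything else is bookkeeping with the explicit form of $\mathcal{C}_1$, the square identity for $\sigma_k e^{i\alpha\sigma_j}$, and the elementary fact that $\operatorname{ord}(W)$ in $U_G/L_G$ is a power of two (so that the ``$2^{e'}$'' above makes sense).
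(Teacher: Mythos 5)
Your construction of the order-$4$ LC symmetry is essentially the paper's own: take the dyadic representative $V\in W$ from Corollary~\ref{cor2}, strip the Pauli prefix at a qubit of maximal depth $N$ by multiplying with a suitable $S\in\mathcal{S}_G$, and raise to the power $2^{N-2}$; your factorwise bookkeeping (including the $N=2$ case and the fallback to the leaf symmetry $e^{i\pi X/4}\otimes e^{-i\pi Z/4}\otimes\one$ when $U_G/L_G$ contains nothing beyond $\langle\mathcal{S}_G\cup L_G\rangle/L_G$) is correct and matches the paper's proof.

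The genuine gap is exactly where you flag it: the claim $(SV)^{2^{N-2}}\notin\langle\mathcal{S}_G\cup L_G\rangle$. Your strategy hinges on arranging the deepest qubit $j^{\ast}$ outside $\operatorname{supp}(L_G)$, and the proposed reduction does not work as stated. Multiplying by a leaf symmetry $e^{-i\alpha X_l}\otimes e^{i\alpha Z_p}$ to ``absorb'' a deep factor at a leaf $l$ merely transfers that depth to the parent $p$, which is still in $\operatorname{supp}(L_G)$; and there are connected graphs in which \emph{every} vertex is a leaf, parent, twin or connected twin, so that no qubit outside $\operatorname{supp}(L_G)$ exists and the dichotomy ``Pauli outside $\operatorname{supp}(L_G)$ versus non-Pauli at $j^{\ast}$'' has nothing to bite on. The paper instead leans on the \emph{specific} representative built in the proof of Corollary~\ref{cor2}: for each parent, the leaf symmetries attached to it are factored out so that the surviving dyadic phase there is a genuine residue modulo $L_G$, and the non-membership of $V'$ is read off from that construction (admittedly tersely). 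A further small wrinkle in your write-up: you pass from $(SV)^{2^{N-2}}$ to $W^{2^{N-2}}$ via $\bar S^{2^{N-2}}=\one$, but $U_G/L_G$ need not be abelian and $S$ need not commute with $V$ (it does so only up to phase when $S_j$ commutes with $\sigma_2^j$ on every qubit), so $(\bar SW)^{2^{N-2}}=\bar S^{2^{N-2}}W^{2^{N-2}}$ requires justification. Neither issue invalidates the statement, but both must be closed before your version of the non-membership argument is a proof.
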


\begin{proof}
Since $\ket{G}\in T$ and $\mathcal{S}_G\subsetneq U_G$ we know from Theorem \ref{thm leaf} that $G$ either contains a leaf (up to local complementation) or there exists a $U\in U_G$, $U\not \in\mathcal{S}_G$ such that $U_j\propto \sigma_1^j\exp(i\alpha_j\sigma_2^j)$ with $\alpha_j=m_j\pi/2^{n_j}$, $m_j,n_j\in\N$ for all $j\in\{1,\ldots,n\}$. Let us first consider the case where $G$ does not contain a leaf and enumerate the qubits wlog such that $n_1\ge n_2\ge\ldots\ge n_n$. Since $U\not \in \mathcal{S}_G$ we know that $n_1\ge 2$. If $\sigma_1^1\not\propto\one$ we multiply $U$ with a suitable $S\in \mathcal{S}_G$ such that $(SU)_1\propto \exp(i\alpha_1\sigma_2^1)$. By squaring the resulting symmetry $n_1-2$ times we obtain a new symmetry $U'=(SU)^{l}$ with $l=2^{n_1-2}$. By construction $U'\in \mathcal{C}_n$, $(U')^2\not\propto\one$ and $(U')^4=\one$. Hence, $U'$ is an LC symmetry of order $4$ for $\ket{G}$.

If $G$ contains one or more leafs (up to local complementation), then for every leaf $l$ with parent $p$ we know by Observation \ref{char leaf} that in particular $\exp(i \pi/4 X_l)\otimes\exp(-i \pi/4 Z_p)$ is a symmetry of $\ket{G}$. Thus, $\ket{G}$ has an LC symmetry of order $4$. 

In order to prove the last statement in Corollary \ref{cor4} let us suppose there exists a $W\in U_G/L_G$ which does not correspond to an element of $\mathcal{S}_G$, i.e. $\ket{G}$ has more nontrivial symmetries than what is generated by leaf symmetries and its stabilizer. We consider again the representative $V$ from the proof to Corollary \ref{cor2}. Applying the same reasoning to $V$ as above to the symmetry $U$, we obtain a symmetry $V'$ which again is an LC of order $4$ for $\ket{G}$. Considering the construction of $V$ (and $V'$) it is easy to see that $V'\not\in \left<\mathcal{S}_G\cup L_G\right>$.
\end{proof}

Corollary \ref{cor4} implies that any local, non Clifford symmetry for a graph state $\ket{G}\in T$ up to multiplication with leaf symmetries and an element of the stabilizer is a root of an LC symmetry of order $4$. Since graph states not in $T$ only allow for LC symmetries (Lemma \ref{LC3oneLC3all}) we conclude that this statement holds for all graph states.

\subsection{Clifford symmetries\label{clifford sym}}
In this section we first show that it is reasonable to separate graph states in $\nlc$ from those which are not in $\nlc$. The reason for that is that the latter only admit LC symmetries. In combination with Corollary \ref{cor4}, we have that in order to identify all graph states with additional symmetries it is sufficient to characterize those with LC symmetries. We then present necessary and sufficient conditions on the adjacency matrix of the corresponding graph to identify these symmetries. 

Recall that any graph state not in $T$ has a symmetry $U\in U_G$ such that $U_j\in \mathcal{C}_1^3$ for at least one $j\in\{1,\ldots,n\}$. In the following lemma we show that such graph states can only have LC symmetries. 

\begin{lemma}
Let $\ket{G}$ be a graph state on $n$ qubits and let $U\in U_G$ be such that $U_1\in\mathcal{C}_1^3$. Then $U_j\in \mathcal{C}_1^3$ for all $j\in\{1,\ldots,n\}$. Moreover, any other symmetry of the graph state is a local Clifford operator, i.e. $U_G\subseteq \mathcal{C}_n$.
\label{LC3oneLC3all}
\end{lemma}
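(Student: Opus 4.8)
The plan is to prove Lemma \ref{LC3oneLC3all} in two stages: first show that if one tensor factor of a symmetry lies in $\mathcal{C}_1^3$ then \emph{all} tensor factors do, and then deduce from this that every element of $U_G$ is a local Clifford operator. For the first stage I would argue by contradiction combined with a ``concentration'' trick analogous to the one used in the proof of Theorem \ref{thm leaf}. Suppose $U\in U_G$ has $U_1\in\mathcal{C}_1^3$ but $U_k$ is not of order $3$ for some $k$. By Theorem \ref{generalform}, such a $U_k$ has the form $\sigma_1^k\exp(i\alpha_k\sigma_2^k)$ (possibly equal to $\one$ up to the stabilizer). The key observation is that an order-$3$ Clifford $C$ satisfies $C^3=\one$ while $C^2=C^\dagger\neq\one$, so $C$ has no nontrivial fixed Pauli and in particular $C^2\notin\{\one\}\cup(\text{any }e^{i\beta\sigma})$; contrast this with the fact that, for $\sigma_1\neq\sigma_2$, $(\sigma_1 e^{i\alpha\sigma_2})^2=\one$. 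So by repeatedly forming $U^2, U^4,\ldots$ and multiplying by suitable canonical generators $S_{(j)}$ (which, being Pauli, preserve the diagonal-up-to-Pauli form on the non-order-$3$ sites and can be used to kill those tensor factors exactly as in the proof of Theorem \ref{thm leaf}), one produces a symmetry whose only non-trivial tensor factor is on qubit $1$, namely a nontrivial power of an order-$3$ Clifford. This contradicts Observation \ref{2 factors}. Hence every $U_k$ must also be an element of $\mathcal{C}_1^3$.

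For the second stage I would use the first stage together with the group structure of $U_G$. Once we know that some symmetry $U$ has \emph{every} tensor factor in $\mathcal{C}_1^3$, that $U$ is already a local Clifford operator. The remaining task is to show that \emph{any} other symmetry $V\in U_G$ is also a local Clifford. Here I would invoke Theorem \ref{generalform} applied to $V$: each $V_j$ is either in $\mathcal{C}_1^3$ or of the form $\sigma_1^j\exp(i\alpha_j\sigma_2^j)$. If some $V_j$ were genuinely non-Clifford (i.e.\ $\alpha_j\neq m\pi/2^{n}$ type, or more simply not a Clifford angle), I would derive a contradiction with the existence of the order-$3$ symmetry $U$ by looking at the product $VU$ or $V U V^\dagger U^\dagger$ and similar group-generated elements, using Theorem \ref{generalform} again on these products. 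The cleanest route: apply the first stage of the argument to any $V\in U_G$ that has at least one order-$3$ tensor factor to force all factors of $V$ to be order $3$; and for $V$ with no order-$3$ factor, note that $VU\in U_G$ — if $VU$ had an order-$3$ factor we are back to the previous case, otherwise all factors of $VU$ are of the diagonal-up-to-Pauli form, which, combined with the fact that $U$ is order-$3$ Clifford on every site, forces $V$ to be a local Clifford on every site by the factor-group / closure argument already used in the text (cf.\ the reasoning after Corollary \ref{cor2}). Either way $V\in\mathcal{C}_n$, so $U_G\subseteq\mathcal{C}_n$.

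The main obstacle I anticipate is making the ``concentration'' step in the first stage fully rigorous: one has to track carefully how squaring and multiplication by canonical generators act simultaneously on the order-$3$ site (where squaring just cycles $C\mapsto C^2\mapsto C$) and on the other sites (where one wants to collapse $\sigma_1^j e^{i\alpha_j\sigma_2^j}$ to $\one$), and to ensure that the support of the resulting symmetry can genuinely be reduced to $\{1\}$ rather than to $\{1\}\cup(\text{some neighbourhood})$. In the leaf proof this reduction was possible only up to local complementation and ultimately relied on the leaf structure; here, since an order-$3$ Clifford never becomes trivial under squaring, the situation is actually more favourable — $U_1$ stays nontrivial under every manipulation — so the obstruction is really just bookkeeping: one must choose the generators $S_{(j)}$ in the right order so that each non-order-$3$ tensor factor is annihilated without re-introducing nontrivial operators elsewhere. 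A secondary subtlety is the case where some $U_k\propto\one$ after absorbing a stabilizer element; but this is harmless, since such a factor is already ``trivially'' handled and does not obstruct the reduction. I would relegate the detailed verification of the bookkeeping to an appendix, mirroring the treatment of Theorem \ref{generalform}.
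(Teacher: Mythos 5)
There is a genuine gap in your first stage. Your plan is to ``concentrate'' the symmetry onto qubit $1$ and then invoke Observation \ref{2 factors}, but this reduction is impossible whenever $U$ has \emph{more than one} tensor factor in $\mathcal{C}_1^3$ --- which is exactly the generic situation (indeed the lemma you are proving says \emph{all} factors are of order $3$). As you yourself note, an order-$3$ Clifford satisfies $C^2=C^\dagger\neq\one$, and moreover $\sigma C\in\mathcal{C}_1^3$ for every Pauli $\sigma$; so squaring and multiplying by canonical generators can never annihilate an order-$3$ factor. The concentration procedure therefore terminates in a symmetry supported on the \emph{entire set} of order-$3$ sites, which may have cardinality two or more, and Observation \ref{2 factors} says nothing about such a symmetry. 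This is not a bookkeeping issue that an appendix can absorb: the very property that keeps $U_1$ alive (order-$3$ factors are indestructible under these manipulations) is the property that prevents the support from shrinking to a single qubit.

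The paper closes exactly this gap with an extra idea your proposal lacks. After constructing (as you do) a symmetry $V$ in which every factor is either trivial or in $\mathcal{C}_1^3$, it shows that such a $V$ must act nontrivially on \emph{every} qubit: no qubit $j$ with $V_j\in\mathcal{C}_1^3$ can have a neighbour $l$ with $V_l\propto\one$. This is proved either from the binary representation (Eq.~(\ref{commute}) forces the symplectic matrix of an order-$3$ LC symmetry to be $Q=((A,\one),(\one,A+\one))$, which is nontrivial on every site) or by a direct counting argument: conjugating the canonical generators $S_{(1)}$ and $S_{(2)}$ by $V_1$ and decomposing the images into canonical generators yields a parity contradiction on the parent qubit. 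Connectivity of the graph then propagates nontriviality to all qubits, contradicting the trivial factor at $p$. Your second stage (using products $VU$ and Theorem \ref{generalform} to exclude non-Clifford factors once an all-order-$3$ symmetry exists) is essentially the paper's argument and is fine, but it rests on the first stage, so the proof as proposed does not go through without supplying the missing neighbour/connectivity step.
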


A proof for this lemma is provided in Appendix \ref{AB}. This lemma implies that any graph state not in $T$ only allows for LC symmetries. Furthermore, there is no mixing between LC factors of order $3$ and order $4$ within a single symmetry. Note however that there are graph states for which an LC symmetry of order $3$ is a product of two different LC symmetries of order $4$ (graph $a)$ in Fig. \ref{graphs}).

According to Corollary \ref{cor4} any graph state in $T$ with an additional symmetry ($\mathcal{S}_G\subsetneq U_G$) also has an LC symmetry of order $4$. Combining this with Lemma \ref{LC3oneLC3all} we see that in order to find all graphs with additional symmetries we have to identify which graph states admit LC symmetries of order $3$ and $4$. Let us now derive necessary and sufficient conditions for the existence of these symmetries. As we will see, they lead to different conditions on the adjacency matrix of the graph depending on whether the graph state possesses symmetries of order 3 or 4. In \cite{VN04} the following theorem, which is crucial for the characterization of LC symmetries, has been shown.

\begin{theorem}[\cite{VN04}]\label{Th:LocalClifford} Two graph states $\ket{G}$ and $\ket{G'}$ on $n$ qubits defined by adjacency matrices $\theta$ and $\theta'$ are related via a local Clifford operation iff there exist diagonal binary matrices $A,B,C,D\in M_{n\times n}$ satisfying

\begin{equation}
AD+BC=\one
\label{eq:conditionclifford}
\end{equation}

such that

\begin{equation}
0=\theta' C\theta+ A\theta+\theta' D+B.
\label{consforclifford}
\end{equation}
\end{theorem}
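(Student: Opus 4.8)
The plan is to translate the statement ``$\ket{G}$ and $\ket{G'}$ are related by a local Clifford operation'' into a linear-algebraic condition over $\mathbb{F}_2$ by working with the binary (symplectic) representation of the stabilizer formalism. Recall that the stabilizer of the graph state $\ket{G}$ with adjacency matrix $\theta$ is generated by the canonical generators $S_{(j)} = X_j \bigotimes_{k \in N_j} Z_k$, which in the binary picture corresponds to the $2n \times n$ check matrix $\binom{\one}{\theta}$ (the top block records the $X$-part, the bottom block the $Z$-part, of each generator). A local Clifford operation acts on the single-qubit $(X,Z)$ labels by a block-diagonal symplectic matrix $Q = \begin{pmatrix} A & B \\ C & D\end{pmatrix}$ with $A,B,C,D$ diagonal; the symplectic condition $Q^T J Q = J$ for the standard symplectic form reduces, because the blocks are diagonal, exactly to $AD + BC = \one$ (this is where Eq.~(\ref{eq:conditionclifford}) comes from, using $B^T=B$ etc.\ for diagonal matrices and that diagonal matrices over $\mathbb{F}_2$ commute).

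Next I would spell out what it means for $Q$ to map the stabilizer of $\ket{G}$ onto the stabilizer of $\ket{G'}$. Acting with $Q$ on the generating set produces a new check matrix $Q \binom{\one}{\theta} = \binom{A + B\theta}{C + D\theta}$. This is a valid generating matrix (i.e., its columns are independent) precisely when the block structure is nondegenerate; then the state it stabilizes is the graph state $\ket{G'}$ with adjacency matrix $\theta'$ iff, after an invertible change of generators $M \in GL(n,\mathbb{F}_2)$ (which does not change the stabilizer group, only the generating set), we can bring $Q\binom{\one}{\theta}$ into the canonical form $\binom{\one}{\theta'}$. That is, we need $(A + B\theta) M = \one$ and $(C + D\theta) M = \theta'$ for some invertible $M$; equivalently $M = (A+B\theta)^{-1}$ must exist and $\theta' = (C + D\theta)(A + B\theta)^{-1}$. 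Multiplying the second relation on the right by $(A+B\theta)$ and rearranging over $\mathbb{F}_2$ (where $+$ is its own inverse) gives $\theta'(A + B\theta) + (C + D\theta) = 0$, i.e.\ $\theta' B \theta + \theta' A + D\theta + C = 0$; renaming the four diagonal matrices ($A \leftrightarrow A$, $B \leftrightarrow C$, $C \leftrightarrow D$, $D \leftrightarrow B$) to match the paper's labeling yields exactly Eq.~(\ref{consforclifford}): $\theta' C \theta + A\theta + \theta' D + B = 0$. One should also check that the invertibility requirement on $A + B\theta$ (in the original labeling) is automatically encoded, or can be absorbed: this is essentially the content of Theorem~2 of \cite{VN04}, and the symplectic condition $AD+BC=\one$ plus symmetry of $\theta,\theta'$ guarantees the relevant block is invertible.

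For the converse direction I would start from diagonal binary $A,B,C,D$ satisfying Eqs.~(\ref{eq:conditionclifford})--(\ref{consforclifford}), build the symplectic matrix $Q$, verify it is a genuine element of the local Clifford group modulo Pauli corrections (the diagonal-block symplectic matrices over $\mathbb{F}_2$ are in one-to-one correspondence with local Clifford operations up to Pauli factors and phases, which are irrelevant here), and then check that $Q$ maps the check matrix of $\ket{G}$ to one equivalent to that of $\ket{G'}$ under right multiplication by an invertible $M$, using Eq.~(\ref{consforclifford}) to verify the $Z$-block comes out as $\theta' \cdot (X\text{-block})$. The main obstacle I expect is the bookkeeping around invertibility and the choice of generator basis: one must argue carefully that $A+B\theta$ (suitably relabeled) is invertible — so that the transformed stabilizer really is a \emph{graph} state and not some other stabilizer state — rather than merely that the algebraic identity holds. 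This is handled in \cite{VN04} by exploiting the symplectic relation together with the symmetry of the adjacency matrices, and I would invoke that argument rather than reprove it. Everything else is routine linear algebra over $\mathbb{F}_2$ once the symplectic dictionary is set up.
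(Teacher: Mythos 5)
The paper does not prove this theorem itself --- it is imported from \cite{VN04}, and the paper only recapitulates the binary (symplectic) representation underlying it; your proposal reconstructs exactly that standard argument, so it is essentially the same approach and is correct in substance. Two small remarks: the explicit relabeling you give ($B\to C$, $C\to D$, $D\to B$) is not the right dictionary between your X-on-top convention and the paper's Z-on-top convention --- the correct one is the double swap $A\leftrightarrow D$, $B\leftrightarrow C$, coming from conjugating $Q$ by $\bigl(\begin{smallmatrix}0&\one\\\one&0\end{smallmatrix}\bigr)$, which does reproduce Eq.~(\ref{consforclifford}) verbatim. Also, the invertibility worry in the converse direction needs no appeal to \cite{VN04}: Eq.~(\ref{consforclifford}) says precisely that $Q\bigl(\begin{smallmatrix}\theta\\\one\end{smallmatrix}\bigr)=\bigl(\begin{smallmatrix}\theta'\\\one\end{smallmatrix}\bigr)(C\theta+D)$, and since the left-hand side has full column rank $n$ (as $Q$ is invertible), $C\theta+D$ is automatically invertible, so the transformed generators are a legitimate change of basis of the stabilizer of $\ket{G'}$.
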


All computations are carried out over $\mathbb{Z}_2$. To show this theorem the authors make use of the fact that the stabilizer formalism has a representation in terms of binary matrices 
 \cite{NM04, GoDa97}. In this representation a Pauli operator $p\in \mathcal{P}_1$ corresponds to the following $2\times 1$ matrices, which we denote by $b(p)$,
\bea
\begin{split}\ b(\one)= \begin{pmatrix}0\\0\end{pmatrix},\ b(X)=\begin{pmatrix}0\\1\end{pmatrix},\\ b(Y)= \begin{pmatrix}1\\1\end{pmatrix},\ b(Z)= \begin{pmatrix}1\\0\end{pmatrix}.
\end{split}
\eea
Analogously an element $p\in\mathcal{P}_n$ is represented by a $2n\times 1$ matrix $b(p)$ which is defined as follows

\bea
b(p)_{i,1}&=b(p_i)_{1,1}\\
b(p)_{i+n,1}&=b(p_i)_{2,1},
\eea
for $i\in \{1,\ldots,n\}$. Note that this representation does not contain information about additional phases. Furthermore, in this representation a graph state can be associated to a $2n\times 2n$ matrix where the columns represent a set of generators for its stabilizer. Using the canonical set of generators this matrix is $(\theta,\one)$ where $\theta$ is the adjacency matrix of the corresponding graph.

In this representation the action of a local Clifford operator on a graph state corresponds to the multiplication of $(\theta,\one)$ (from the left) with a $2n\times 2n$ matrix 
\begin{equation}
Q=\begin{pmatrix}
	A&B\\
	C&D
\end{pmatrix},
\label{eq:fromclifford}
\end{equation}
where $A,B,C,D$ are $n\times n$ diagonal matrices. The Clifford operation applied to qubit $j$ is given by the submatrix $Q_j=((A_{jj},B_{jj}),(C_{jj},D_{jj}))$ and, as it is invertible, it has to satisfy $\text{det } Q_j=1$. As mentioned before, phases are not represented in this picture, that is e.g. $X$ and $ZXZ=-X$ have the same representation. This implies that Clifford operators which are related to each other via local Pauli operators are mapped to the same matrix $Q$. As $|\mathcal{C}_1\backslash \mathcal{P}_1|=6$, a single qubit Clifford operator is mapped to one out of $6$ different matrices $Q$ by this representation. Table \ref{tablebinaryreal} shows the matrix $Q$ together with a representative of the corresponding equivalence class in $\mathcal{C}_1\backslash \mathcal{P}_1$.
\begin{table}[h]

\centering

\begin{tabular}{|c||c|c|c|c|c|c|}

\hline
\renewcommand{\arraystretch}{0.5}
\parbox[c][0.9 cm][c]{0.04\textwidth}{$b(C)$}&

$\begin{pmatrix}
	1&0\\
	0&1
\end{pmatrix}$&

$\begin{pmatrix}
	1&1\\
	0&1
\end{pmatrix}$&

$\begin{pmatrix}
	0&1\\
	1&0
\end{pmatrix}$&

$\begin{pmatrix}
	1&0\\
	1&1
\end{pmatrix}$&

$\begin{pmatrix}
	0&1\\
	1&1
\end{pmatrix}$&

$\begin{pmatrix}
	1&1\\
	1&0
\end{pmatrix}$\\
\hline
\parbox[c][0.6 cm][c]{0.04\textwidth}{$C$} &

$\one$&

$e^{i\frac{\pi}{4}Z}$&

$e^{i\frac{\pi}{4}Y}$&

$e^{i\frac{\pi}{4}X}$&

$e^{i\frac{\pi}{4}Z}e^{i\frac{\pi}{4}Y}$&

$e^{i\frac{\pi}{4}X}e^{i\frac{\pi}{4}Y}$\\\hline

\end{tabular}
\caption{Local Clifford operations of order $3$ (last two) and $4$ in the binary and standard representation. \label{tablebinaryreal}}
\end{table}

We utilize now Theorem \ref{Th:LocalClifford} together with the binary representation explained above to determine the LC symmetries of an arbitrary graph state. For our purpose we consider $\theta=\theta'$. As we will see, solving Eq. (\ref{consforclifford}) leads to two possible cases. The first characterizes all LC symmetries of order 3 and the second characterizes LC symmetries of order 4, as stated in the subsequent theorems.

As explained above, the binary representation of the stabilizer does not allow to determine local Pauli operators (as they only change the sign of elements of the stabilizer under conjugation). However, note that for any $U$ with $U \ket{G}\propto P \ket{G}$ it holds that $ Z^{\vec {k}} U \ket{G} \propto  \ket{G}$, where $\vec{k}$ is such that $P \ket{G}\propto Z^{\vec {k}} \ket{G}$ \cite{HeMa06}. Thus, to determine the local symmetry including local Pauli operators we choose a representative of $U$ found as explained below and check whether $Z^{\vec{k}}U$ is a symmetry for some $\vec{k}$. Note that for any $U$ there exists a unique $\vec{k}$ such that $Z^{\vec{k}}U\in U_G$. Otherwise there would exist an element of $\mathcal{S}_G$ that is just a tensor product of $Z$ operators and $\one$ (which is not possible as can be easily verified considering the canonical generators). Let us now state the theorem which identifies graph states with LC symmetries of order $3$.
\begin{theorem}
Let $\ket{G}$ be a graph state on $n$ qubits and let $\theta$ be the adjacency matrix of the corresponding graph. Then, there exists some $U\in U_G$ with $U_j\in\mathcal{C}_1^3$ (for some $j\in\{1,\ldots,n\}$) iff there exists $d\le n$ and an ordering of the vertices such that
\bea 
\theta^2=\begin{pmatrix}
           \theta_{00}+\one & 0 \\
           0 & \theta_{11}+\one
         \end{pmatrix},\label{eq:lc31}
\eea
where $\theta=((\theta_{00},\theta_{01}),(\theta_{10},\theta_{11}))$, $\theta_{00}\in M_{d\times d}$, and $\theta_{11}\in M_{(n-d)\times (n-d)}$. Furthermore, the solutions to Eq. (\ref{eq:lc31}) correspond uniquely to LC symmetries of order $3$ of $\ket{G}$ (up to multiplication (from left and right) by elements of the stabilizer). More precisely, given an ordering for which a solution exists, the symmetry is given by $V^{\otimes d} \otimes W^{\otimes n-d}$, where $V=e^{\pm i \pi/4 X}e^{\pm i \pi/4 Y}$ and $W=e^{\pm i \pi/4 Z}e^{\pm i \pi/4 Y}$, for some choice of signs of the phases (independently on each qubit). Moreover, it holds that $(-1)^{k_1}Y^{\otimes n},(-1)^{k_2}X^{\otimes d}\otimes Z^{\otimes n-d}\in\mathcal{S}_G$ for some $k_1,k_2\in \{0,1\}$ and that $d\neq 0,n$.
\label{conditionlc3}
\end{theorem}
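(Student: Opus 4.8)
The plan is to run the whole statement through the binary/symplectic representation and reduce it to a single matrix identity, then read the extra structural facts off that identity. First I would use Lemma~\ref{LC3oneLC3all}: if one tensor factor of a symmetry lies in $\mathcal{C}_1^3$ then all of them do and $U\in\mathcal{C}_n$, so it suffices to ask which graph states admit a local Clifford symmetry \emph{all} of whose one--qubit factors have order $3$. In the binary picture of Table~\ref{tablebinaryreal} the Pauli prefactor is invisible, and one checks that $\mathcal{C}_1^3$ is exactly the set of one--qubit Cliffords whose $2\times 2$ binary matrix has order $3$; there are precisely two such matrices, $\left(\begin{smallmatrix}1&1\\1&0\end{smallmatrix}\right)$ and $\left(\begin{smallmatrix}0&1\\1&1\end{smallmatrix}\right)$, realised respectively by $e^{i\pi/4 X}e^{i\pi/4 Y}$ and $e^{i\pi/4 Z}e^{i\pi/4 Y}$. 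Hence a candidate symmetry is represented by a matrix $Q=\left(\begin{smallmatrix}A&B\\C&D\end{smallmatrix}\right)$ with $B=C=\one$ and $D=\one+A$, where the diagonal binary matrix $A$ records which qubits carry a factor of the first (``$V$'') type ($A_{jj}=1$) and which carry one of the second (``$W$'') type ($A_{jj}=0$). Since $A^2=A$ over $\mathbb{Z}_2$, the condition~(\ref{eq:conditionclifford}) reads $A(\one+A)+\one=\one$ and holds automatically, so the only real requirement is~(\ref{consforclifford}) with $\theta'=\theta$, i.e. $\theta^2+A\theta+\theta A+\theta+\one=0$.

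The second step is to recognize this as~(\ref{eq:lc31}). Because $(A\theta+\theta A)_{jk}=(A_{jj}+A_{kk})\theta_{jk}$, ordering the vertices so that the $V$--type ones come first makes $A=\mathrm{diag}(\one_d,0_{n-d})$, so $A\theta+\theta A$ is $\theta$ with its two diagonal blocks set to zero; substituting and using $\theta_{10}=\theta_{01}^{T}$ turns the identity into $\theta^2=\left(\begin{smallmatrix}\theta_{00}+\one&0\\0&\theta_{11}+\one\end{smallmatrix}\right)$, which is~(\ref{eq:lc31}). This already gives both implications at the level of the binary representation. To upgrade a binary solution to a genuine unitary symmetry of the claimed shape, I would pick any local Clifford $\tilde U$ whose binary matrix is such a $Q$ (so $\tilde U_j$ is a one--qubit Clifford fixed by $\left(\begin{smallmatrix}1&1\\1&0\end{smallmatrix}\right)$ for $j\le d$ and by $\left(\begin{smallmatrix}0&1\\1&1\end{smallmatrix}\right)$ for $j>d$; up to phases these two cosets are exactly $\{e^{\pm i\pi/4 X}e^{\pm i\pi/4 Y}\}$ and $\{e^{\pm i\pi/4 Z}e^{\pm i\pi/4 Y}\}$). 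By~(\ref{consforclifford}) with $\theta'=\theta$, $\tilde U$ maps $\ket{G}$ to a graph state with the same adjacency matrix, hence $\tilde U\ket{G}\propto Z^{\vec k_0}\ket{G}$ for a (unique) $\vec k_0\in\{0,1\}^n$, so $Z^{\vec k_0}\tilde U\in U_G$; since multiplying a one--qubit factor by a Pauli stays (up to phase) inside the same one of those two cosets, $Z^{\vec k_0}\tilde U$ is again of the form $V^{\otimes d}\otimes W^{\otimes n-d}$ with $V,W$ as stated. Conversely, a genuine order--$3$ local Clifford symmetry conjugates $\mathcal{S}_G$ to itself, so its binary matrix solves~(\ref{consforclifford}) with $\theta'=\theta$ and the chain runs backwards.

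Finally, the side statements come out of~(\ref{eq:lc31}) directly. If $d=0$ or $d=n$ the identity degenerates to $\theta^2=\theta+\one$, so $\theta$ is invertible with $\theta^{-1}=\theta+\one$; but the inverse of an invertible symmetric matrix with zero diagonal over $\mathbb{Z}_2$ again has zero diagonal, whereas $\theta+\one$ has all ones on the diagonal --- a contradiction, so $d\ne 0,n$. Comparing diagonals in~(\ref{eq:lc31}) gives $(\theta^2)_{jj}=|N_j|\equiv 1\pmod 2$ for all $j$, i.e. $\theta\vec 1=\vec 1$, so the binary vector $(\theta\vec 1,\vec 1)=(\vec 1,\vec 1)$ of $Y^{\otimes n}$ lies in the column span of $(\theta,\one)$; hence $(-1)^{k_1}Y^{\otimes n}\in\mathcal{S}_G$ (concretely the product of all canonical generators). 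A short computation with the blocks of~(\ref{eq:lc31}) yields $\theta_{00}\vec 1=0$ and $\theta_{10}\vec 1=\vec 1$, which is precisely the condition that the binary vector of $X^{\otimes d}\otimes Z^{\otimes n-d}$ lies in the column span of $(\theta,\one)$, so $(-1)^{k_2}X^{\otimes d}\otimes Z^{\otimes n-d}\in\mathcal{S}_G$ as well (equivalently, conjugate the stabilizer element $\propto Y^{\otimes n}$ by a suitable power of the order--$3$ symmetry). The correspondence is one--to--one because the two order--$3$ one--qubit cosets are distinct, so an order--$3$ local Clifford symmetry determines the partition of the vertices into $V$-- and $W$--type (hence $d$ and the ordering), and is then fixed up to multiplication by elements of $\mathcal{S}_G$.

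I expect the main obstacle to be the last link of the first two paragraphs: turning a solution of the binary equation into an \emph{honest} unitary symmetry of exactly the form $V^{\otimes d}\otimes W^{\otimes n-d}$, i.e. keeping control of the Pauli prefactors and the global phase that the binary representation discards. The rest is either block algebra (identifying the matrix identity with~(\ref{eq:lc31})) or short structural arguments (the alternating--matrix observation for $d\ne 0,n$ and reading the two distinguished stabilizer elements off~(\ref{eq:lc31})).
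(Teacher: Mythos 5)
Your proposal is correct and its backbone coincides with the paper's proof: both pass to the binary representation of Theorem~\ref{Th:LocalClifford} with $\theta'=\theta$, identify the order--$3$ case with $B=C=\one$, $D=A+\one$ (so that Eq.~(\ref{eq:conditionclifford}) is automatic), reduce Eq.~(\ref{consforclifford}) to $\theta^2+A\theta+\theta A+\theta+\one=0$, choose the ordering $A=\mathrm{diag}(\one_d,0)$ to obtain Eq.~(\ref{eq:lc31}), and absorb the unknown Pauli prefactor into the signs of the $\pi/4$ phases using $e^{i\pi/4\sigma}=i\sigma e^{-i\pi/4\sigma}$. You diverge on two of the auxiliary claims, and both of your alternatives check out. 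For $d\neq 0,n$ the paper argues that otherwise $\pm Z^{\otimes n}\in\mathcal{S}_G$, which is impossible; you instead note that $d=0,n$ forces $\theta(\theta+\one)=\one$, and the inverse of an invertible symmetric zero--diagonal (alternating) matrix over $\mathbb{Z}_2$ is again alternating, contradicting the all--ones diagonal of $\theta+\one$ --- a clean, purely linear--algebraic argument. For $(-1)^{k_2}X^{\otimes d}\otimes Z^{\otimes n-d}\in\mathcal{S}_G$ the paper uses a group--theoretic step ($\pm C_1^2\in U_G$ together with Observation~\ref{nootherpauli}), whereas you claim a direct block computation giving $\theta_{00}\vec 1=0$ and $\theta_{01}^T\vec 1=\vec 1$; this does work --- applying the $(0,0)$ and $(0,1)$ blocks of Eq.~(\ref{eq:lc31}) to the all--ones vector and combining with the diagonal identities $\theta_{00}\vec 1+\theta_{01}\vec 1=\vec 1$ and $\theta_{01}^T\vec 1+\theta_{11}\vec 1=\vec 1$ yields exactly these relations, which say that $\prod_{j\le d}S_{(j)}\propto X^{\otimes d}\otimes Z^{\otimes n-d}$ --- though you should spell that computation out, since it is the one step you leave implicit. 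The only presentational difference in the main line is that you invoke Lemma~\ref{LC3oneLC3all} up front to force every tensor factor into $\mathcal{C}_1^3$, while the paper obtains the same dichotomy from the $[\theta,A+D]=0$ case analysis (and notes that this reproves the lemma); either order is fine.
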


According to Theorem \ref{conditionlc3} any LC symmetry of order $3$ of a graph state has to be a product of square roots of two elements of the stabilizer (up to local Pauli operators), which up to permutations of the qubits are of the form $Y^{\otimes n},X^{\otimes d}\otimes Z^{\otimes n-d}$. Note that, as mentioned before, using the binary representation of the stabilizer state, the symmetries can only be determined up to local Pauli operators. However, as stated in the theorem, these Pauli operators do not need to be computed, as it can be shown that their effect can be compensated by choosing the phases properly. 
More precisely, it can be shown that if $P V^{\otimes d} \otimes W^{\otimes n-d} \in U_G$, then, there exists an assignment of the phases in the exponent of the LC operations of order $3$, such that the resulting LC operator of order $3$ is a symmetry with no additional Pauli operator (see proof of Theorem \ref{conditionlc3}). Hence, the local Pauli operators are not required in this case. Before proving this theorem, let us state the necessary and sufficient conditions for the existence of a LC symmetry of order $4$.

\begin{theorem}
Let $\ket{G}$ be a graph state on $n$ qubits and let $\theta$ be the adjacency matrix of the corresponding graph. Then, there exists $U\in U_G$ such that $U\in\mathcal{C}_n$, $U\not\in\mathcal{S}_G$ and $U^4=\one$ iff there exists $d\le n$ and an ordering of the vertices such that

\bea 
(\theta_{00}+X)^2=0\label{eq:lc41}\\
(\theta_{00}+X) \theta_{01}=0\\
\theta_{01}^T\theta_{01}+Y&=0\label{eq:lc43},
\eea

where $X$ and $Y$ are diagonal matrices with $X_{ii}=\sum_{j\le d}\theta_{ij}$, $Y_{ii}=\sum_{j\le d}\theta_{i+d,j}$, $\theta=((\theta_{00},\theta_{01}),(\theta_{10},\theta_{11}))$, $\theta_{00}\in M_{d\times d}$ and $\theta_{11}\in M_{(n-d)\times (n-d)}$.  Furthermore, the solutions to Eq. (\ref{eq:lc41}) to (\ref{eq:lc43}) correspond uniquely to a symmetry $U$ (up to local Pauli operators). Moreover, if $\bigotimes_j e^{i\pi/4 \sigma_j}\otimes \one$ is an LC symmetry of $\ket{G}$ (up to local Pauli operators) then $(-1)^k\bigotimes_j \sigma_j\in\mathcal{S}_G$ for some $k\in\{0,1\}$.
\label{conditionlc4}
\end{theorem}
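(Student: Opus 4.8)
The plan is to translate the statement into the binary picture and apply Theorem~\ref{Th:LocalClifford} with $\theta'=\theta$. By the construction recalled after that theorem, a local Clifford operator $U$ with $U\ket{G}\propto P\ket{G}$ for some $P\in\mathcal{P}_n$ is encoded by a block matrix $Q=\left(\begin{smallmatrix}A&B\\C&D\end{smallmatrix}\right)$ with diagonal $A,B,C,D$ satisfying $AD+BC=\one$ and $\theta C\theta+A\theta+\theta D+B=0$ over $\mathbb{Z}_2$, and the honest symmetry is then $Z^{\vec k}U\in U_G$, where $\vec k$ is the unique vector with $P\ket{G}\propto Z^{\vec k}\ket{G}$. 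The first step is to identify which single-qubit blocks $Q_j$ can occur in an LC symmetry with $U^4=\one$. The assignment $c\mapsto Q_j$ carries $\mathcal{C}_1/\mathcal{P}_1\cong S_3$ isomorphically onto $SL_2(\mathbb{Z}_2)$, and the order-$3$ Cliffords map exactly to the two $3$-cycles $\left(\begin{smallmatrix}0&1\\1&1\end{smallmatrix}\right),\left(\begin{smallmatrix}1&1\\1&0\end{smallmatrix}\right)$, which are precisely the matrices of $SL_2(\mathbb{Z}_2)$ with $A_{jj}\neq D_{jj}$ (cf.\ Table~\ref{tablebinaryreal}); since $U^4\propto\one$ iff no tensor factor has order $3$, the condition $U^4=\one$ is equivalent to $A=D$, and the blocks with $A_{jj}=D_{jj}$ come, up to Pauli factors, from $\one$, $e^{i\pi/4X}$, $e^{i\pi/4Y}$ and $e^{i\pi/4Z}$.

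Next I would set $A=D$ and write $A=\one+A'$. Then $AD+BC=\one$ reads $BC=A'$, and, since $A\theta+\theta D=A'\theta+\theta A'$ over $\mathbb{Z}_2$, the Clifford equation becomes $\theta C\theta+A'\theta+\theta A'+B=0$. Reading $A',B,C$ off Table~\ref{tablebinaryreal} shows that $C$ is supported exactly on the qubits carrying an $e^{i\pi/4X}$- or $e^{i\pi/4Y}$-type factor; permuting the vertices so that these occupy the first $d$ positions, one gets $C=\mathrm{diag}(\one_d,0)$, $A'=\mathrm{diag}(A'_d,0)$ with $A'_d$ the indicator of the $Y$-type qubits among the first $d$, and $B=\mathrm{diag}(A'_d,B'')$ with $B''$ the indicator of the $Z$-type qubits among the last $n-d$. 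Expanding the equation in the $d/(n-d)$ block decomposition $\theta=\left(\begin{smallmatrix}\theta_{00}&\theta_{01}\\\theta_{10}&\theta_{11}\end{smallmatrix}\right)$ and recognising $\theta_{00}^2+A'_d\theta_{00}+\theta_{00}A'_d+A'_d$ as the perfect square $(\theta_{00}+A'_d)^2$ yields $(\theta_{00}+A'_d)^2=0$, $(\theta_{00}+A'_d)\theta_{01}=0$ and $\theta_{01}^T\theta_{01}=B''$. The decisive point is that over $\mathbb{Z}_2$ the diagonal of a symmetric square is the vector of row sums: the diagonal of $(\theta_{00}+A'_d)^2$ equals $X+A'_d$ with $X_{ii}=\sum_{j\le d}\theta_{ij}$ the $i$-th row sum of $\theta_{00}$, and its vanishing forces $A'_d=X$; likewise the diagonal matrix $B''$ must equal $Y$ with $Y_{ii}=\sum_{j\le d}\theta_{i+d,j}=(\theta_{01}^T\theta_{01})_{ii}$, so the three relations become exactly the system~(\ref{eq:lc41})--(\ref{eq:lc43}). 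Conversely, a solution of that system supplies $A'_d=X$, $B''=Y$, hence valid diagonal $A,B,C,D$, so the two are equivalent; and since $(d,\text{ordering})$ fixes $A',B''$, hence $Q$, hence the $\mathcal{C}_1/\mathcal{P}_1$-class of every tensor factor, it determines $U$ uniquely up to local Pauli operators, which also absorb the sign ambiguity in the exponents.

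It then remains to lift a binary solution to a genuine symmetry and check its properties. Theorem~\ref{Th:LocalClifford} provides $U\in\mathcal{C}_n$ of the stated block form and the unique $\vec k$ with $Z^{\vec k}U\in U_G$. I would verify that each factor $Z_j^{k_j}U_j$ has order $1$, $2$ or $4$ (using that $e^{i\pi/4\sigma}$ has order $4$, that $Ze^{i\pi/4Z}\propto e^{3i\pi/4Z}$ still has order $4$, and that $Ze^{i\pi/4X}$ and $Ze^{i\pi/4Y}$ square to a phase), so $(Z^{\vec k}U)^4=\one$; and that $Z^{\vec k}U\notin\mathcal{S}_G$ whenever the solution is nontrivial (i.e.\ $d\ge1$ or $Y\neq0$), because then at least one tensor factor is a genuine non-Pauli Clifford while $\mathcal{S}_G\subset\mathcal{P}_n$. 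Finally, for the last assertion, squaring a symmetry of the form $U=\bigotimes_je^{i\pi/4\sigma_j}\otimes\one$ gives, up to a phase, the Pauli operator $\bigotimes_j\sigma_j\otimes\one$; this again lies in $U_G$, so Observation~\ref{nootherpauli} forces it to agree, up to a fourth root of unity, with an element of $\mathcal{S}_G$, and Hermiticity of Pauli tensor products and of stabilizer elements reduces the phase to $\pm1$, giving $(-1)^k\bigotimes_j\sigma_j\in\mathcal{S}_G$.

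I expect the main obstacle to be the passage between the binary Clifford equations restricted to $A=D$ and the intrinsic system~(\ref{eq:lc41})--(\ref{eq:lc43}): one has to argue that, after permuting the vertices, every order-at-most-$4$ block configuration reduces to the two-block shape with the $X$- and $Y$-type qubits first and the $Z$- and identity-type qubits last (in particular ruling out configurations built solely from $Z$- and identity-type blocks, so that nothing is lost), and then carry out the $\mathbb{Z}_2$ block algebra carefully enough that the a priori free indicators $A'_d$ and $B''$ get pinned to the graph's degree matrices $X$ and $Y$ through the diagonal parts of the equations. The order computations, the claim $Z^{\vec k}U\notin\mathcal{S}_G$, and the last assertion are then routine.
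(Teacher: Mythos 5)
Your treatment of the equivalence and of the uniqueness claim is essentially the paper's own proof: the paper likewise works in the binary representation of Theorem \ref{Th:LocalClifford} with $\theta'=\theta$, reduces to the case $A=D$ (it derives $A+D\in\{0,\one\}$ from $[\theta,A+D]=0$ and connectedness rather than from the orders of the tensor factors, but this is cosmetic), uses $AD+BC=\one$ to get $BC=A+\one$, and pins the free diagonal matrices to the degree data $X,Y$ through the diagonal of $\theta C\theta+BC\theta+\theta BC+B=0$ --- exactly your observation that only $B$ and $\theta C\theta$ contribute to the diagonal. That part, including the block algebra and the converse direction, is correct.

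The gap is in the last clause. You square the symmetry and invoke Observation \ref{nootherpauli}, but the hypothesis is only that $\bigotimes_j e^{i\pi/4\sigma_j}\otimes\one$ is a symmetry \emph{up to local Pauli operators}: the actual element of $U_G$ is $V=P\bigl(\bigotimes_j e^{i\pi/4\sigma_j}\otimes\one\bigr)$, and for order-$4$ symmetries the Pauli correction cannot in general be absorbed into sign choices (the paper stresses this point after Theorem \ref{conditionlc3}, in contrast to the order-$3$ case). On a qubit where $P_j$ anticommutes with $\sigma_j$ one has $(P_je^{i\pi/4\sigma_j})^2=P_j^2\,e^{-i\pi/4\sigma_j}e^{i\pi/4\sigma_j}=\pm\one$, not $\propto\sigma_j$, so $V^2$ is proportional to $\bigotimes_{j\in A}\sigma_j\otimes\one$ supported only on the set $A$ of qubits where $P_j$ commutes with $\sigma_j$. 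Observation \ref{nootherpauli} then places this \emph{smaller} Pauli operator in $\mathcal{S}_G$, which is not the claimed statement unless $A=\{1,\ldots,n\}$. The paper proves the full claim by a different, purely graph-theoretic argument: it refines the block decomposition into the four sets of qubits carrying $X$-, $Y$-, $Z$- and identity-type factors, reads off from the diagonal of Eq. (\ref{locl2}) parity conditions on the number of neighbours each qubit has in the $X$- and $Y$-blocks, and concludes that $\prod_{j\in D_1\cup D_2}S_{(j)}=(-1)^k\bigotimes_j\sigma_j\otimes\one$. You would need either that argument, or a separate handling of anticommuting Pauli corrections, to close this step.
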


We use Theorem \ref{Th:LocalClifford} to prove Theorem \ref{conditionlc3} and \ref{conditionlc4} \footnote{Note that for the symmetries in Theorem \ref{conditionlc3} we require that only one of the tensor factors is a Clifford of order $3$. However, Lemma \ref{LC3oneLC3all} implies that then all the other factors are in $\mathcal{C}_1^3$ as well.}. As we will see, we have to consider two cases, where one corresponds to the proof of Theorem \ref{conditionlc3} and the other to the proof of Theorem \ref{conditionlc4}. Using $\theta=\theta'$ in Eq. (\ref{consforclifford}) leads to
\begin{equation}
0=\theta C\theta+ A\theta+\theta D+B.
\label{eq:conditioncliffordsymmetry}
\end{equation}
As $\theta$ is symmetric we also have
\begin{equation}
0=\theta C\theta+ \theta A+D\theta +B.
\end{equation}
Adding both equations (modulo 2) leads to
\begin{equation}
[\theta,A+D]=0.
\label{commute}
\end{equation}
As $\theta$ corresponds to a connected graph, i.e. in each column and each row there exists at least one non--vanishing entry, the last equation is fulfilled iff $A+D=0,\one$. We treat case (i), where $A=D+\one$, which corresponds to LC of order 3 (Theorem \ref{conditionlc3}) and case (ii), where $A=D$, which corresponds to LC of order 4 (Theorem \ref{conditionlc4}), separately (see also Table \ref{tablebinaryreal}).

\begin{proof}[Proof to Theorem \ref{conditionlc3}]
Let us first consider case (i), i.e. $A=D+\one$. Inserting this in Eq. (\ref{eq:conditionclifford}) leads to $BC=\one$ or equivalently to $B=C=\one$. Hence, the Clifford operator (of order 3) applied to $\ket{G}$ is of the form
	\begin{equation}
	Q=\begin{pmatrix}
		A&\one\\
		\one& A+\one
	\end{pmatrix}.
	\end{equation}
As can be easily seen (see Table \ref{tablebinaryreal}), this implies that there exists an LC of order 3 iff all parts of the symmetry are LCs of order 3. This provides an alternative proof to the fact that LC symmetries, which contain factors of order $3$, have a Clifford operator of order $3$ on every qubit (see Lemma \ref{LC3oneLC3all}). Using these findings in Eq. (\ref{eq:conditioncliffordsymmetry}) we get
	\begin{equation}
	0=\theta^2+A\theta+\theta A+\theta+\one\label{locl1}.
	\end{equation}
Choosing now an ordering of the vertices in the graph such that $A=((\one_{d\times d},0),(0,0))$ we find
	\begin{equation}
	\begin{split}
	0=&\begin{pmatrix}
		\theta_{00}&\theta_{01}\\
		\theta_{01}^T&\theta_{11}
	\end{pmatrix}^2+\begin{pmatrix}
		\theta_{00}&\theta_{01}\\
		0&0
	\end{pmatrix}\\&+\begin{pmatrix}
		\theta_{00}&0\\
		\theta_{01}^T&0
	\end{pmatrix}+\begin{pmatrix}
		\theta_{00}&\theta_{01}\\
		\theta_{01}^T&\theta_{11}
	\end{pmatrix}+\one.\label{locl5}\end{split}
  \end{equation}
As the computation is modulo 2, we obtain the necessary and sufficient condition stated in Theorem \ref{conditionlc3}. Up to local Pauli operators the operator $V^{\otimes d} \otimes W^{\otimes n-d}$ with $V=e^{i \pi/4 X}e^{i \pi/4 Y}$ and $W=e^{i \pi/4 Z}e^{i \pi/4 Y}$ is the operator corresponding to $Q$ (see Table \ref{tablebinaryreal}). As mentioned above, for any $U$ such that $U\ket{G}\propto P\ket{G}$ the Pauli operator $P$ is unique up to multiplication with elements of the stabilizer. Thus, we conclude that $Q$ uniquely corresponds to an LC symmetry of order $3$ up to multiplication with elements of the stabilizer. Next we show that there exists a choice for the signs in the exponent of the symmetry such that $P$ can be chosen to be the identity. Let $P(V^{\otimes d} \otimes W^{\otimes n-d})\in U_G$. Every local factor of this symmetry is of the form $P_j\exp(i(-1)^{k_1^j}\pi/4 \sigma_1^j)\exp(i(-1)^{k_j^2}\pi/4 \sigma_2^j)$, $k_1^j=k_2^j=0$ for all $j$. Using that $\exp(i\pi/4 \sigma)=i\sigma\exp(-i\pi/4 \sigma)$ for $\sigma\in \{X,Y,Z\}$ it is clear that by choosing appropriate values for the variables $k_1^j$ and $k_2^j$ one can obtain the additional factor $P_j$. Using this for every qubit the claim follows.

Next we show that $Y^{\otimes n},X^{\otimes d}\otimes Z^{\otimes n-d}\in\mathcal{S}_G$. Observe that the only operators in Eq. (\ref{locl1}) with nonzero entries on the diagonal are $\one$ and $\theta^2$. Thus, for every qubit $j$ it has to hold that $\sum_k\theta_{jk}\theta_{jk}=\sum_k\theta_{jk}=1$ which is equivalent to the statement that every qubit has an odd number of neighbours. Thus, the product of all canonical generators yields $\prod_j S_{(j)}\propto Y^{\otimes n}$ and consequently $(-1)^{k_1} Y^{\otimes n}\in \mathcal{S}_G$ for some $k_1\in \{0,1\}$. To show that $X^{\otimes d}\otimes Z^{\otimes n-d}\in\mathcal{S}_G$ let us write the operator corresponding to the symmetry $Q$ as $PC_1C_2$ where $P\in\mathcal{P}_n$, $C_1=(e^{i \pi/4 X})^{\otimes d}\otimes (e^{i \pi/4 Z})^{\otimes (n-d)}$ and $C_2=(e^{i \pi/4 Y})^{\otimes n}$. Here, $P$ is chosen such that the sign of the exponent of all tensor factors is positive. Since $(-1)^{k_1}Y^{\otimes n}\in \mathcal{S}_G$ and $[Y^{\otimes n},C_2]=0$ also $(-1)^{k_1}PC_1Y^{\otimes n}C_2\in U_G$ and $PC_1Y^{\otimes n}C_2(P C_1 C_2)^\dagger Y^{\otimes n}= P C_1Y^{\otimes n}C_1^\dagger P^\dagger Y^{\otimes n} =\pm C_1^2\in U_G$ where the last equation holds as $\pm C_1^2\in\mathcal{P}_n$. Thus, $\pm C_1^2\ket{G}=\ket{G}$. Due to Observation \ref{nootherpauli} we have that $C_1^2\in \mathcal{S}_G$. Using the canonical generators we find that $\pm C_1^2=(-1)^{k_2}X^{\otimes d}Z^{\otimes(n-d)}=\prod _{j\le d} S_{(j)}$ for $k_2\in \{0,1\}$ has to hold. Thus, we have shown that $(-1)^{k_2}X^{\otimes d}\otimes Z^{\otimes n-d}\in\mathcal{S}_G$.

Finally, it remains to show that we can only find solutions for $d\neq 0,n$. If $d=0$ then $(-1)^{k_2} Z^{\otimes n}\in\mathcal{S}_G$ which is not possible as the canonical generators are of the form $S_{(j)}=X_j\otimes \bigotimes_{k\in N_j}Z_j$. Similarly, if $d=n$ then $(-1)^{k_2}X^{\otimes n}\in\mathcal{S}_G$ and $(-1)^{k_1}Y^{\otimes n}\in\mathcal{S}_G$ and thus again $(-1)^{k_1+k_2}(i)^n Z^{\otimes n}\in\mathcal{S}_G$. Thus, we conclude that $d\neq 0,n$.
\end{proof}

Let us now consider the remaining case (case(ii)) to prove Theorem \ref{conditionlc4}.

\begin{proof}[Proof to Theorem \ref{conditionlc4}]
Using $A+D=0$ and that $A$ and $D$ are diagonal matrices in Eq. (\ref{eq:conditionclifford}) gives $A^2+BC=A+BC=\one$ and, thus, the form of the Clifford operator applied to $G$ is
	\begin{equation}
	Q=\begin{pmatrix}
		BC+\one&B\\
		C&BC+\one
	\end{pmatrix}.
	\end{equation}
From Table \ref{tablebinaryreal} we see that all these symmetries correspond to LCs of order 4 (up to multiplication by local Pauli operators). Using these findings in Eq. (\ref{eq:conditioncliffordsymmetry}) we get
	\begin{equation}
	0=\theta C\theta +BC\theta+\theta BC+B\label{locl2}.
	\end{equation}
The only two summands with non-vanishing diagonal are $B$ and $\theta C\theta$ and thus we find $B_{ii}=\sum_{j}\theta_{ij}C_{jj}$. Choosing an ordering such that $C=((\one,0),(0,0))$ we find
	\begin{align}
	\begin{split}
	0=&\begin{pmatrix}
		\theta_{00}^2&\theta_{00}\theta_{01}\\
		\theta_{01}^T\theta_{00}&\theta_{01}^T\theta_{01}
	\end{pmatrix}+ \begin{pmatrix}
		X&0\\
		0&0
	\end{pmatrix}\begin{pmatrix}
		\theta_{00}&\theta_{01}\\
		\theta_{01}^T&\theta_{11}
	\end{pmatrix}\\&+\begin{pmatrix}
		\theta_{00}&\theta_{01}\\
		\theta_{01}^T&\theta_{11}
	\end{pmatrix}\begin{pmatrix}
		X&0\\
		0&0
	\end{pmatrix}+\begin{pmatrix}
		X&0\\
		0&Y
	\end{pmatrix},\end{split}\label{locl3}
	\end{align}
where we defined $B=((X,0),(0,Y))$ according to the ordering defined by choosing $C=((\one,0),(0,0))$. It is straightforward to see that the equations above are equivalent to the ones given in Theorem \ref{conditionlc4}. As mentioned above, for any $U$ such that $U\ket{G}\propto P\ket{G}$ the Pauli operator $P$ is unique up to multiplication with elements of the stabilizer. Thus, we conclude that $Q$ uniquely corresponds to an LC symmetry of order $4$ up to multiplication with elements of the stabilizer.

It remains to show that if $\bigotimes_j e^{i\pi/4 \sigma_j}\otimes \one$ is the symmetry corresponding to $Q$ (up to local Pauli operators) then $(-1)^k\bigotimes_j \sigma_j\in\mathcal{S}_G$ for some $k\in \{0,1\}$. To see this consider an ordering of the vertices such that $C=\text{diag}(\one,\one,0,0)$ and $B=\text{diag}(0,\one,\one,0)$. We denote by $D_1$, $D_2$, $D_3$ and $D_4$ the set of qubits corresponding to the respective blocks. Note that some of these sets may be empty. Furthermore let us denote the corresponding blocks of $\theta$ by $t_{jk}$ where $j,k\in\{0,1,2,3\}$ and note that $t_{jk}^T=t_{kj}$ as $\theta$ is symmetric. Using this block structure in Eq. (\ref{locl2}), the equations on the diagonal read
\bea
t_{00}^2+t_{01}t_{10}&=&0\label{eee1}\\
\one+t_{10}t_{01}+t_{11}^2&=&0\label{eee2}\\
\one+t_{20}t_{02}+t_{21}t_{12}&=&0\label{eee3}\\
t_{30}t_{03}+t_{31}t_{13}&=&0\label{eee4}.
\eea
Note that for any qubit $j\in D_1$, the corresponding operator in the symmetry is $\sigma_j=X$, for $j\in D_2$ it is $\sigma_j=Y$, for $j\in D_3$ it is $\sigma_j=Z$ and for $j\in D_4$ the symmetry acts trivial.

Let us consider Eq. (\ref{eee1}). The matrix element $(t_{00}^2)_{jj}$ is nonzero if qubit $j$ has an odd number of neighbours in $D_1$. The term $t_{01}t_{10}$ has a nonzero entry on the diagonal if the corresponding qubit $j$ has an odd number of neighbours in $D_2$. Thus, for Eq. (\ref{eee1}) to hold every qubit $j$ with $\sigma_j=X$ has to have an even number of neighbours in $D_1$ and $D_2$ (summed up). Analyzing the other equations in the same way we find that Eq. (\ref{eee2}) implies that every qubit $j$ with $\sigma_j=Y$ has to have an odd number of neighbours in $D_1$ and $D_2$, Eq. (\ref{eee3}) implies that every qubit $j$ with $\sigma_j=Z$ has to have an odd number of neighbours in $D_1$ and $D_2$ and Eq. (\ref{eee4}) implies that every qubit $j$ on which the symmetry acts trivial has to have an even number of neighbours in $D_1$ and $D_2$. Let us now consider the product of the canonical generators  $\prod_{j\in D_1\cup D_2}S_{(j)}$ corresponding to the qubits in $D_1$ and $D_2$. As $S_{(j)}=X_j\bigotimes_{k\in N_j}Z_k$ the number of $Z$ operators acting on qubit $j$ in this product is determined by the number of neighbours of qubit $j$ in $D_1$ and $D_2$. Combining this with the considerations from above we conclude that $\prod_{j\in D_1\cup D_2}S_{(j)}=(-1)^k\bigotimes_j\sigma_j\otimes \one$ for some $k\in\{0,1\}$ and thus $(-1)^k\bigotimes_j\sigma_j\otimes \one\in\mathcal{S}_G$. 
\end{proof}

Let us summarize the results of the previous sections (see also Figure \ref{sum proof}). Any local unitary symmetry of a graph state $\ket{G}$ has to be of the form specified in Theorem \ref{generalform}. Combining this with Lemma \ref{LC3oneLC3all} we find that there are two types of graph states, those which do not have LC symmetries of order $3$, forming the set $T$ (Eq. (\ref{setT})), and those which do have LC symmetries of order $3$ ($\neg T$). 

The set $T$ contains all graph states with no additonal symmetries, i.e. with $U_G=\mathcal{S}_G$. Furthermore, graph states in $T$ can have additional continuous symmetries iff the corresponding graph has a leaf up to local complementation (Theorem \ref{thm leaf}, Observation \ref{char leaf}, Corollary \ref{cor2}). Note that this continuous symmetry also includes some Clifford symmetries. Up to multiplication by leaf symmetries any other local symmetry of a graph state in $T$ which is not an element of the stabilizer is (up to multiplication by elements of $\mathcal{S}_G$) a $2^k$th root of an LC symmetry of order $4$ of the state up to local Pauli operators (Theorem \ref{thm leaf}, Corollary \ref{cor2} and Corollary \ref{cor4}). Corollary \ref{bounds} shows that there can be only finitely many of these additional symmetries for any graph state. Their form is characterized by Theorem \ref{thm leaf}.

Any graph state in $\neg T$ has an LC symmetry of order $3$. According to Lemma \ref{LC3oneLC3all} any graph state with this property can only have LC symmetries. Combing this with the fact that for any symmetry of a graph state in $T$ there exists an LC symmetry of order $4$ of the corresponding graph (Corollary \ref{cor4}) we conclude that in order to find all graph states with additional symmetries we have to find those which admit nontrivial LC symmetries. In Theorem \ref{conditionlc3} and \ref{conditionlc4} we present sets of equations for the adjacency matrix of a graph to find all LC symmetries of order $3$ and $4$ (up to local Pauli operators) of the corresponding graph state. Let us remark that it is possible that for some specific graph states an LC symmetry of order $3$ is a product of two LC symmetries of order $4$ of the same graph. Furthermore, Theorem \ref{conditionlc3} and \ref{conditionlc4} show that in order to find all LC symmetries of a graph state it is sufficient to check square roots of elements of the stabilizer up to signs (or products of them in the case of LC symmetries of order $3$). In case of LC symmetries of order $3$ there exists a choice of the signs in the exponent such that no additional Pauli operators are needed. In case of LC symmetries of order $4$ additional Pauli operators have to be taken into account.

The results presented here lead to an algorithm to find all local (unitary) symmetries of a graph state. We present this algorithm in Sec. \ref{alg all sym}.

\tikzstyle{central} = [ draw,fill=blue!20, 
    text width=17em, text badly centered, node distance=3cm, inner sep=5pt]
    \tikzstyle{central1} = [ draw,fill=blue!20, 
    text width=4.5em, text badly centered, node distance=3cm, inner sep=5pt, minimum height=3.7cm]
       \tikzstyle{blank} = [ fill=blue!00, 
    text width=6em, text badly centered, node distance=3cm, inner sep=0pt]
        \tikzstyle{set1} = [ draw,fill=red!30, 
    text width=4.5em, text badly centered, node distance=3cm, inner sep=5pt,rounded corners]
    \tikzstyle{set} = [rectangle,draw, fill=red!30, 
    text width=8em, text badly centered,rounded corners, node distance=3cm, inner sep=5pt]
    \tikzstyle{dummy} = [ fill=red!00, 
    text width=0.1em, text badly centered, node distance=3cm, inner sep=1pt]
\tikzstyle{line} = [draw, -latex']

\pgfdeclarelayer{bg}
\pgfsetlayers{bg,main}

\begin{figure}[h!]
\centering
\begin{tikzpicture}[node distance = 2cm, auto]

\node[central] at (0,-1.75)(1){\scriptsize general form of symmetries (Theorem \ref{generalform}) $U_j\propto \left\{\begin{matrix}C_j \\\sigma_1^j\exp\left(i\alpha_j\sigma_2^j\right)
\end{matrix}\right.$ };

\node[set]  at (0,0)(2){\scriptsize$\{\ket{G}\}$};

\node[set] at (2,-3.5)(3){\scriptsize$\ket{G}\in\neg T$ (Eq. \ref{setT})};

\node[set] at (-2,-3.5)(4){\scriptsize$\ket{G}\in T$ (Eq. \ref{setT})};

\node[central1] at (-3,-6)(5){\scriptsize discrete symmetries $\alpha_j=\frac{m_j\pi}{2^{n_j}}$ (Theorem \ref{thm leaf}, Corollary \ref{cor2})};

\node[central1]at (-1,-6)(6){\scriptsize continuous symmetry: leaf (Corollary \ref{cor2}, Observation \ref{char leaf}) $\rightarrow$ $GL$ symmetries};

\node[set1]at (1,-5)(7){\scriptsize $\ket{G}$ with $U_G\subseteq \mathcal{C}_n$ (Lemma \ref{LC3oneLC3all})};

\node[set1]at (3,-5)(8){\scriptsize $\ket{G}$ with $U_G\not \subseteq \mathcal{C}_n$ excluded (Lemma \ref{LC3oneLC3all})};

\node[central]at (0,-10)(9){\scriptsize LC symmetries (Theorem \ref{conditionlc3} and Theorem \ref{conditionlc4})};

\node[blank]at (-2,-9)(10){\scriptsize Corollary \ref{cor4}};

\begin{pgfonlayer}{bg}
         \path [line] (2)  -- (3);
        \path [line] (2)  -- (4);
        \path [line] (3)  -| (7);
        \path [line] (3)  -| (8);
        \path [line, dashed] (2)  -- (1);
        \path [line, dashed] (4)  -- (5);
        \path [line, dashed] (4)  -- (6);
        \path [line, dashed] (7)  -- (9);
        \path [line, dashed] (5)  -- (10);
        \path [line, dashed] (6)  -- (10);
        \path [line, dashed] (10)  -- (9);
    \end{pgfonlayer}
\end{tikzpicture}
\caption{Summary of the main steps of the characterization of the local invertible symmetry group $U_G$ for arbitrary graph states. Red boxes denote sets of graph states and blue boxes contain results on the local symmetries of these states\label{sum proof}.}
\end{figure}
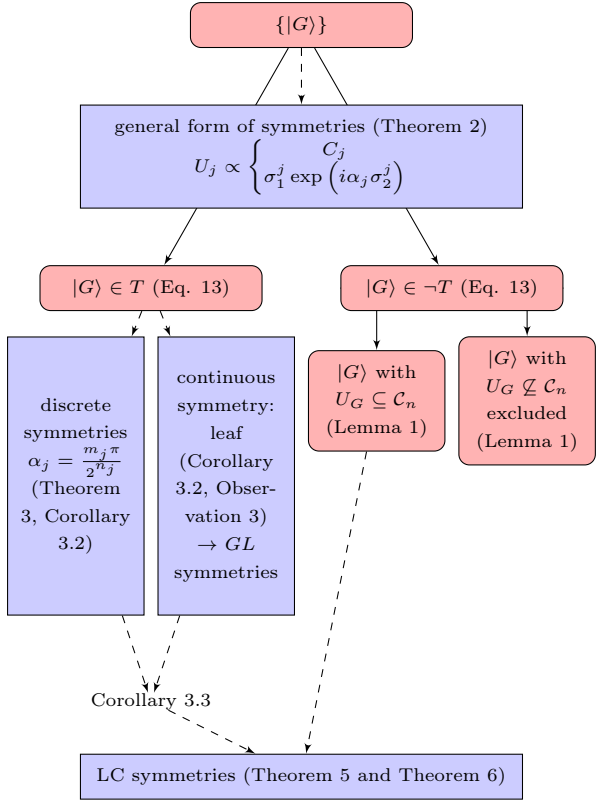

\subsection{Non--unitary symmetries\label{non-unitary symmetries}}

First note that there only exist symmetries in $GL$ iff there is a (at least) 1--parameter family of local unitary symmetries, as shown in \cite{NW17}. As shown above, this is only the case if the graph contains a leaf up to local complementation. In case vertex $l$ is a leaf and vertex $p$ is its parent the local symmetries in $GL$ are given by $\exp(i\alpha Z_p) \otimes \exp(-i\alpha X_l)\otimes \one$ with $\alpha\in \mathbb{C}$. The symmetries for twin vertices and connected twins follow analogously.

Let us stress here that not only invertible local symmetries play a role in the study of separable maps transforming one state into the other. In fact, one also needs to consider local projectors, which annihilate the initial state \cite{HeMa19, GoGi19cor}. In the following we present a general recipe to construct some of these projectors for stabilizer states. 

\begin{lemma} 
Let $ \ket{\psi}$ be a stabilizer state and let $S\in \mathcal{S}_G$ with $S=S_1\otimes S_2 \otimes \ldots \otimes S_k\otimes \one$ and $S_j\not\propto\one$ for all $j\in\{1,\ldots k\}$ be an element of the stabilizer. Then, for\bea
Q_{S}^f=\bigotimes_{j=1}^k(\one+(-1)^{f(j)}S_j)
\eea
it holds that 
\bea
Q_{S}^f\ket{G}=0
\eea
for all $f:\{1,\ldots,k\}\rightarrow \{0,1\}$ such that $\sum_{j=1}^k f(j)$ is odd.
\label{annihilate}
\end{lemma}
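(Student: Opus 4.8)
The plan is to exploit the defining property of a stabilizer element, namely that $S\ket{G}=\ket{G}$, together with the fact that the tensor factors $S_j$ of a Pauli operator are involutions ($S_j^2=\one$) acting on distinct qubits and hence pairwise commuting. First I would expand the product $Q_S^f=\bigotimes_{j=1}^k(\one+(-1)^{f(j)}S_j)$ as a sum over subsets $T\subseteq\{1,\ldots,k\}$, obtaining
\begin{equation}
Q_S^f=\sum_{T\subseteq\{1,\ldots,k\}}(-1)^{\sum_{j\in T}f(j)}\;\Bigl(\bigotimes_{j\in T}S_j\Bigr)\otimes\one,
\end{equation}
where for each $T$ the operator $\bigotimes_{j\in T}S_j$ acts as $S_j$ on qubits in $T$ and as $\one$ elsewhere. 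The key observation is that the full product $\bigotimes_{j=1}^k S_j\otimes\one$ is exactly $S$ itself (since $S$ acts trivially on qubits $k+1,\ldots,n$), and more generally, if I pull $S$ out as a common factor, each term indexed by $T$ becomes $S$ times $\bigotimes_{j\notin T}S_j\otimes\one$ up to the sign $(-1)^{\sum_{j\in T}f(j)}$, using $S_j^2=\one$.

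The cleaner route, which I would actually carry out, is to act directly on $\ket{G}$ and pair up terms. Apply $Q_S^f$ to $\ket{G}$ and use $S\ket{G}=\ket{G}$: for the subset $T$ and its complement-type partner I want to show cancellation. Concretely, multiply $Q_S^f$ on the left by $S=\bigotimes_{j=1}^k S_j\otimes\one$; since $S\ket{G}=\ket{G}$ we have $Q_S^f\ket{G}=Q_S^f S\ket{G}$, and because $S$ commutes with $Q_S^f$ (both are built from the commuting involutions $S_j$) and $S\bigl(\bigotimes_{j\in T}S_j\otimes\one\bigr)=\bigotimes_{j\notin T}S_j\otimes\one$ within the first $k$ factors, the substitution $T\mapsto T^c:=\{1,\ldots,k\}\setminus T$ maps the term with sign $(-1)^{\sum_{j\in T}f(j)}$ to the term $\bigotimes_{j\in T^c}S_j\otimes\one$ acting on $\ket{G}$, carrying sign $(-1)^{\sum_{j\in T}f(j)}=(-1)^{\sum_j f(j)-\sum_{j\in T^c}f(j)}=(-1)^{\sum_j f(j)}(-1)^{\sum_{j\in T^c}f(j)}$. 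When $\sum_{j=1}^k f(j)$ is odd this equals $-(-1)^{\sum_{j\in T^c}f(j)}$, i.e. the coefficient of the $T$-term equals minus the coefficient that $\bigotimes_{j\in T^c}S_j\otimes\one$ already carries in $Q_S^f$. Hence $Q_S^f\ket{G}=-Q_S^f\ket{G}$, which over $\C$ forces $Q_S^f\ket{G}=0$.

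I expect the only delicate point to be bookkeeping the signs under the involution $T\leftrightarrow T^c$ on the index set and making sure the pairing is a genuine fixed-point-free involution on the $2^k$ terms so that everything cancels in pairs — this works precisely because $k\ge1$ (which holds since $S\not\propto\one$) and the parity hypothesis is what makes paired coefficients opposite. No case is exceptional: even the fixed point $T=T^c$ cannot occur for a set, so the pairing is perfect. Everything else ($S_j^2=\one$, $[S_i,S_j]=0$, $S\ket{G}=\ket{G}$) is immediate from the stabilizer formalism recalled in Section~\ref{Graphs and graph states}.
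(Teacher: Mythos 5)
Your argument is correct, but it takes a different route from the paper's. The paper proves the lemma by diagonalizing: it observes that $Q_S^f\propto U\ket{f(1),\ldots,f(k)}\bra{f(1),\ldots,f(k)}U^\dagger$ for a local unitary $U$ bringing each $S_j$ to $Z$, writes $\ket{G}\bra{G}\propto\prod_i(\one+g_i)$ with $g_1=S$, and identifies $\ker(\one+S)$ as the span of the odd-parity rotated basis states. You instead establish the operator identity $Q_S^f\,S=(-1)^{\sum_j f(j)}Q_S^f$ by expanding over subsets and using $S_j^2=\one$, and then conclude from $S\ket{G}=\ket{G}$; this is more elementary and self-contained (no diagonalization, no kernel analysis), and it makes transparent that the range of $Q_S^f$ lies in the $(-1)^{\sum_j f(j)}$-eigenspace of $S$. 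In fact your subset bookkeeping collapses to the one-line factor-wise identity $(\one+(-1)^{f(j)}S_j)S_j=(-1)^{f(j)}(\one+(-1)^{f(j)}S_j)$, so the digression about the pairing $T\leftrightarrow T^c$ being a fixed-point-free involution is unnecessary — the cancellation is already forced by $Q_S^f\ket{G}=-Q_S^f\ket{G}$ regardless. What the paper's approach buys is the explicit recognition of $Q_S^f$ as a rank-one product projector, which is reused when these operators appear as Kraus operators in Sec.~\ref{sep trafo sec}. One point worth making explicit in your write-up: the factors $S_j$ should be chosen Hermitian with $S_j^2=\one$, which is always possible since $S^2=\one$ for any element of a stabilizer not containing $-\one$.
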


\begin{proof}
Let $f$ be an arbitrary function $f:\{1,\ldots, k\}\rightarrow \{0,1\}$ with $\sum_{j=1}^kf(j)$ odd. First observe that $\bigotimes_{j=1}^k(\one+(-1)^{f(j)}S_j)\propto U\ket{f(1),\ldots,f(k)}\bra{f(1),\ldots,f(k)}U^\dagger$ where $U=\bigotimes _j U_j$ is such that $S_j=U_j(\ket{0}\bra{0}-\ket{1}\bra{1})U_j^\dagger$ for all $j\in\{1,\ldots,k\}$. To show the claim it is sufficient to show that $U\ket{f(1),\ldots,f(n)}$ for $f$ chosen as above is in the kernel of $\rho=\ket{G}\bra{G}$. We use that $\rho\propto\prod_{i=1}^n(\one+g_i)$ where $\{g_i\}_{i=1}^n$ is a set of generators for $\mathcal{S}_G$. We choose this set such that $g_1=S$. As the operators $\one+g_i$ commute, we find that the kernel of $\rho$ is given by $\text{span}\{\text{ker}(\one + g_i)\}_i$. Observe that $\text{ker}(\one + g_1)=\text{span}[\{U\ket{f(1),\ldots,f(k)}|f:\{1,\ldots,k\}\rightarrow \{0,1\},\sum_{j=1}^kf(j)\ \text{odd}\}]$. This shows the claim.
\end{proof} 

\subsection{Algorithm to determine all symmetries of a graph state\label{alg all sym}}
We present here an algorithm to find all local (unitary) symmetries of a graph state (see also Figure \ref{algorithm}). Recall that the only non unitary, invertible symmetries stem from a complexification of leaf symmetries.

Let $\ket{G}$ be a graph state and let $\theta$ be the corresponding adjacency matrix. The first step of the algorithm is to determine all LC symmetries of $\ket{G}$ (step $(1)$ in Figure \ref{algorithm}). Use Theorem \ref{conditionlc3} and \ref{conditionlc4} to determine the symmetries up to local Pauli operators. Recall that for any $P\in\mathcal{P}_n$ we have that $P\ket{G}\propto Z^{\vec{k}}\ket{G}$. Thus, to determine the exact expression for an LC symmetry $U_{(j)}$, i.e. to determine the additional Pauli operators, find the vector $\vec{k}_{(j)}$ such that $Z^{\vec{k}_{(j)}}U_{(j)}\ket{G}\propto \ket{G}$ by going through all possibilities. All LC symmetries of the graph are then given by the group $U_{LC}=\left<\{Z^{\vec{k}_{(j)}}U_{(j)}\}\cup \mathcal{S}_G\right>$.

If $U_{LC}$ contains an LC of order $3$ then we know by Lemma \ref{LC3oneLC3all} that $\ket{G}$ only has LC symmetries and thus $U_G=U_{LC}$ (step $(2)$ in Figure \ref{algorithm}). In case the graph state does not possess an LC symmetry of order $3$, i.e. $\ket{G}\in T$, we again distinguish two cases (step $(3)$ in Figure \ref{algorithm}). If $\ket{G}$ does not have an LC symmetry of order $4$ then Corollary \ref{cor4} implies that $U_G=\mathcal{S}_G$ and the state does not have any additional symmetries. In case the state has LC symmetries of order $4$, check if all of these LC symmetries are generated by leaf symmetries (step $(4)$ in Algorithm \ref{algorithm}). To do so, find all leafs, parents, twins and connected twins of the graph using the adjacency matrix $\theta$ \footnote{Equivalently, one could compute all two qubit reduced states $\rho_{ij}$ for $\left|G\right >$. It holds that $\rho_{ij}\not\propto \one$ iff $i$ and $j$ are leaf and parent up to local complementation.}. If all LC symmetries of the graph stem from leaf symmetries then the graph has no other additional symmetries except for the leaf symmetries (Corollary \ref{cor4}).

\tikzstyle{decision} = [diamond, draw, fill=red!30, 
    text width=4.5em, text badly centered, node distance=3cm, inner sep=0pt]
\tikzstyle{block} = [rectangle, draw, fill=blue!30, 
    text width=5em, text centered, rounded corners, minimum height=4em]
   \tikzstyle{title} = [rectangle, draw, fill=green!30, 
    text width=20em, text centered, rounded corners, minimum height=4em]
\tikzstyle{line} = [draw, -latex']

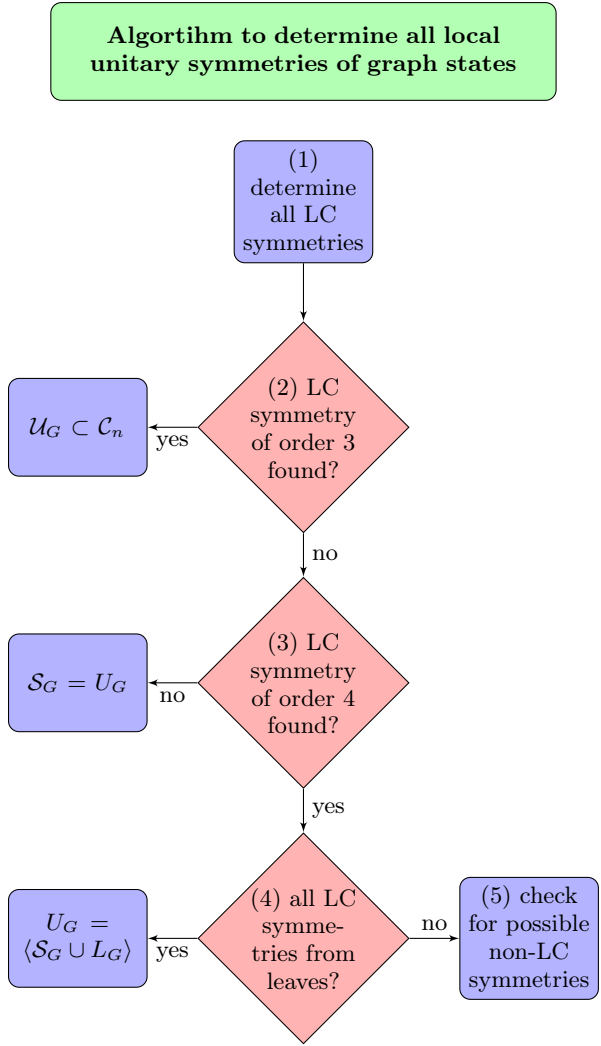
\begin{figure}[h!]
\centering
\begin{tikzpicture}[node distance = 2cm, auto]
	\node[title] (ti){\bf Algortihm to determine all local unitary symmetries of graph states};
    \node [block, below of=ti] (LC) {(1) determine all LC symmetries };
    \node [decision, below of=LC] (decision1) {(2) LC symmetry of order $3$ found?};
    \node [block, left of=decision1, node distance=3cm] (end1) {$\mathcal{U}_G\subset \mathcal{C}_n$};

    \node [decision, below of=decision1, node distance=3.4cm] (decision2) {(3) LC symmetry of order $4$ found?};
    \node [block, left of=decision2, node distance=3cm] (end2) {$\mathcal{S}_G=U_G$};
      \node [decision, below of=decision2, node distance=3.4cm] (decision3) {(4) all LC symmetries from leaves?};
        \node [block, left of=decision3, node distance =3cm] (end3) {$U_G=\left<\mathcal{S}_G\cup L_G\right>$};
            \node [block, right of=decision3, node distance=3cm] (end4) {(5) check for possible non-LC symmetries};
    \path [line] (LC) -- (decision1);
    \path [line] (decision1)  -- node {yes} (end1);
    \path [line] (decision1) -- node {no} (decision2);
    \path [line] (decision2)  -- node {no} (end2);
    \path [line] (decision2) -- node {yes} (decision3);
    \path [line] (decision3) -- node {yes} (end3);
    \path [line] (decision3) -- node {no} (end4);
\end{tikzpicture}
\caption{Algortihm to determine all local unitary symmetries of graph states. The details of how to determine the symmetries in step $(5)$ are provided in the main text. Any symmetry of this type has to be a $2^m$--th root of an LC symmetry of order $4$ (up to multiplication by an element of $\mathcal{S}_G$)\label{algorithm}.}
\end{figure}

Conversely, if not all LC symmetries of order $4$ of the state $\ket{G}$ stem from leaf symmetries, the state can have additional non--LC symmetries (step $(5)$ in Figure \ref{algorithm}). These are of the form $\sigma_1^j\exp(i\alpha_j\sigma_2^j)$ with $\alpha_j=m_j\pi/2^{n_j}$ for $m_j,n_j\in \mathbb{N}$ and wlog $|m_j|\le 2^{n_j}$ (Corollary \ref{cor4}). To determine those symmetries we determine for the equivalence class $W\in U_G/L_G$ of each of those symmetries the representative $V\in W$ constructed in the proof of Corollary \ref{cor2}. This representative $V$ with $V_j\propto \sigma_1^j\exp(i\alpha_j\sigma_2^j)$ for all $j\in\{1,\ldots,n\}$ has several important properties. For all qubits $j$ we have that $\alpha_j=m_j\pi/2^{n_j}$ for $m_j,n_j\in \mathbb{N}$ and wlog $|m_j|\le 2^{n_j}$. By Corollary \ref{bounds} for any qubit $j$ not equivalent to a leaf under local complementation the number of possible values for $n_j$ and thus for $m_j$ is finite. Furthermore, it follows from the considerations in the proof of Corollary \ref{cor2} that even if qubit $j$ is a leaf under local complementation, for this specific representative $V$ the variable $n_j$ also satisfies the bounds specified in Corollary \ref{bounds}.

Thus, in order to find the representative $V$ of any equivalence class $W$, we only have to check a finite number of possible configurations. A systematic way to go through all possibilities is the following. Let us define the set $K\subseteq \{1,\ldots,n\}$ such that for every group of leafs and parent (under local complementation) the set $K$ contains exactly one of these qubits. Furthermore $K$ contains all qubits that do not correspond to a leaf under local complementation. Recall that due to $\ket{G}\in T$ we have that for any qubit $j$ there can only exist one operator $\sigma_2^j$ such that $U_j\propto\sigma_1^j\exp(i\alpha_j\sigma_2^j)$ is a tensor factor of a symmetry (see Sec. \ref{Symmetries of stabilizer states}). Furthermore, as we have already determined the LC symmetries of $\ket{G}$ some of the operators $\sigma_2^j$ are already fixed. Thus, any representative $V$ can be written as
\bea
V=Z^{\vec{k}}\bigotimes_{j\in K}e^{i\alpha_j\sigma_2^j}\label{eq:representative}
\eea
where $\sigma_2^j\in\{X,Y,Z\}$ and for some $j$ these operators are fixed by an LC symmetry of $\ket{G}$ and $\alpha_j=m_j\pi/2^{n_j}$ with $m_j,n_j\in\mathbb{N}$ for all $j\in K$. Furthermore, $n_j$ satisfies the bounds from Corollary \ref{bounds} for all $j$ with $\alpha_j\neq k\pi/2$, $k\in\mathbb{Z}$ and wlog $|m_j|<2^{n_j}$.

Let us mention two observations which can be used to compute the symmetries more efficiently than applying each candidate $V$ of Eq. (\ref{eq:representative}) to $\ket{G}$. First for sufficiently large $l$ the operator $V^l$ (and any operator $(SV)^l$ where $S\in\mathcal{S}_G$) is an LC symmetry of $\ket{G}$ which were already determined at the beginning of the algorithm. All $V$ which do not satisfy this condition can be excluded. Second, there is a systematic way to determine $\sigma_2^j$ (including $\one$) for any qubit $j$. Let $D\subset\{1,\ldots,n\}$ be the subset of qubits for which $\sigma_2^j$ has already been determined. We choose one qubit $l\in D$ and use local complementation as outlined in the proof of Theorem \ref{thm leaf} to transform $\sigma_2^l$ into $Z$. Hence, the resulting symmetry equation reads
\bea
\begin{split}
&Z^{\vec{k}}(e^{i\alpha_l Z_l}\otimes \bigotimes_{m\in D\setminus \{l\}}e^{i\alpha_m \sigma_2^m}\otimes\\ & \bigotimes_{m\not\in D}e^{i\alpha_m \sigma_2^m}\ket{G'})\propto\ket{G'},\end{split}\label{eq:symeq}
\eea
where we used that $P\ket{G}\propto Z^{\vec{k}}\ket{G}$ for some $\vec{k}$ and any $P\in\mathcal{P}_n$ and $\ket{G'}$ denotes the graph state after local complementation. Projecting now qubit $l$ onto the state $\ket{0}$ reduces Eq. (\ref{eq:symeq}) to a similar equation for $n-1$ qubits. This is due to the fact that any graph state can be written as $\ket{G}=\ket{0}_l\ket{\tilde{G}}+\bigotimes_{k\in N_l}Z_k\ket{1}_l\ket{\tilde{G}}$, where $\ket{\tilde{G}}$ is a graph state of $n-1$ qubits. Repeating this step for each of the qubits in $D$ leads to the equation 
\bea
Z^{\vec{k}'}\bigotimes_{m\not\in D}e^{i\alpha_m \sigma_2^m}\ket{G''}\propto\ket{G''}
\eea
for some graph state $\ket{G''}$. Note that the symmetries of $\ket{G''}$ coincide with the previously undetermined part of the potential symmetries of $\ket{G}$ up to LC operators which resulted from the local complementation (and are known). As local complementation preserves the order of LC operators of order $4$ it only remains to determine the LC symmetries of order $4$ of $\ket{G''}$ which leads to some $\sigma_2^m$ for some $m\not\in D$. In case the operator $\sigma_2^m$ is determined for all $m$ the whole symmetry can be easily computed. Otherwise the last step in the algorithm is repeated.

\section{Examples\label{examples}}
In this section we present some examples for graph states with additional symmetries i.e. with $\mathcal{S}_G\subsetneq U_G$. These examples illustrate the variety of possible symmetries and are meant to give an overview of the structures local unitary symmetries can take. We focus here on discrete symmetries as leaves and the structures related to leaves under local complementation have already be discussed in Sec. \ref{not and pre}. Table \ref{tablegraphs} in Appendix \ref{tablegraphssection} gives a set of generators for the additional discrete symmetries of the respective graph state.

Graph a) has an LC symmetry of order $3$ which is a product of its LC symmetries of order $4$. In contrast to that, graph $b)$ only allows for LC symmetries of order $3$. Thus, the existence of LC symmtries of order $3$ does not imply the existence of LC symmetries of order $4$. Graph $c)$ and $d)$ are logical states of two instances of the quantum Reed Muller codes \cite{AMS99}. These codes admit a nontrivial diagonal transversal gate (see section \ref{Local symmetries and transversal gates}). The graph states corresponding to the logical $\ket{0}_L$ (and $\ket{1}_L$) of these codes have an additional discrete symmetry, a transversal $T$ gate. Note that graph $d)$ admits, besides several LC symmetries of order $4$, a local symmetry $U$ (root of one of the LC symmetries of order $4$) that is no LC symmetry and the graph does not contain a leaf under local complementation. For this symmetry we have that $\alpha_j=\pi/8$ for all $j\in\{1,\ldots,n\}$ in agreement with Theorem \ref{thm leaf}.

Furthermore, observe that for graphs $c)$ and $d)$ we have that $\sigma_2^j\in\{X,Z\}$ for all $j$. However, it is also possible to have symmetries where $\sigma_2^j=X$ holds for all $j$. Examples for this case are graphs $e)$ and $f)$. As pointed out by \cite{RoGo19}, any graph consisting of two copies of a complete graph of an even number of vertices with edges between corresponding vertices gives an example for such a state. Note that in graph $f)$ vertices $2$ and $3$ are twins and thus the graph also has a leaf symmetry. It is also possible to have $\sigma_2^j=Y$ for all $j$ and graph $a)$ is an example for that. Finally, let us mention, that $Z^{\otimes n}$ can never be an element of $\mathcal{S}_G$ for any graph $G$. Hence, there exists no symmetry with $\sigma_2^j=Z$ for all $j$ (Theorem \ref{conditionlc4}).
\begin{figure}
\begin{minipage}{0.23\textwidth}
\includegraphics[width=\textwidth]{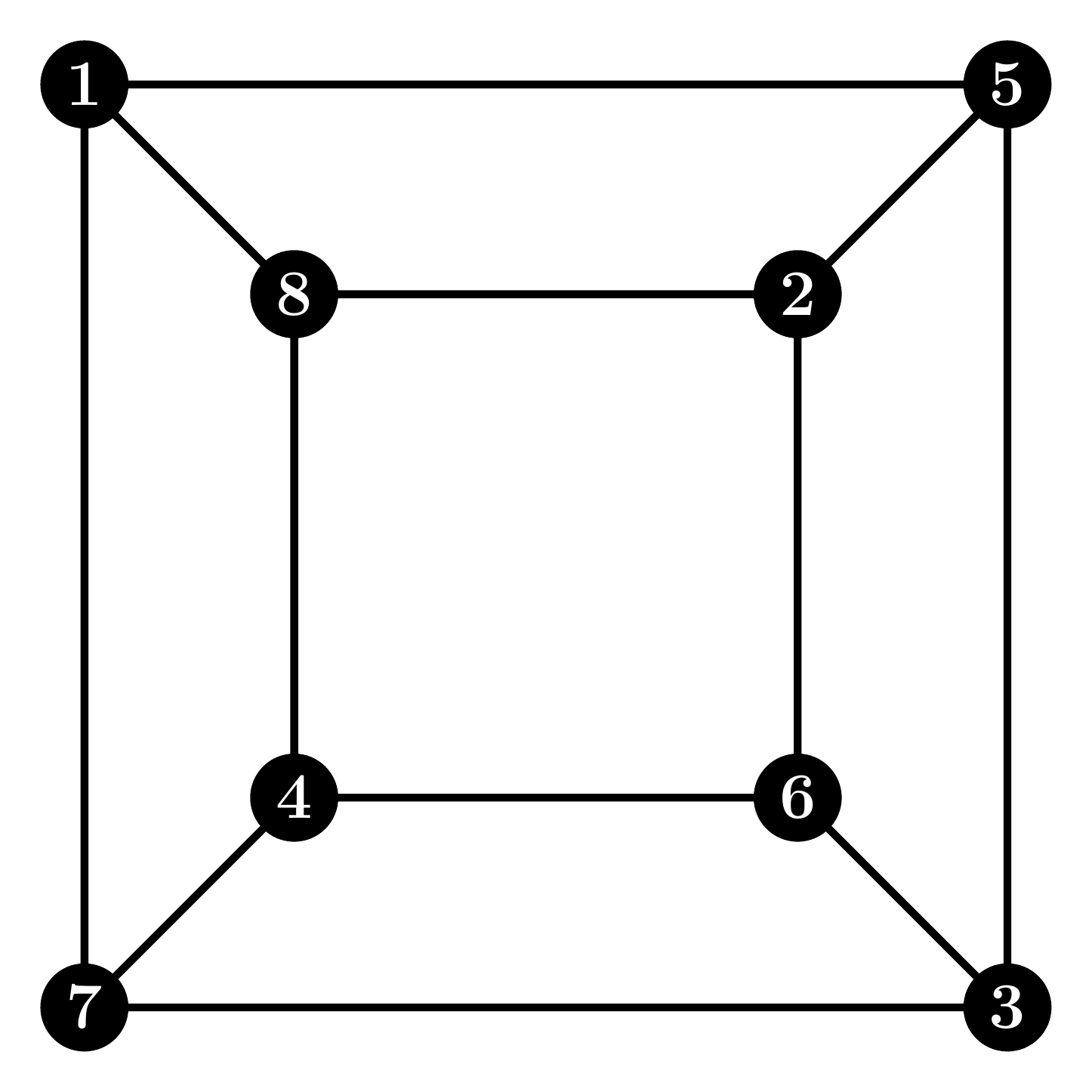}
a)
\end{minipage}
\hfill
\begin{minipage}{0.23\textwidth}
\includegraphics[width=\textwidth]{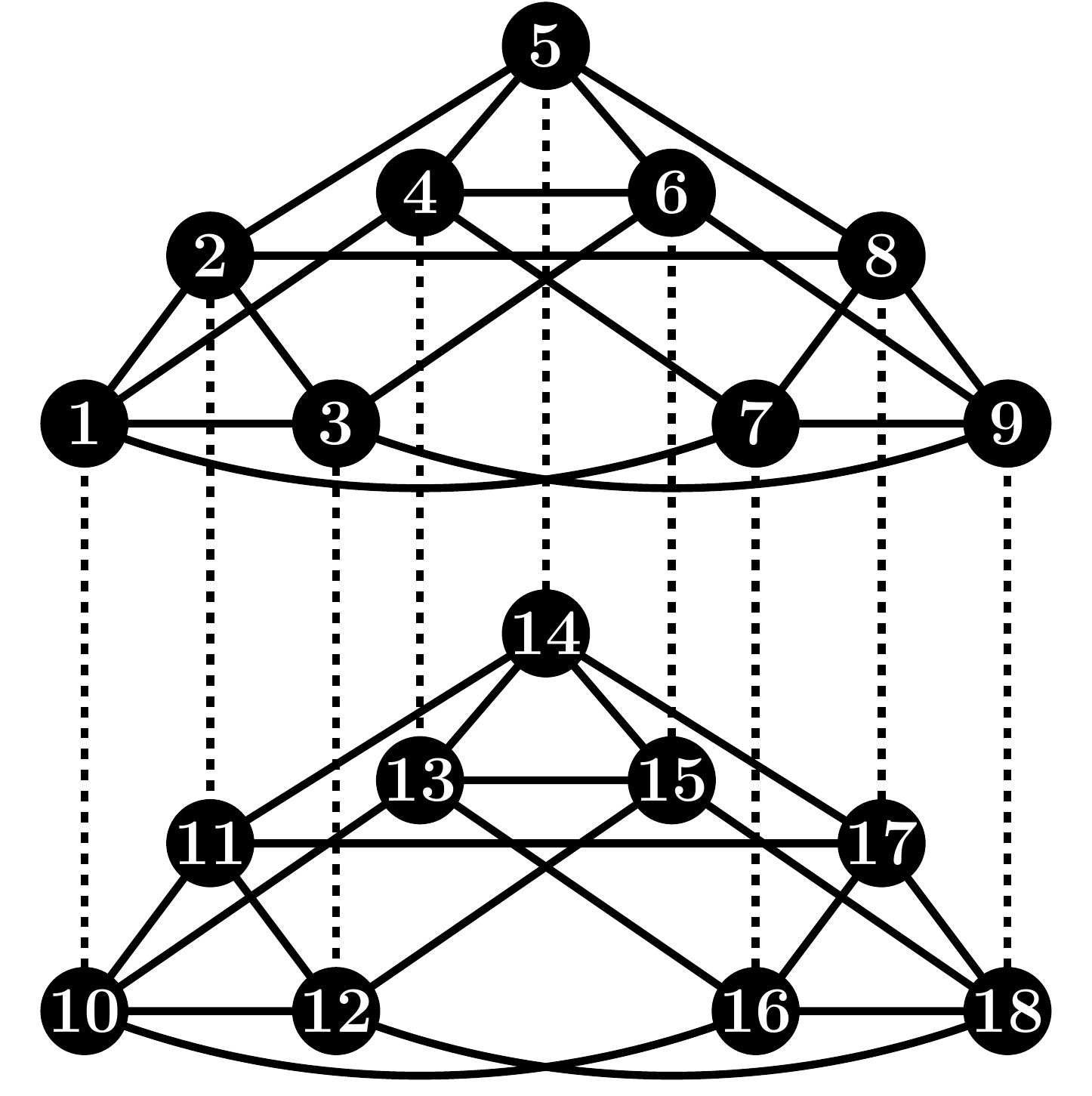}
b)
\end{minipage}
\begin{minipage}{0.23\textwidth}
\includegraphics[width=\textwidth]{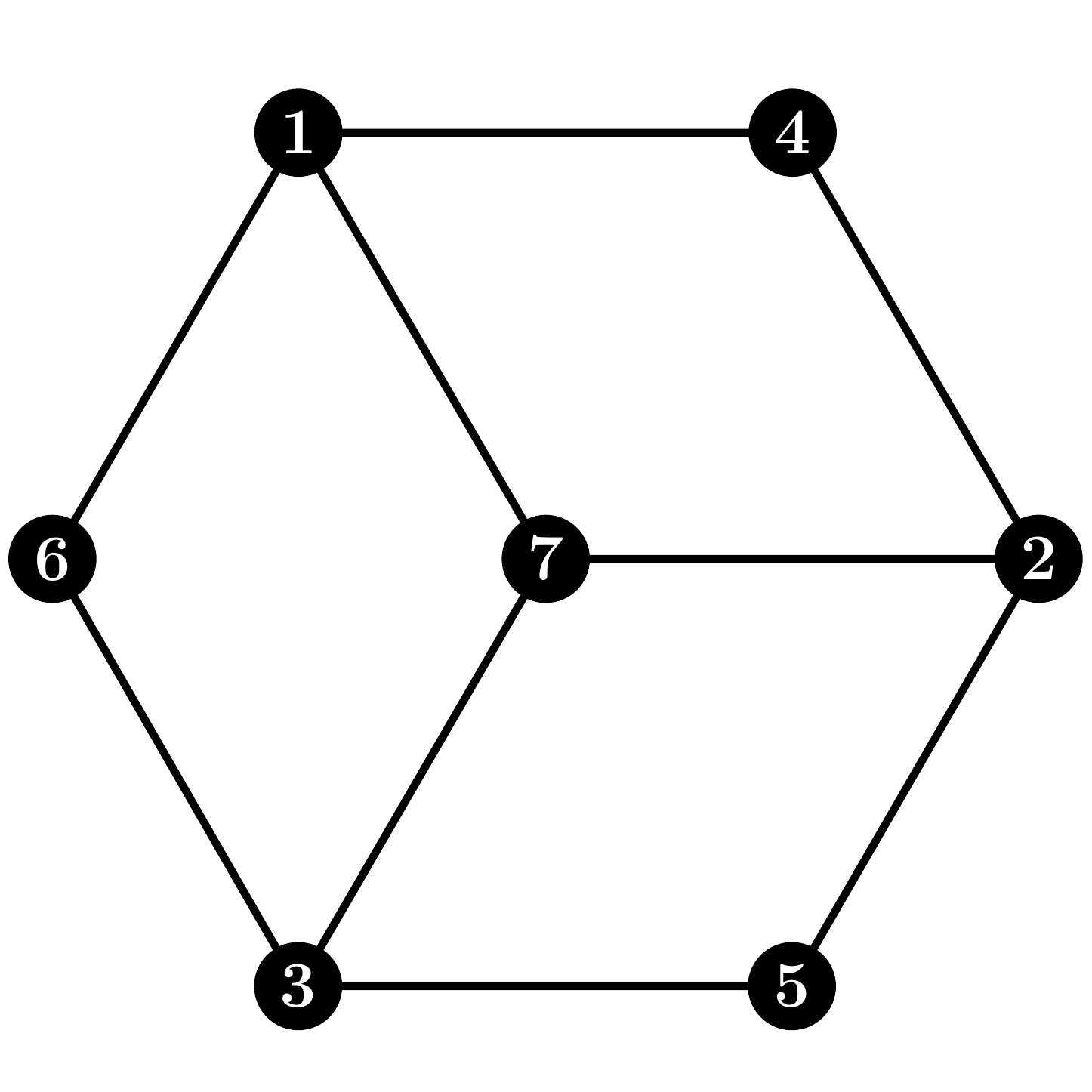}
c)
\end{minipage}
\hfill
\begin{minipage}{0.23\textwidth}
\includegraphics[width=\textwidth]{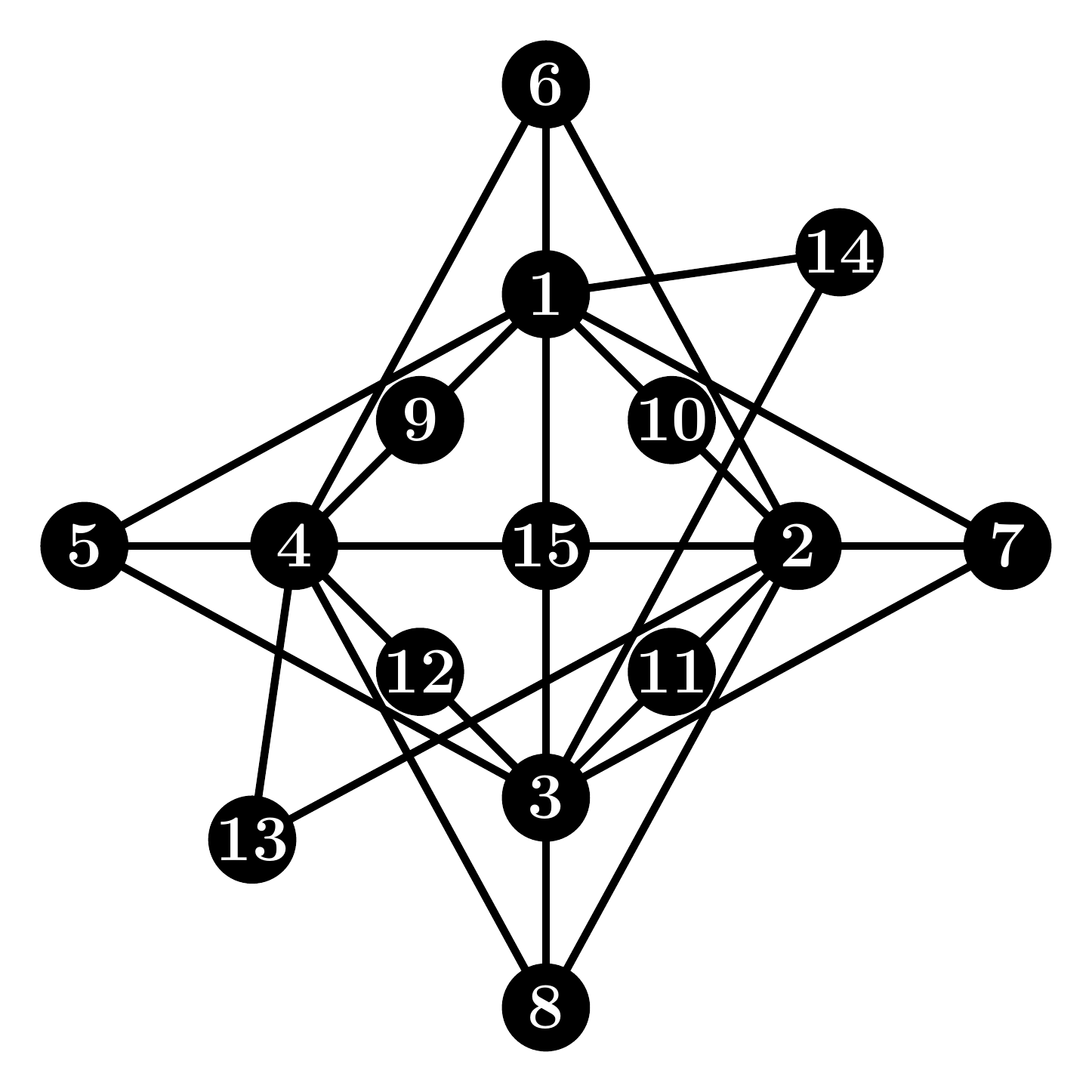}
d)
\end{minipage}
\begin{minipage}{0.23\textwidth}
\includegraphics[width=\textwidth]{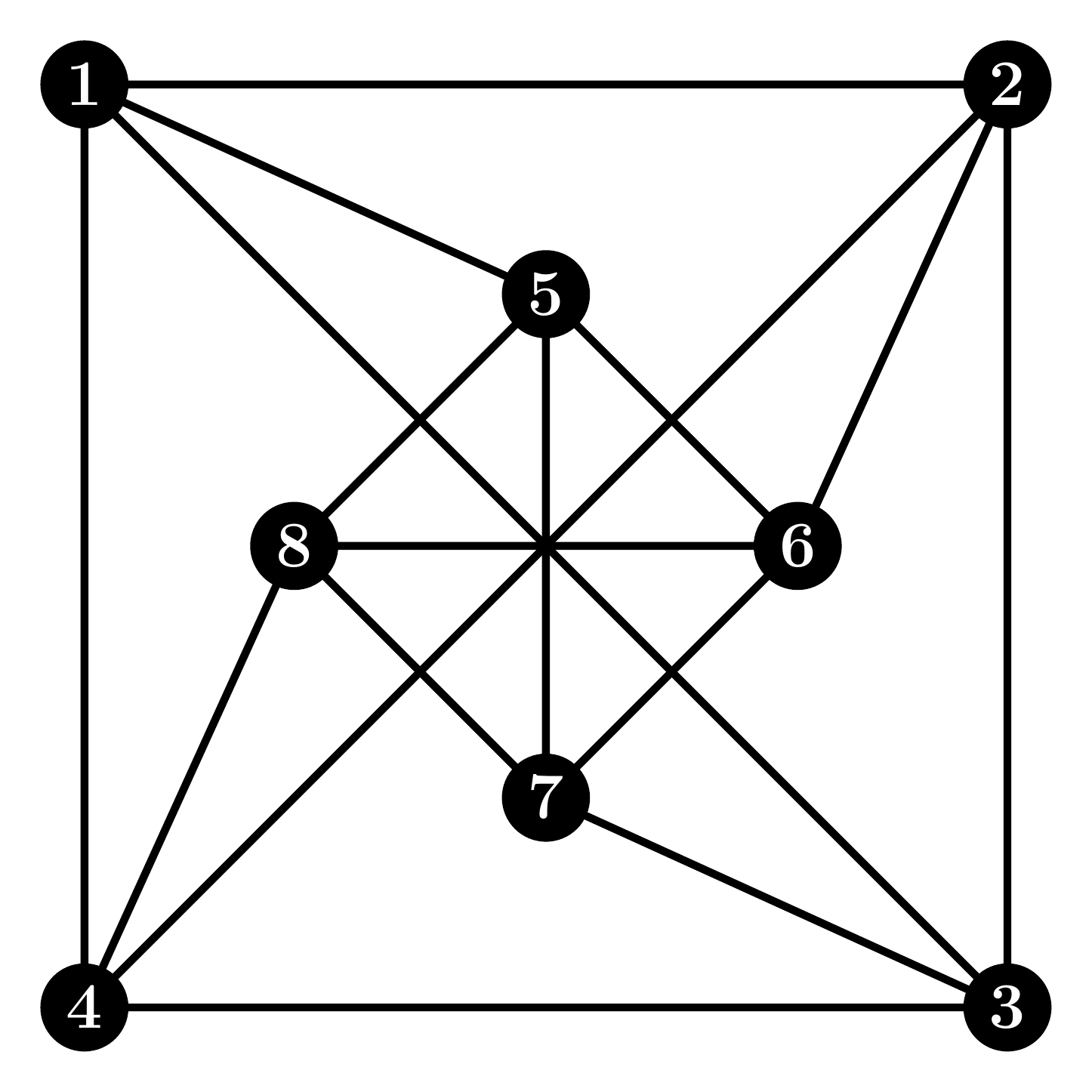}
e)
\end{minipage}
\begin{minipage}{0.23\textwidth}
\includegraphics[width=\textwidth]{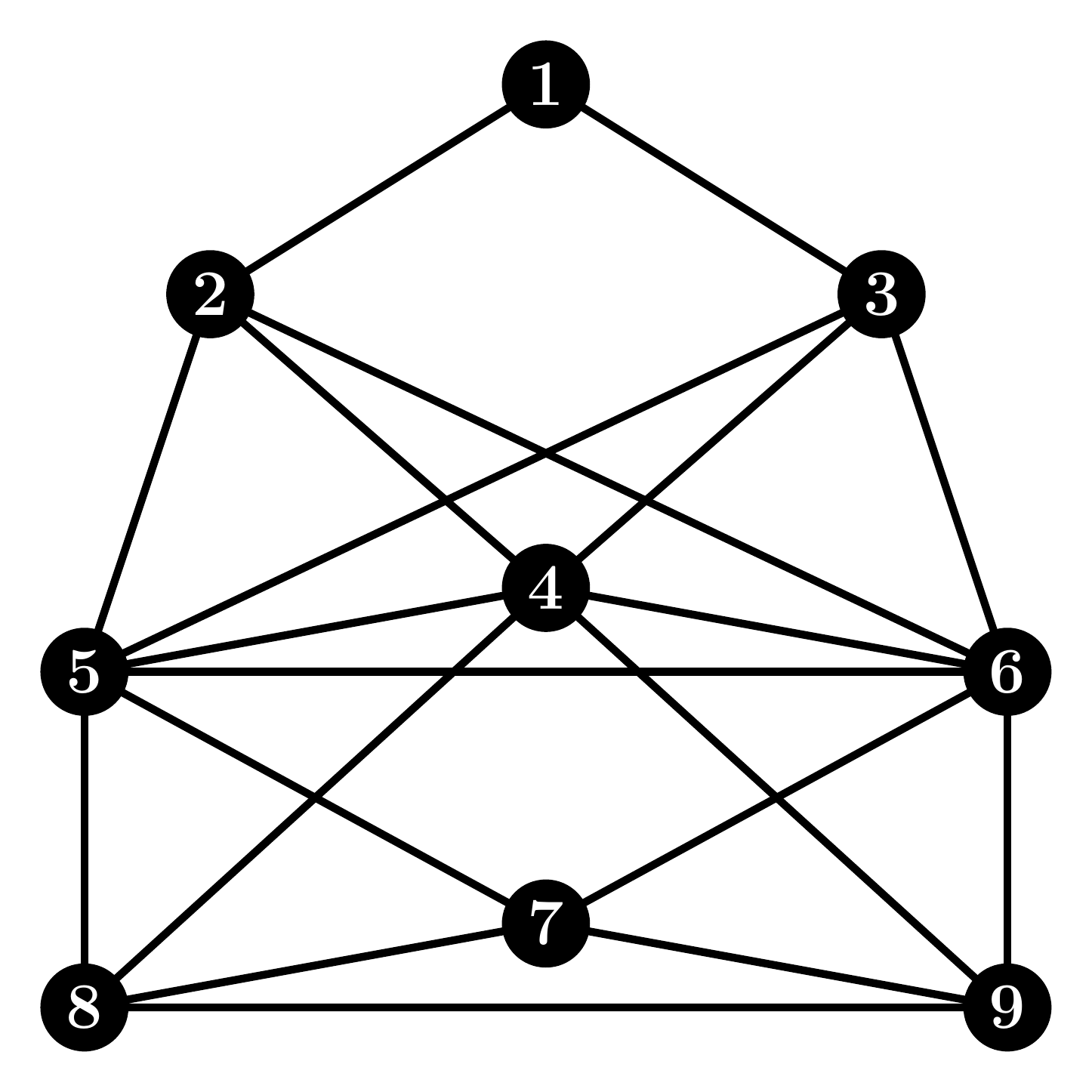}
f)
\end{minipage}
\caption{Examples for graphs corresponding to graph states with additional discrete symmetries (see main text and Table \ref{tablegraphs}). Note that graph $f)$ also has an additional continuous symmetry on the twin vertices $2$ and $3$.}
\label{graphs}
\end{figure}

\section{Applications\label{applications}}

In this section we discuss several applications of the additional symmetries of stabilizers states. We first show that they find applications in fault tolerant quantum computing \cite{GoDa97,CE17} and then study their relevance in entanglement theory.

\subsection{Local symmetries and transversal gates\label{Local symmetries and transversal gates}}
The goal of quantum error correction is to protect a logical qubit from errors by introducing redundancy and storing it using several physical qubits. To use the logical qubits for computations one needs to be able to perform gates on the logical level. For any nontrivial quantum error correcting code any logical gate will act on at least two physical qubits nontrivially. If such a logical gate acts non-locally it can cause the errors to spread on the physical qubits and thus eventually cause a logical error. Thus, it is desirable to have codes that admit at least some number of logical gates that are local operations on the physical level \cite{CE17}. In the context of quantum error correction and fault-tolerant quantum computing such gates are called transversal gates. Note that it has been shown that there exists no nontrivial code that admits a complete set of transversal gates \cite{CX08,BE09}.

There is a close connection between transversal gates for stabilizer codes and local symmetries of stabilizer states. In particular, if a stabilizer code encoding one logical qubit has a diagonal traversal gate $T$, i.e. $T\ket{0}_L\propto \ket{0}_L$ and $T\ket{1}_L\propto \ket{1}_L$, then $T$ is a local symmetry of $\ket{0}_L$. In turn, if we start from a graph state $\ket{G}$ with an additional symmetry $U\in U_G$, $U\not\in \mathcal{S}_G$, $U\not \in \mathcal{C}_n^3$ then we can construct a stabilizer code with a transversal diagonal gate $U$ as we show subsequently \footnote{In case $\alpha_0=m_0\pi$ the operator $U$ is the logical identity for the stabilizer code constructed in the following.}. Let us consider a symmetry of $\ket{G}$ such that 
\bea
U\ket{G}=\bigotimes_j \sigma_1^j e^{i\alpha_j \sigma_2^j}\ket{G}=e^{i\alpha_0}\ket{G}
\eea
with $\alpha_0\neq m_0\pi$, $m_0\in \mathbb{Z}$. The stabilizer code then consists of the logical states $\ket{0}_L=\ket{G}$ and $\ket{1}_L=P\ket{G}$, where $P\in\mathcal{P}_n$ is chosen as follows. For those $k\in\{1,\ldots n\}$ for which $\sigma_1^k=\one$ we choose $P_k$ such that $[P_k,\sigma_2^k]\neq 0$ \footnote{Note that we always find such a party if we consider instead of $U$ a symmetry $SU$ for a suitable $S\in\mathcal{S}_G$.}. Note that this condition is fulfilled by two Pauli operators for any party $k$ and thus we can ensure that $\left <0|1\right>_L=0$. This is due to the fact that $\bra{G}P\ket{G}=0$ for any $P$ not proportional to an element of $\mathcal{S}_G$. For those $k\in\{1,\ldots n\}$ for which $\sigma_1^k\neq\one$ we choose $P_k =\one$. As $U\ket{1}_L=\pm PU^\dagger\ket{0}_L=\pm e^{-i\alpha_0}P\ket{0}_L=\pm e^{-i\alpha_0}\ket{1}_L$ we have

\begin{align}
U\ket{0}_L&=e^{i\alpha_0} \ket{0}_L\\
U\ket{1}_L&=\pm e^{-i\alpha_0} \ket{1}_L.
\end{align}

Thus, $U$ is a transversal gate for the constructed stabilizer code. Using this relation between local symmetries of stabilizer states and transversal gates of stabilizer codes, we see that Theorem \ref{thm leaf} also follows from the results on transversal gates of stabilizer codes in \cite{AJ16}.

One can easily construct new stabilizer codes with transversal gates from known ones as follows. Let $\ket{G}$ be the graph state presented in Fig. \ref{graphs}$d)$. Attaching the qubit $n+1$ to any qubit with $\sigma_2^j=Z$ leads to a new graph state with the same symmetry on the original subgraph (the first $n$ qubits). Defining the stabilizer code as explained above leads to a code with a transversal gate of the form $U\otimes \one$. Note that this construction does not work for a qubit $j$ with $\sigma_2^j=X,Y$.

\subsection{Separable transformations \label{sep trafo sec}}
As mentioned in the introduction, entanglement is a resource under local operations and classical communication (LOCC). If a state $\ket{\psi}$ can be deterministically transformed into a state $\ket{\phi}$ via LOCC then $\ket{\psi}$ is at least as entangled as $\ket{\phi}$ with respect to any entanglement measure. Thus, LOCC introduces a partial order on the Hilbert space and characterizing possible LOCC transformations is crucial for identifying states with useful entanglement properties. However, due to the intricate structure of LOCC with possibly infinitely many rounds of classical communication, sometimes the larger set of separable transformations is considered, which has a simpler mathematical description. In \cite{HeMa19} we showed that in order to decide whether a separable transformation among two fully entangled pure states is possible it is not sufficient to consider local invertible Kraus operators. We call the latter set of operations SEP1 in the following \cite{GoGi11,HeMa19}. In \cite{HeMa19} we used the results presented here to construct the first example of a state transformation which is possible via SEP, but not via SEP1. In the following we present a general construction how to find examples of transformations among fully entangled pure states which are possible via SEP but not via SEP1.

Consider a graph state $\ket{G}$ with no additional symmetries, i.e. $U_G=\mathcal{S}_G$. Note that any state which does not solve Eq. (\ref{eq:lc31}) or Eq. (\ref{eq:lc41}) to (\ref{eq:lc43}) has this property. In \cite{GoGi11} it was shown that a state $g \ket{\psi}$ can be transformed into a state $h \ket{\psi}$ via SEP1 with $g=g_1\otimes \ldots\otimes g_n$ and $h=h_1\otimes \ldots\otimes h_n$ invertible if and only if there exist symmetries $U_{(k)}\in U_\psi$ and probabilities $p_k\ge 0$, $\sum_k p_k=1$ such that\bea
\sum_kp_k(U_{(k)})^\dagger H U_{(k)}=\frac{||g\ket{\psi}||}{||h\ket{\psi}||}G,\label{eq:sep}
\eea where $H=h^\dagger h$ and $G=g^\dagger g$. We consider a transformation from $\ket{G}$ to a state $h\ket{G}$, i.e. $G=\one$. Using that $\mathcal{S}_G$ is abelian, we find that Eq. (\ref{eq:sep}) is fulfilled only if $\tr(H S)=0$ for any $S\in\mathcal{S}_G\setminus\{\one\}$ \cite{RaBr17}. Thus, choosing $H=\one\otimes\bigotimes_{k\in \text{supp}(S_{(j)})} (\one+a (S_{(j)})_k)$ for some $a\in (0,1)$ and some canonical generator $S_{(j)}$ the transformation is not possible as $\tr(HS_{(j)})\neq 0$.

Let us now construct a SEP transformation to transform $\ket{G}$ into $h\ket{G}$. Recall that in Lemma \ref{annihilate} we construct for a given stabilizer state local operators $Q_S^{f}$, based on elements $S$ of its stabilizer, that annihilate the state. Note that $f$ has to satisfy $\sum_{m\in \text{supp}(S)}f(m)\text{ mod } 2=1$ and thus there are $2^{|\text{supp}(S)|-1}$ different functions $f$ which we label by $f_k$. Using projectors $Q_{S_{(j)}}^{f_k}$ and $n_j=|\text{supp}(S_{(j)})|$, the Kraus operators for the separable map are \bea
M_k&=&\sqrt{\frac{a^{n_j}/2^{2n_j-1}}{(1+a^{n_j})(1+a)^{q_k}(1-a)^{n_j-q_k}}} h Q_{S_{(j)}}^{f_k}
\eea
for $k\in\{1,\ldots,2^{n_j-1}\}$, where $q_k=|\{j|f_k(j)=0\}|$, and 
\bea
M_k&=&\frac{1}{\sqrt{2^{n_j-1}(1+a^{n_j})}} h P_{(k)}
\eea 
for $k\in\{2^{n_j-1}+1,\ldots,2^{n_j} \}$, where $P_{(k)}$ denotes all elements from the group $\left<\{S_{(l)}|l\in N_j\}\right>$ for all $k$ and $S_{(j)}$ is the canonical generator corresponding to qubit $j$. Note that, since the stabilizer is abelian, the subgroup $\left<\{S_{(l)}|l\in N_j\}\right>$ has exactly $2^{n_j-1}$ different elements. It is straightforward to verify the completeness relations $\sum_k M_k^\dagger M_k=\one$ and that the separable map corresponding to these Kraus operators implements indeed the transformation.

\subsection{LOCC$_\N$ transformations}
In this section we show that additional local symmetries for graph states, i.e. those not contained in the stabilizer, allow for finite round LOCC transformations (LOCC$_\N$) which are not possible if only stabilizer symmetries are utilized. 

Let us first recall the necessary and sufficient condition for reachability of a state via LOCC$_\N$ \cite{SpCo17}. Let $\ket{\psi}$ be a state with a finite, unitary symmetry group $U_\psi$. Then, a state $h\ket{\psi}$ in its SLOCC class is reachable via LOCC$_\mathbb{N}$ iff there exists a $U\in U_\psi$ such that (up to permutations of the qubits)
\begin{align}
[H_1,U_1]&\neq 0\label{loccn1}\\
[H_j,U_j]&=0\ \forall j\in\{2,\ldots,n\}\label{loccn2}.
\end{align}
The states from which $h\ket{\psi}$ is reachable are given by $g\ket{\psi}$, where $G=g^\dagger g$ is such that $G_1=pH_1+(1-p)U_1^{\dagger} H_1 U_1$ for some $p\in(0,1)$ and $G_j=H_j$ for all $j\ge 2$. Hence, we see that if a stabilizer state has a local symmetry which is diagonal in a different basis than any of the elements of its stabilizer new states are reachable. Moreover, if if this local symmetry is diagonal in the same basis as an element of the stabilizer the reachable states stay the same but more transformations become possible. Suppose a graph state has an LC symmetry of order $3$ which, up to conjugation by local Cliffords, is of the form $\exp(i\pi/4X)\exp(i\pi/4 Y)$ on every qubit. Then, for instance, any state $h\ket{G}$ with $H_1$ not diagonal in the eigenbasis of $\exp(i\pi/4X)\exp(i\pi/4 Y)$ and $H_j$ diagonal in this basis for $j\ge 2$ is reachable as Eq. (\ref{loccn1}) and (\ref{loccn2}) are satisfied by this additional symmetry. However, there does not exist any Pauli operator in $\mathcal{S}_G$ which is diagonal in the eigenbasis of $\exp(i\pi/4X)\exp(i\pi/4 Y)$ for all but one qubit \footnote{This holds as we only consider fully connected graphs and thus there exists no stabilizer element which acts nontrivial on only one qubit. Furthermore, conjugation by local Clifford operators does not change the support of an operator.}. A similar construction also works for the case of LC symmetries of order $4$. However, the tensor product of the operators in the exponent of each factor is an element of the stabilizer for these symmetries (Theorem \ref{conditionlc4}). Thus, if every local operator of the symmetry is of the form $U_j\propto\exp(\pm i\pi/4 \sigma_2^j)$ this symmetry has the same commutation properties as the corresponding stabilizer and by itself does not allow to reach more states than the stabilizer. Nevertheless, since $U$ is not an element of the stabilizer, new transformations are possible, as there exist more states $g\ket{\psi}$ which can reach $h\ket{\psi}$ by LOCC$_\mathbb{N}$. Furthermore, in this case new reachable states can be found when considering products of the LC symmetry of order $4$ and elements of the stabilizer. We conclude that graph states (and thus stabilizer states) with additional (discrete) symmetries can be more powerful regarding LOCC$_\mathbb{N}$ transformations than graph states with $\mathcal{S}_G=U_G$ \footnote{This is decided by the specific form of the additional symmetries. The statement holds for any example presented in Sec. \ref{examples}.}.

The volume of the set of states reachable by an initial state via LOCC is an entanglement measure, called accessible entanglement \cite{ScKa15}. To compute this entanglement measure for a state one has to determine all states reachable from this state by LOCC (not just LOOC$_\mathbb{N}$ as considered above). A common approach to this problem is to use SEP convertibility as necessary condition in order to gain insights on the LOCC convertibility of a state. However, as Eq. (\ref{eq:sep}) is not necessary for SEP convertibility and graph states were used to show this (see discussion above and \cite{HeMa19}), new methods to determine all possible LOCC transformations might need to be developed.

\section{Conclusion and Outlook\label{Conclusion and Outlook}}
In this work we have investigated the local (invertible and non-invertible) symmetries of fully entangled stabilizer states. We have characterized all local invertible symmetries of stabilizer states and have provided an algorithm which determines them. To this end we have used that stabilizer states are critical states and thus local, non-unitary symmetries are determined by the local unitary ones. Furthermore, every stabilizer state is LC equivalent to a graph state, and thus, in order to determine all LC symmetries of stabilizer states, it is sufficient to consider graph states. 

We have shown that there are two different types of graph states, those which possess an LC symmetry of order $3$ and those which do not have such a symmetry. The symmetry group $U_G$ of the first type of graph state has to be a subgroup of the local Clifford group $\mathcal{C}_n$ and is therefore discrete and finite. Graph states of the second type have a continuous unitary symmetry if and only if the corresponding graph has a leaf (up to local complementation). Note that this is the only case in which a graph state also has a symmetry in $GL$. Any other symmetry a graph state of the second type can have is of the form $U_j\propto \sigma_1^j\exp(i\alpha_j\sigma_2^j)$ where $\alpha=m_j\pi /2^{n_j}$, $m_j,n_j\in \mathbb{N}$, $\sigma_1^j\in \mathcal{P}_1$ and $\sigma_2^j\in\{X,Y,Z\}$ for all $j$. Moreover, the number $n_j$ is bounded by the number of neighbours of qubit $j$ (or a neighbour of qubit $j$) and thus a graph state can only have finitely many additional discrete symmetries. We have shown that any of these additional discrete symmetries is the $2^k$--th root of a LC symmetry of order $4$ (up to multiplication with an element of $\mathcal{S}_G$). Combing this with the results on graph states with LC symmetries of order $3$ we have concluded that any graph state with an additional local symmetry has an LC symmetry. Furthermore, we have provided necessary and sufficient conditions on the adjacency matrix of a graph for the existence of LC symmetries.

We have discussed applications of the results in fault-tolerant quantum computing and entanglement theory. In particular, we have provided a general construction for SEP transformations among pure states that are not possible with solely Kraus operators of full rank. Furthermore, the relevance of these additional symmetries for transformations using LOCC$_\mathbb{N}$ has been demonstrated. In the future it will be interesting to identify new, more practical applications of stabilizer states, which are based on the additional symmetries \cite{RaRo19}. Similar investigations of more general states such as LME states \cite{LME} and, in particular, hypergraph states \cite{Hyper} and higher dimensional stabilizer states might shine new light on their entanglement properties and applications thereof.

We acknowledge preliminary research of D. Sauerwein and R. Brieger \cite{RaBr17}, especially their results on the transformations via SEP1 used in Sec. \ref{sep trafo sec}. We acknowledge financial support from the Austrian Science Fund (FWF) grant DK-ALM: W1259-N27 and the SFB BeyondC  (Grant  No.   F7107-N38) and of the Austrian Academy of Sciences via the Innovation Fund “Research, Science and Society”.

\appendix

\section{Semi Clifford operators \label{appendixa}}
As mentioned in the main text, a unitary operator $U$ is called semi Clifford operator if it maps at least one Pauli operator to a Pauli operator (up to a phase). Let us show that any such operator is of the form
\bea 
U\propto C e^{i \alpha \sigma},
\eea
where $C\in \mathcal{C}_1$, $\alpha \in \R$ and $\sigma\in \{X, Y, Z\}$. 

\begin{proof}
There exists a pair of Pauli operators $\sigma_k,\sigma_l$ such that
\bea
U\sigma_kU^\dagger\propto\sigma_l.\label{sym equ}
\eea
Since $U$ is unitary, we can write it in its Euler decomposition. If $\sigma_k=\sigma_l$ we parametrize the operator as $U\propto e^{i\alpha\sigma_k}e^{i\beta\sigma_j}e^{i\gamma\sigma_k}$ where $j\neq k$ and $\alpha,\beta,\gamma\in \R$. Using this decomposition in Eq. (\ref{sym equ}) together with the commutation relations of the Pauli operators we find that $e^{2i\beta\sigma_j}\propto\one$ and thus $2\beta=n\pi$, $n\in\mathbb{Z}$. Then, $U\propto \pm i\sigma_je^{i(\gamma-\alpha)\sigma_k}$ and, thus, $U$ is of the form claimed above. If  instead $\sigma_k\neq \sigma_l$ we parametrize the operator as $U\propto e^{i\alpha\sigma_k}e^{i\beta\sigma_l}e^{i\gamma\sigma_k}$ where $\alpha,\beta,\gamma\in \R$. Using this decomposition again in Eq. (\ref{sym equ}) we find $e^{2i\beta \sigma_l}\sigma_k\propto e^{-2i\alpha \sigma_k}\sigma_l$. Multiplying this equation with $\sigma_k$ ($\sigma_l$) and computing the trace we find that $2\alpha=\frac{\pi}{2}+n\pi$ and $2\beta=\frac{\pi}{2}+m\pi$ with $m,n\in\mathbb{Z}$. Hence, $U$ is again of the form claimed above. 
\end{proof}

\section{Proof to Theorem \ref{generalform}\label{appendixb}}

Here we prove Theorem \ref{generalform} of the main text. It restricts the general form a local unitary symmetry of a stabilizer state. In order to improve readability let us restate the theorem.\newline

\noindent {\bf Theorem 2.}  \textit{Let $\ket{\psi}\in\left(\mathbb{C}^2\right)^{\otimes n}$ be a fully entangled stabilizer state and let $U\in U_\psi$ be a local symmetry of $\ket{\psi}$. Then \begin{equation}
U_j\propto \left\{\begin{matrix}C_j \\\sigma_1^j\exp\left(i\alpha_j\sigma_2^j\right)
\end{matrix}\right.
\end{equation}
\indent with $C_j\in\mathcal{C}_1^3$, $\alpha_j\in\mathbb{R}$, $\sigma_1^j\in\{\one,X,Y,Z\}$ and $\sigma_2^j\in\{X,Y,Z\}$ for all $j\in\{1,\ldots,n\}$.
}

\begin{proof}
Let $\ket{\psi}\in\left(\mathbb{C}^2\right)^{\otimes n}$ be a stabilizer state and $U\in U_\psi$ be a local symmetry of $\ket{\psi}$. By Theorem \ref{theorem semi clifford} each local factor $U_j$, $j\in\{1,\ldots,n\}$ is a semi Clifford operator and thus by Eq. (\ref{eqn: form semi clifford}) it is of the form $U_j\propto C e^{i\alpha\sigma_a}$ for some $C\in\mathcal{C}_1$ \footnote{The local operators can be different. However, as the proof holds for any local factor we will not write the index whenever it does not lead to any confusion.}. As $U_\psi$ forms a group also $U^2\in U_\psi$ has to be a symmetry of $\ket{\psi}$ and thus also $U_j^2$ has to be a semi Clifford operator. We have\bea
U_j^2\propto  C e^{i\alpha\sigma_a} C e^{i\alpha\sigma_a}=C^2e^{\pm i\alpha\sigma_b} e^{i\alpha\sigma_a},
\eea 

where we used that $C$ is a Clifford operator and $C^\dagger \sigma_a C=\pm\sigma_b$. Thus, $U_j^2$ is a semi Clifford operator iff $e^{\pm i\alpha\sigma_b} e^{i\alpha\sigma_a}$ is a semi Clifford, i.e. iff there are Pauli operators $\sigma_c,\sigma_d\in\{X,Y,Z\}$ such that\bea
e^{\pm i\alpha\sigma_b} e^{i\alpha\sigma_a}\sigma_c(e^{\pm i\alpha\sigma_b} e^{i\alpha\sigma_a})^\dagger=\pm \sigma_d
\eea
Let us now analyze all possibilities for the Pauli operators occuring in this equation. If $\sigma_a=\sigma_c=\sigma_b$ then $C^\dagger \sigma_a C=\pm\sigma_a$ and thus $C=\sigma_ee^{i\beta\sigma_a}$, where $\sigma_e\in\{X,Y,Z\}$ and $\beta\in\{0,\pm\pi/4\}$. Then, $U_j$ is of the form claimed in the theorem. If $\sigma_a=\sigma_c\neq \sigma_b$ then we find $e^{\pm 2i\alpha\sigma_b}\sigma_c=\pm\sigma_d$ and thus $\alpha=k\pi/4$, $k\in\mathbbm{Z}$. This implies that $U_j\in\mathcal{C}_1$. Finally, let us consider the case $\sigma_a\neq\sigma_c$. If additionally $\sigma_d=\sigma_b$ we have $e^{ 2i\alpha\sigma_a}\sigma_c=\pm\sigma_d$ and thus $\alpha=k\pi/4$, $k\in\mathbbm{Z}$. In case $\sigma_d\neq\sigma_b$ we have $e^{ 2i\alpha\sigma_a}\sigma_c=\pm e^{\mp 2i\alpha\sigma_b} \sigma_d$. Multiplying both sides of the latter equation with $\sigma_c$ and computing the trace we find that if $2\alpha\neq k\pi/4$ then $\sigma_c=\sigma_d$ and consequently $\sigma_a=\sigma_b$ has to hold. If $2\alpha= k\pi/2$ we have that $U_j\in \mathcal{C}_1$. If $2\alpha=\pi/4+k\pi/2$ we again multiply both sides of the equation with $\sigma_c$ and compute the trace. The resulting equation allows for two different solutions. The first one is that $\sigma_c=\sigma_d$ and thus $\sigma_a=\sigma_b$. The other possiblity is that $\sigma_c\propto \sigma_b\sigma_d$ and consequently $\sigma_a\sigma_c\propto\sigma_d$. However, this implies that $\sigma_c\propto \sigma_a\sigma_b\sigma_c$ and thus $\sigma_a=\sigma_b$. Thus, in all of these cases we find $U_j\in \mathcal{C}_1$ (by the same argument as above), which completes the proof.
\end{proof}

\section{Proof Lemma \ref{LC3oneLC3all}\label{AB}}
We provide here a proof for Lemma \ref{LC3oneLC3all}. In order to improve readability let us restate the lemma.\newline

\noindent {\bf Lemma 1.}  \textit{Let $\ket{G}$ be a graph state on $n$ qubits and let $U\in U_G$ be such that $U_1\in\mathcal{C}_1^3$. Then $U_j\in \mathcal{C}_1^3$ for all $j\in\{1,\ldots,n\}$ and any other symmetry of the graph is a local Clifford, $U_G\subset \mathcal{C}_n$.}

\begin{proof}
We prove the statement by contradiction. Let $U\in U_G$ be such that $U_1\in\mathcal{C}_1^3$. Suppose there exists a $p\in\{2,\ldots,n\}$ such that $U_p\not\in \mathcal{C}_1^3$. We first show that this implies that there exists a symmetry $V\in U_G$ such that $V_p\propto\one$ and $V_1\in \mathcal{C}_1^3$ which, as we shown then, leads to a contradiction. For any $j\in\{2,\ldots, n\}$ for which $U_j\not\in \mathcal{C}_1^3$ we know due to Theorem \ref{generalform} that $U_j\propto \sigma_1^j\exp(i\alpha_j\sigma_2^j)$. Wlog let the qubits be ordered such that $U_j\in \mathcal{C}_1^3$ for $j\ge k+1$ for some $k\in\{1,\ldots,n-1\}$. Let us recursively define symmetries of $\ket{G}$ by $V(m+1)=(P(m)V(m))^2$, $V(1)=U$ and $m\in \N$, $m\le k$ where $P(m)\in\mathcal{S}_G$ is chosen such that $V(m+1)_m\propto\one$. Note that since the graph state is fully connected we can always find such a $P(m)$. By construction $V(k+1)_j\propto\one$ for all $j\le k$. Now observe that for any element $C\in\mathcal{C}_1^3$ also $\sigma C\in\mathcal{C}_1^3$ for all $\sigma\in \mathcal{P}_1$ and $C^2\in\mathcal{C}_1^3$. Thus, $V(k+1)_j\in\mathcal{C}_1^3$ for all $j>k$ and we choose $V=V(k+1)$. Hence, $V$ is a symmetry where each tensor factor is either in $\mathcal{C}_1^3$ or trivial.

Next let us show that $V$ has to act nontrivial on all qubits. Note that this already follows from Eq. (\ref{commute}) which implies that for a fully entangled graph state any LC symmetry of order $3$ is of the form $Q=((A,\one),(\one,A+1))$ in the binary representation and thus acts nontrivial on every qubit. Recall that $A$ is a diagonal matrix and the operator acting on qubit $j$ is given by $Q_j=((A_{jj},1),(1,A_{jj}+1)$.

In the following however, we provide a different proof which makes use of the fact that a conjugation of an element of $\mathcal{S}_G$ with $V$ has to give an element of the stabilizer again. The idea is to show that no qubit $j$ with $V_j\in\mathcal{C}_1^3$ can have a neighbour $l\in N_j$ with $V_l\propto \one$. Since the graph $G$ is fully connected this implies that $V_j\in C_1^3$ for all $j\in\{1,\ldots,n\}$. We show the claim again by contradiction. So suppose $V_j\in\mathcal{C}_1^3$ and there exists an $l\in N_j$ with $V_l\propto \one$. Wlog let us again order the qubits such that $j=1$ and $l=2$. Since $V$ is a local Clifford operator and $V\in U_G$ it has to map stabilizer operators to stabilizer operators under conjugation. Thus, we can decompose the image of a stabilizer operator under this conjugation into the canonical generators. By assumption $V_1\in\mathcal{C}_1^3$ and thus $V_1$ is a cyclic permutation of Pauli operators. The two options are $V_1ZV_1^\dagger\propto X$ and $V_1ZV_1^\dagger\propto Y$. In the first case the stabilizer $S_{(2)}=Z\otimes X\otimes \ldots$ is mapped to $S_{(2)}'=U_1S_{(2)}U_1^\dagger=X\otimes X\otimes\ldots$. Thus, the decomposition $S_{(2)}'=\prod_j S_{(j)}^{k_j}$, where $k_j\in\{0,1\}$ for all $j$, must contain the factor $S_{(1)}$. Note that $\text{supp}(S_{(2)}')=N_2\cup\{2\}$ and thus all neighbors of qubit $1$ have to be shared with qubit $2$ or be connected to neighbors of qubit $2$ (or both). Since qubit $1$ and $2$ are connected we need $S_{(2)}'$ to contain the canonical generators of an odd number of shared neighbors to get $X$ on qubit $1$ (and not $Y$). Since neighbors of $1$ not shared with $2$ are not in $\text{supp}(S_{(2)}')$ the operator $S_{(2)}'$ cannot contain contributions from their corresponding generators. Now consider $S_{(1)}=X\otimes Z\otimes\ldots$ which is mapped to $S_{(1)}'=U_1S_{(1)}U_1^\dagger=Y\otimes Z\otimes\ldots$. We have $\text{supp}(S_{(1)}')=N_1\cup\{1\}$ and thus the only neighbors of $2$ it contains are the ones which are shared with qubit $1$. Now we count the number of $Z$ operators contributed by the contained generators on qubit $2$. By the above considerations we get an odd number of $Z$ operators from the shared neighbors and one $Z$ since $S_{(1)}$ is contained in $S_{(1)}'$ and thus $(S_{(1)}')_2=Z^\text{2k}=\one$, $k\in \mathbb{N}$, which is a contradiction to the form of $S_{(1)}'$. This concludes first case. In the second case we have $U_1ZU_1^\dagger\propto Y$ and we obtain a contradiction by similar arguments.  

Finally, it remains to show that $U_G\subset \mathcal{C}_n$. This follows from Theorem \ref{generalform} and the fact that $V\in U_G$, as can be seen as follows. Suppose there was a $V'\in U_G$ and $V'\not\in \mathcal{C}_n$. Since $V'\not\in \mathcal{C}_n$ there exists a $p\in\{1,\ldots,n\}$ such that $V_p'\propto \sigma_1^p\exp(i\alpha_p\sigma_2^p)$ and $\alpha_p\neq m\pi/4$. Furthermore, also $V'V\in U_G$ but $(VV')_j$ is not of the form stated in Theorem \ref{generalform} which is a contradiction.
\end{proof}

\section{Proof to Observation \ref{char leaf}\label{appendixd}}
As stated in the main text we prove here Observation \ref{char leaf} which we restate here.\newline

\noindent {\bf Observation 2.}  \textit{Let $\ket{G}$ be a graph state on $n\ge 3$ qubits. Let qubit $1$ and $2$ be a leaf parent pair. Then
\begin{equation}
U=e^{i\alpha X}\otimes e^{-i\alpha Z}\otimes \one\ \ \ \alpha\in\mathbb{R}
\end{equation}
\indent is in $U_G$. Moreover, there exists no other unitary symmetry of the form $U_1\otimes U_2\otimes \one\in U_G$.}

\begin{proof}
Let $\ket{G}$ be a graph state and let qubit $1$ and $2$ be a leaf parent pair. Then we find that
\bea
e^{i\alpha X_1}\ket{G}&=&(\cos(\alpha)\one_1+i\sin(\alpha)X_1)\ket{G}\\
&=&(\cos(\alpha)\one+i\sin(\alpha)X_1(X_1\otimes Z_2))\ket{G}\\&=&e^{i\alpha Z_2}\ket{G}
\eea
holds for any $\alpha\in \mathbb{R}$ as $X_1\otimes Z_2\in \mathcal{S}_G$. Hence, any operator $\exp(i\alpha X_1)\otimes \exp(-i\alpha Z_1)$ is a local unitary symmetry of $\ket{G}$ which shows the first part of the claim. It remains to show that there exists no other local unitary symmetry that acts nontrivial only on qubit $1$ and $2$. Using Theorem \ref{generalform}, the fact that $U_G$ is a group and that $\exp(i\alpha X_1)\otimes \exp(-i\alpha Z_1)$ is a symmetry of $\ket{G}$ for any $\alpha\in\mathbb{R}$ we conclude that any symmetry of this type has to be of the form
\bea
\sigma_1^1e^{i\alpha_1 X_1}\otimes \sigma_1^2e^{i\alpha_2 Z_2},
\eea
where $\alpha_1,\alpha_2\in\mathbb{R}$. Let us multiply this symmetry from the right with the leaf symmetry $\exp(-i\alpha_1 X_1)\otimes \exp(i\alpha_1 Z_1)$. If $\sigma_1^2\not\propto\one$ we also multiply from the left with a $S\in\mathcal{S}_G$ such that $S_2=\sigma_2^2$. Squaring the result we obtain a new symmetry of $\ket{G}$ which is nontrivial only on qubit $2$. Observation \ref{2 factors} then implies that the symmetry has to be proportional to the identity and thus $\alpha_1+\alpha_2=k\pi/2$ for some $k\in\mathbb{Z}$. Hence, we conclude that any symmetry other than the leaf symmetry acting nontrivial only on qubit $1$ and $2$ has to be a 2-qubit Pauli operator and thus by Observation \ref{nootherpauli} an element of the stabilizer. Using that the stabilizer is abelian and $X_1\otimes Z_2\in \mathcal{S}_G$ we conclude that the only possibilities for additional symmetries are $Z_1\otimes X_2$ and $Y_1\otimes Y_2$ up to phases. Let us decompose both operators with respect to the canonical generators which gives $S_{(2)}$ and $S_{(1)}S_{(2)}$, respectively. However, as $n\ge 3$ and we consider connected graphs, qubit $2$ has to have a neighbour different from qubit $1$ and thus both operators act nontrivial on more qubits than just qubit $1$ and $2$. This completes the proof.
\end{proof}

\onecolumngrid
\section{Additional symmetries of the graph states presented in Figure \ref{graphs}\label{tablegraphssection}}
Table \ref{tablegraphs} contains a (not necessarily independent) set of generators for the additional symmetries of the graph states presented in Fig. \ref{graphs}.

\begin{table*}[h]
\centering
\begin{tabular}{||c||c|c|}\hline
state & \parbox[c][0.8 cm][c]{0.4\textwidth}{LC symmetries of order $3$}& \parbox[c][0.8 cm][c]{0.4\textwidth}{LC symmetries of order $4$ and $2^n$--th roots}
\\\hline\hline
\parbox[c][0.72 cm][c]{0.05\textwidth}{a)}
&
$(\exp(i\frac{\pi}{4}Z)\exp(i\frac{\pi}{4}Y))^{\otimes 4}\otimes(\exp(i\frac{\pi}{4}X)\exp(-i\frac{\pi}{4}Y))^{\otimes 4}$
& 
$\exp(i\frac{\pi}{4}Z)^{\otimes 4}\otimes\exp(i\frac{\pi}{4}X)^{\otimes 4}$, $\exp(i\frac{\pi}{4}Y)^{\otimes 4}\otimes\exp(-i\frac{\pi}{4}Y)^{\otimes 4}$
\\\hline
\parbox[c][0.72 cm][c]{0.05\textwidth}{b)}
&
$(\exp(-i\frac{\pi}{4}Z)\exp(-i\frac{\pi}{4}Y))^{\otimes 9}\otimes(\exp(i\frac{\pi}{4}X)\exp(-i\frac{\pi}{4}Y))^{\otimes 9}$
&
none
\\\hline
\parbox[c][0.72 cm][c]{0.05\textwidth}{c)}
&
none
&
$\exp(i\frac{\pi}{4}Z)^{\otimes 3}\otimes \exp(i\frac{\pi}{4}X)^{\otimes 4}$\\\hline
\parbox[c][0.72 cm][c]{0.05\textwidth}{d)}
&
none
&
$\exp(i\frac{\pi}{8} Z)^{\otimes 4}\otimes \exp(i\frac{\pi}{8} X)^{\otimes 11}$\\\hline
\parbox[c][0.72 cm][c]{0.05\textwidth}{e)}
&
none
&
$\exp(i\frac{\pi}{4} X)^{\otimes 4}\otimes \exp(-i\frac{\pi}{4} X)^{\otimes 4}$\\\hline
\parbox[c][0.72 cm][c]{0.05\textwidth}{f)}
&
none
&
$\one_{2,3}\otimes \bigotimes_{j\in\{1,4,5,7,9\}} \exp(i\frac{\pi}{4}X_j)\otimes \bigotimes_{j\in\{6,8\}} \exp(-i\frac{\pi}{4}X_j)$\\\hline
\end{tabular}
\caption{Additional generators needed to generate the full LU symmetry group of the graph states presented in Figure \ref{graphs}\label{tablegraphs}}
\end{table*}

\end{document}